%% file: panel_im_test.tex
\documentclass[12pt]{article}

\usepackage[reqno]{amsmath}
\usepackage{amsthm,amssymb,amsxtra,bbm,mathtools,mathrsfs}
\usepackage{booktabs,longtable,multirow,arydshln,cases,empheq}
\usepackage{enumerate,graphicx,wasysym,tikz,bm,dsfont,xcolor,float,pdflscape}
\usepackage[mathcal]{euscript}
\usepackage[longnamesfirst]{natbib}
\usepackage[margin=1in]{geometry}
\usepackage{caption,subcaption}
\usepackage[export]{adjustbox}
\usepackage{accents}

\newcommand{\dto}{\overset{d}{\to}}
\newcommand{\pto}{\overset{p}{\to}}

\newcommand{\textupandbold}[1]{\textup{\textbf{#1}}}

\theoremstyle{plain}
\newtheorem{thm}{Theorem}[section]

\newtheorem{lem}{Lemma}[section]
\newtheorem{ass}{Assumption}[section]

\theoremstyle{definition}

\newtheorem{rem}{Remark}[section]

\numberwithin{equation}{section}

\newcommand{\1}{\mathbbm{1}}
\newcommand{\E}{\mathbb{E}}
\newcommand{\N}{\mathbb{N}}
\newcommand{\R}{\mathbb{R}}

\newcommand{\calJ}{\mathcal{J}}

\newcommand{\calU}{\mathcal{U}}
\newcommand{\calX}{\mathcal{X}}
\newcommand{\calY}{\mathcal{Y}}

\newcommand{\mfa}{\mathfrak{a}}
\newcommand{\frakp}{{\mathfrak{p}}}
\newcommand{\frakt}{{\mathfrak{t}}}

\newcommand{\argmax}{\operatorname*{argmax}}

\newcommand{\cov}{\operatorname{cov}}
\newcommand{\defined}{\vcentcolon=}
\newcommand{\supp}{\operatorname{supp}}
\newcommand{\var}{\operatorname{var}}
\newcommand{\wpone}{\mathrm{w.p.1}}
\newcommand{\plim}{\operatorname*{plim}}
\newcommand{\No}{\mathrm{N}}

\newcommand{\Yibar}{\bar{Y}_i}
\newcommand{\sumi}{\sum_{i=1}^n}

\newcommand{\alphaio}{\alpha_{i0}}
\newcommand{\alphaiT}{\alpha_{i}}
\newcommand{\alphahatiT}{\hat{\alpha}_{i}}
\newcommand{\calXiT}{\calX_{i}}

\newcommand{\elliT}{\ell_{i}}

\newcommand{\elliTtheta}{\ell_{i\theta}}

\newcommand{\thetahat}{\hat{\theta}}
\newcommand{\thetahatp}{\thetahat^\frakp}
\newcommand{\thetahatstar}{\thetahat^\frakt}

\newcommand{\Xit}{X_{it}}
\newcommand{\Yit}{Y_{it}}
\newcommand{\thetabar}{\bar{\theta}}

\newcommand{\Op}{O_{\mathrm{p}}}
\newcommand{\op}{o_{\mathrm{p}}}
\newcommand{\Opstar}{O_{\mathrm{p}^*}}
\newcommand{\opstar}{o_{\mathrm{p}^*}}
\newcommand{\dtostar}{\overset{d^*}{\to}}

\newcommand{\sumt}{\sum_{t=1}^T}

\newcommand{\IB}{\mathrm{IB}}

\allowdisplaybreaks

\newcommand{\alphahat}{\hat\alpha}
\newcommand{\alphai}{\alpha_i}

\newcommand{\elliOne}{\ell_{i,1}}
\newcommand{\elliTwo}{\ell_{i,2}}
\newcommand{\elliThree}{\ell_{i,3}}
\newcommand{\ellitOne}{\ell_{it,1}}

\newcommand{\lk}{l_k}

\newcommand{\Iio}{I_{i0}}
\newcommand{\ellit}{\ell_{it}}
\newcommand{\Vhatp}{\widehat{V}^\frakp}
\newcommand{\Vhatstar}{\widehat{V}^\frakt}
\newcommand{\Vhat}{\widehat{V}}
\newcommand{\calW}{\mathcal{W}}
\newcommand{\calG}{\mathcal{G}}

\newcommand{\Whattarget}{\calW^\frakt}
\newcommand{\Whatprofile}{\calW^\frakp}

\begin{document}

\title{The Panel Information Matrix Test}
\author{Martin Schumann\\[0.5em]
{\normalsize Maastricht University}\\[0.25em]
{\normalsize \texttt{m.schumann@maastrichtuniversity.nl}}}
\date{\today}

\maketitle

\begin{abstract}
\noindent
The Information Matrix (IM) test is a natural specification check for likelihood-based models, yet cross-sectional implementations are often badly sized, confound neglected heterogeneity with distributional misspecification, and require third derivatives. We develop the panel information matrix (PIM) test for models with fixed effects. Explicit fixed effects isolate functional-form misspecification from time-invariant unobserved heterogeneity. Although the profile maximum likelihood estimator inherits incidental-parameter bias of order $O(1/T)$, the leading bias of the profile PIM is of order $\sqrt{n}/T$, so it is asymptotically negligible whenever $n/T^2\to 0$, which includes the rectangular regime $n/T\to\mathrm{const}$, and stabilizes under parabolic asymptotics $n/T^2\to\rho\in(0,\infty)$. This robustness arises because the indicator fluctuates at rate $\sqrt{n}$ rather than $\sqrt{nT}$. The same rate argument makes third-derivative corrections asymptotically unnecessary, so the test uses only first- and second-order likelihood derivatives. Simulations confirm that asymptotic PIM critical values suffice in moderately long panels, even when Wald tests for $\theta$ are badly sized, while a parametric bootstrap largely removes size distortions when $\rho>0$.
\end{abstract}

\noindent\textit{JEL classification:} C12, C15, C23, C52.\\
\noindent\textit{Keywords:} Information matrix test; panel data; fixed effects; incidental parameters; misspecification; parametric bootstrap.

\section{Introduction}\label{sec:introduction}

Likelihood-based inference requires correct specification of the conditional density of the outcome. Under a correctly specified model, counterfactual and policy analysis can identify estimands that are not identified from experimental or quasi-experimental variation alone \citep{heckman-vytlacil-2005}. The Information Matrix (IM) test of \citet{white1982maximum} is a natural check of that hypothesis, yet it is rarely used. Three difficulties are well known. The test has poor size in small samples \citep{taylor1987size,orme1990small,chesher1991asymptotic}. It often rejects because of neglected heterogeneity rather than because the functional form of the density is wrong \citep{chesher1984testing}. In cross-sections, its asymptotic variance depends on the estimation error of the parameters, so the statistic requires third derivatives of the log-likelihood \citep{chesher1983information,lancaster1984covariance}, which are cumbersome to derive and often numerically unstable.

This paper develops the asymptotic theory of the IM test in panel models with fixed effects. The resulting procedure is the panel information matrix (PIM) test. A central feature of the panel setting is that the information-matrix indicator is an average across individuals of within-individual moments, so that its sampling variability is of order $n^{-1/2}$ rather than $(nT)^{-1/2}$. Relative to the classical cross-sectional IM test, this $\sqrt{n}$ scaling has two main consequences: the profile PIM is first-order robust to incidental parameters, and third-derivative corrections in the asymptotic variance become redundant.

Including fixed effects isolates functional-form misspecification of the idiosyncratic density from time-invariant unobserved heterogeneity and thereby addresses the critique in \citet{chesher1984testing}. For estimation of the common parameter $\theta$, however, profiling out the fixed effects creates an incidental parameter problem \citep{neyman-scott}, as the profile maximum likelihood estimator has a first-order bias of order $O(1/T)$, and Wald-type inference based on that estimator is misleading under rectangular asymptotics $n/T\to\mathrm{const}$ \citep{hahn-newey-nonlinear-panels,hahn-kuersteiner-nonlinear-panels}. The PIM statistic behaves differently. Incidental-parameter effects enter the profile indicator through the score and information biases of the profile likelihood, both of order $O(T^{-1})$ \citep{lee-phillips2015modelselection,schumann-info}. After $\sqrt{n}$ scaling, the leading bias of the profile indicator is therefore of order $\sqrt{n}/T$. This term is $o(1)$ whenever $n/T^2\to 0$, which includes the rectangular regime. Consequently, panels that are too short for reliable inference on $\theta$ can still support a first-order valid specification test. Moreover, the infeasible target likelihood, and feasible alternatives that are first-order equivalent to it, yield a PIM statistic whose null limit is central $\chi^2$ without a restriction on $n/T^2$.

When $n/T^2\to\rho\in(0,\infty)$, the profile PIM converges to a noncentral chi-square whose noncentrality is proportional to $\rho$. In that regime, and more generally in short panels, we recommend a parametric bootstrap along the lines of \citet{horowitz1994bootstrap} and \citet{higgins2024bootstrap}. Under high-level conditions stated below, the bootstrap replicates the limiting law of the PIM statistic. Monte Carlo experiments for static logit, static probit, and dynamic probit show that asymptotic critical values already work well for moderate $T$, even when Wald tests for $\theta$ remain badly sized, while the bootstrap further improves size when $T$ is small. Power against fixed alternatives is reported in the appendix.

The same $\sqrt{n}$ scaling also implies that third-derivative corrections for parameter uncertainty are asymptotically unnecessary. In cross-sections, plug-in estimation error enters the asymptotic variance of the IM indicator on the same scale as the indicator itself \citep{lancaster1984covariance}. In panels, the estimation error for $\theta$ shrinks with both $n$ and $T$, while the indicator continues to fluctuate at rate $\sqrt{n}$. The PIM can therefore be computed from first- and second-order likelihood derivatives only, which is especially attractive when the likelihood must be evaluated numerically, as in many structural panel models \citep{rust1987,aguirregabiria-mira-2007,dearing-blevins-2025}. 

Recent work has revived the IM test by characterizing and decomposing its moment restrictions in specific models. \citet{amengual2024multivariate} relate the multivariate-normality IM statistic to Hermite polynomial moment tests and provide a directional decomposition. Related results for multinomial logit and incomplete-data models are developed in \citet{amengual2025logit} and \citet{amengual2026mixtures}. These contributions are cross-sectional. To the best of our knowledge, the asymptotic properties of the IM test in panel models with fixed effects, and in particular its robustness to incidental parameters, have not been studied.

The remainder of the paper is organized as follows. Section~\ref{sec:pseudo-lik} introduces the panel pseudo-likelihoods and the associated score- and information-bias expansions. Section~3 develops the limiting distributions of the target and profile PIM statistics. Section~\ref{sec:bootstrap} develops the parametric bootstrap. Section~\ref{sec:simulations} reports Monte Carlo evidence. Section~\ref{sec:conclusion} concludes. Assumptions, intermediate results, and proofs are collected in Sections~\ref{sec-ass}--\ref{sec-proofs}. Supplementary simulation results, a Neyman--Scott illustration, implementation details, and the illustrative applications of Section~\ref{sec:illustrative-example} are deferred to the appendices.

\section{Pseudo-likelihoods in panel data models with fixed effects}\label{sec:pseudo-lik}
Let $\Yit$ denote outcomes and $\Xit$ a (column) vector of explanatory variables for $i=1, \ldots, n$ individuals and $t=1, \ldots, T$ time periods,
where $n,T \ge 2$. 
The time-invariant fixed effect, denoted by $\alphaio$, is an individual-specific unobserved random variable
whose distribution has known support $\calJ\defined \supp(\alphaio) \subset \R$, but is otherwise unknown. By modeling unobserved heterogeneity through $\alphaio$, the  PIM test developed here isolates misspecification of the idiosyncratic density from neglected heterogeneity of the type emphasized by \citet{chesher1984testing}. Let further  $\calXiT=(X_{i1},\ldots,X_{iT})$ denote the time series of the observed covariates and $\calY_{i0}=(Y_{i,1-L},\ldots,Y_{i0})$ for a fixed lag length $L\in\N$. 
Following \citet{higgins2024bootstrap}, we assume that the likelihood for individual $i$ is the density, conditional on $\calXiT$ and $\calY_{i0}$, with respect to some appropriate dominating measure. Here, $\theta\in\Theta\subset\R^p$ is the common parameter of interest with true value $\theta_0$ in the interior of $\Theta$, and $\alpha_i$ is a candidate value for the incidental parameter $\alphaio$. The period log-likelihood contribution is
\[
\ellit(\theta,\alpha_i)=\log f_{it}(\Yit|Y_{it-1},\ldots,Y_{it-L},\calXiT;\theta,\alpha_i),
\]
and the scaled log-likelihood for the $i$-th stratum is $\elliT(\theta,\alpha_i)=T^{-1}\sumt\ellit(\theta,\alpha_i)$. The joint distribution of $(\calXiT,\alphaio)$ is unrestricted, which allows for arbitrary correlation between the explanatory variables and the fixed effect. The operator $\E_0$ refers to integration with respect to the product of the conditional densities in the sample, $\prod_{i=1}^n\prod_{t=1}^T f_{it}(\Yit|Y_{it-1},\ldots,Y_{it-L},\calXiT;\theta_0,\alphaio)$, whereas unconditional expectations are denoted by $\E$.\footnote{When the likelihood for individual $i$ is evaluated at non-random parameters, $\E_0$ reduces to an integral with respect to $\prod_{t=1}^T f_{it}(\Yit|Y_{it-1},\ldots,Y_{it-L},\calXiT;\theta,\alpha_i)$.} Precise regularity conditions on derivatives and moments are collected in Appendix~\ref{sec-ass}.

\subsection{Profile and target likelihoods}\label{sec:score-info-bias}
Since $\alpha_i$ is unobserved, it must be eliminated from the likelihood, resulting in a pseudo-likelihood that only depends on the common parameter. The two pseudo-likelihoods considered in this paper are the concentrated or ``profile'' likelihood and the infeasible ``target'' likelihood, given as $\ell_i^\frakp(\theta)\defined \elliT(\theta,\alphahatiT(\theta))$ and $\ell_i^\frakt(\theta)\defined \elliT(\theta,\alphai^\frakt(\theta))$, where $\alphahatiT(\theta)=\argmax_{\alpha_i\in\calJ}\elliT(\theta,\alpha_i)$ and $\alphai^\frakt(\theta)\defined\argmax_{\alpha_i\in\calJ}\E_0\elliT(\theta,\alpha_i)$. Their sample averages are $\ell^\frakp(\theta)=n^{-1}\sumi\ell_i^\frakp(\theta)$ and $\ell^\frakt(\theta)=n^{-1}\sumi\ell_i^\frakt(\theta)$, with maximizers $\thetahat^\frakp=\argmax_{\theta\in\Theta}\ell^\frakp(\theta)$ and $\thetahat^\frakt=\argmax_{\theta\in\Theta}\ell^\frakt(\theta)$. Superscripts $^\frakp$ and $^\frakt$ are also attached to the corresponding IM indicators, variance estimators, and test statistics introduced below. When a result applies to either pseudo-likelihood, we omit the superscript and write $\ell_i(\theta)$, $\thetahat$, and so on. Derivatives of a pseudo-likelihood with respect to $\theta$ are indexed numerically: $\elliOne(\theta)=\partial_\theta\ell_i(\theta)$ is a $p$-vector and $\elliTwo(\theta)=\partial_{\theta\theta'}^2\ell_i(\theta)$ is a $p\times p$ matrix.

The target likelihood is score unbiased and information unbiased at $\theta_0$ \citep{pace-salvan-JSPI}. While the target likelihood itself is not available in practical applications, it serves here as a placeholder for feasible likelihoods that are fixed-$T$ consistent, or first-order equivalent to the target as $T\to\infty$, under additional structure---for example conditional likelihoods when a sufficient statistic for $\alpha_i$ exists, and integrated likelihoods based on a correctly specified or robust prior for the fixed effects \citep{arellano-bonhomme}. Results for $\ell^\frakt$ therefore cover these cases.

Unlike the target likelihood, the profile likelihood suffers from score- and information bias of order $\Op(T^{-1})$ due to the presence of the unobserved fixed effects \citep{schumann-info}. In particular, the score bias drives the incidental-parameter bias of the fixed-effects MLE $\thetahat^\frakp$ \citep{neyman-scott,hahn-newey-nonlinear-panels}. Because of these first-order biases, the profile likelihood does not behave like a ``genuine'' likelihood even under correct specification \citep{lee-phillips2015modelselection}, and inference based on the MLE is misleading under rectangular asymptotics. A large literature develops analytical, jackknife, and related corrections that remove the leading first-order bias of the fixed-effects MLE and the associated score- and information biases; see the surveys of \citet{lancaster-survey-joe}, \citet{arellano-hahn-world-congress}, and \citet{fernandez2018fixed}, and the developments in \citet{hahn-kuersteiner-nonlinear-panels}, \citet{fernandez-val-probit}, \citet{dhaene-jochmans-restud},  \citet{diciccio-martin-stern-young}, and \citet{schumann2020joe}. Despite the fixed-effects MLE entering the panel IM statistic, we show in the next section that such corrections are asymptotically unnecessary for the PIM test, as the leading incidental-parameter bias term vanishes under rectangular asymptotics.

\section{The IM test in panel data models}
In this section, we establish the asymptotic properties of the PIM test. We demonstrate two main results. First, unlike in the pure cross-sectional case, the PIM statistic does not require third derivatives of the underlying pseudo-likelihood. Second, under parabolic asymptotics, i.e., $\rho=\lim n/T^2\in[0,\infty)$,  incidental-parameter bias affecting the profile PIM vanishes when $\rho=0$ and stabilizes when $\rho\in(0,\infty)$. This contrasts with Wald-type inference for parameters, which is biased under rectangular asymptotics.

To define the PIM statistic we adapt the IM test of \citet{white1982maximum} to panel data. Recall that $\elliOne(\theta)$ and $\elliTwo(\theta)$ are the score and Hessian of the $T$-averaged pseudo-likelihood $\ell_i(\theta)$. The corresponding information-matrix indicator for individual $i$ is based on
\[
T\elliOne(\theta)\elliOne(\theta)'+\elliTwo(\theta),
\] 
where the factor $T$ appears because $\E_0[\elliOne(\theta_0)\elliOne(\theta_0)']=O(T^{-1})$ while $\E_0[\elliTwo(\theta_0)]=O(1)$, so both terms are of the same order. As in \citet{white1982maximum}, the test may use a subset of the $p(p+1)/2$ distinct entries of this matrix. Let $D_i(\theta)\in\R^q$ collect the chosen components, where $1\le q\le p(p+1)/2$, and set
\[
D(\theta)=\frac{1}{n}\sumi D_i(\theta).
\]
The following lemma shows that $\var_0(\sqrt{n}D(\theta_0))$ remains of order one as $n$ and $T$ grow. The subsequent asymptotic properties of the PIM rest on this result.
\begin{lem}\label{lemma-variance}
Let Assumptions~\ref{ass-sampling} and~\ref{ass-moments} hold. Then,
\begin{equation*}
\var_0(\sqrt{n}D(\theta_0))=\Op(1).
\label{eq:variance}
\end{equation*}
\end{lem}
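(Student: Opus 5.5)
Because the strata are independent across $i$ conditional on the covariates and fixed effects (Assumption~\ref{ass-sampling}), the variance of the average factorizes:
\[
\var_0(\sqrt{n}D(\theta_0))=\frac{1}{n}\sumi \var_0(D_i(\theta_0)).
\]
So it suffices to show that $\var_0(D_i(\theta_0))$ is bounded uniformly in $i$ and $T$ (in probability, since $\E_0$ conditions on $\calXiT$ and $\alphaio$). Each $D_i$ consists of entries of $T\elliOne\elliOne'+\elliTwo$. I will therefore bound the second moments of $T\,\ell_{i,1,j}\ell_{i,1,k}$ and of $\ell_{i,2,jk}$ separately, and combine them with $\var(a+b)\le 2\var(a)+2\var(b)$.

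For the Hessian term, $\elliTwo(\theta_0)$ is a $T$-average of derivatives of $\ellit$; for the profile version it also contains terms from $\partial_\theta\alphahatiT$. The moment bounds in Assumption~\ref{ass-moments} give $\E_0\|\elliTwo(\theta_0)\|^2=O(1)$ uniformly in $i$, which already bounds its variance.

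For the outer-product term, I will show $\E_0\|\sqrt{T}\elliOne(\theta_0)\|^4=O(1)$. This gives $\E_0\big[(T\ell_{i,1,j}\ell_{i,1,k})^2\big]\le \E_0\|\sqrt T\elliOne\|^4=O(1)$. For the target likelihood, $\sqrt{T}\elliOne(\theta_0)$ equals $T^{-1/2}\sumt$ of the efficient-score contributions $\partial_\theta\ellit-\partial_\alpha\ellit\,\E_0\partial_{\theta\alpha}\ellit/\E_0\partial_{\alpha\alpha}\ellit$ at $(\theta_0,\alphaio)$. These are mean zero, and under the mixing or martingale-difference structure of Assumption~\ref{ass-sampling} they have bounded higher moments. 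A Rosenthal-type inequality for mixing (or martingale-difference) arrays then yields the fourth-moment bound. For the profile likelihood, I will use the standard expansion $\alphahatiT(\theta_0)-\alphaio=\Op(T^{-1/2})$ with an explicit remainder. This writes $\sqrt{T}\ell^\frakp_{i,1}(\theta_0)$ as the same normalized sum plus a remainder of order $T^{-1/2}$ whose fourth moment is $O(T^{-2})$, so the same bound holds.

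Averaging the uniform bounds over $i$ gives $\var_0(\sqrt n D(\theta_0))\le C$ with probability approaching one, which is the statement. The main obstacle is the profile case. There I need moment control on the remainder of $\alphahatiT(\theta_0)$, not just stochastic order, and this requires the event on which $\alphahatiT$ lies near $\alphaio$ to have complement probability decaying fast enough; I would obtain that from the exponential-type tail conditions presumably contained in Assumption~\ref{ass-moments}.
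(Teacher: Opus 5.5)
Your outline is sound and has the same skeleton as the paper's proof. Independence across $i$ reduces the claim to $\var_0(D_i(\theta_0))$, and everything rests on $\E_0[(\sqrt{T}\elliOne)^4]=O(1)$ and $\E_0[\elliTwo^2]=O(1)$ at $\theta_0$. The execution differs, however.

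The paper expands $\var_0(D_i)=\E_0[T^2\elliOne^4+2T\elliOne^2\elliTwo+\elliTwo^2]-\IB^2$ exactly. It then centers $\elliOne=l_{i,1}+\lambda_{i,1}$ and $\elliTwo=l_{i,2}+\lambda_{i,2}$, and bounds each product moment by the cumulant-based rates $O(T^{-\lceil k/2\rceil})$ of Lemma~\ref{lemma-mixing}, using $\lambda^\frakt_{i,1}=0$ and $\lambda^\frakp_{i,1}=O(T^{-1})$. This also separates the $O(1)$ target-type terms from profile-specific terms of order $T^{-1}$.

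Your combination of $\var(a+b)\le 2\var(a)+2\var(b)$ with a Rosenthal/Burkholder inequality gives only $\Op(1)$, but that is all the lemma asserts. For the target it needs only the martingale-difference property of the score and fourth moments, with no cumulant summability. One minor correction: the target score weights $\partial_\alpha\ellit$ by the ratio of $T$-averaged expectations $\lambda_{i,1,1}/\lambda_{i,0,2}$, not by a period-specific ratio, because $\E_0$ conditions on the covariates. The bound is unaffected.

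In the profile case, your plan leans on something that is not there: Assumption~\ref{ass-moments} contains no exponential-type tail condition. The paper does not expand $\alphahatiT(\theta_0)$ in this proof at all. Its primitives in Assumptions~\ref{ass-sampling}\textup{(iv)} and~\ref{ass-moments}\textup{(i)} are stated for the centered period contributions of whichever pseudo-likelihood is used. It therefore applies Lemma~\ref{lemma-mixing} directly to $l^\frakp_{i,1}$ and $l^\frakp_{i,2}$, and imports the score-bias order $\lambda^\frakp_{i,1}=O(T^{-1})$ from the literature cited in Section~\ref{sec:pseudo-lik}. For an $\Op(1)$ conclusion, $\lambda^\frakp_{i,1}=O(T^{-1/2})$ would already suffice.

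The obstacle you identify is therefore real, but the paper absorbs it into its assumptions rather than resolving it. Your expansion around the efficient score is more explicit about where profile-specific moment control enters. If you keep it, the right tool is truncation using the polynomial ninth- and sixteenth-order envelopes together with H\"older's inequality. You only need the probability that $\alphahatiT(\theta_0)$ leaves a neighborhood of $\alphaio$ to decay polynomially in $T$. Markov's inequality with high-order moment bounds, plus a grid argument over the compact $\calJ$, can deliver that.

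Finally, ``bounded uniformly in $i$'' is not available with random envelopes. What you actually get are bounds by $\calG_i$-measurable envelopes with finite fourth moments, whose cross-sectional average is $\Op(1)$ by Markov's inequality.
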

In the cross-sectional IM test of \citet{white1982maximum}, the averaged indicator has variance of order $1/n$, so the statistic is scaled by $\sqrt{n}$. The same scaling is the right one in the panel. Under the target likelihood the individual indicators are mean zero, but each contains products of scores from different periods. The variance of those products does not vanish as $T\to\infty$, even without unobserved individual effects, so $\var_0(D_i(\theta_0))$ remains of order one and the sample indicator is scaled by $\sqrt{n}$. 

The PIM test is constructed based on the indicator  evaluated at the pseudo-likelihood maximizer, e.g.\ $D^\frakp(\thetahat^\frakp)$ for the profile case. As is well-known, the profile estimator $\thetahat^\frakp$ suffers from an incidental-parameter bias of order $\Op(T^{-1})$ in the presence of fixed effects, which is driven by the score-bias of the same order of the profile likelihood (\citealp{schumann-info}). As a consequence, the asymptotic distribution of $\sqrt{nT}(\thetahat^\frakp-\theta_0)$ is biased under rectangular asymptotics (see, e.g., \citealp{hahn-newey-nonlinear-panels,hahn-kuersteiner-nonlinear-panels}), and hypothesis tests based on $\thetahat^\frakp$ are severely misleading in short panels. The score- and information bias of the profile likelihood also affect the bias of the profile indicator function when evaluated at the MLE, since $\E_0[D^\frakp(\thetahat^\frakp)]=\Op(T^{-1})$. However, because of Lemma \ref{lemma-variance}, the indicator is scaled by $\sqrt{n}$. As is shown in the following Lemma, this implies that $\sqrt{n}D^\frakp(\thetahat^\frakp)$ remains asymptotically unbiased if $n/T^2\to 0$, whereas the incidental-parameter bias of the scaled profile indicator stabilizes if $n/T^2\to\rho\in (0,\infty)$. Moreover, it is shown that the target likelihood indicator is asymptotically unbiased without the need to impose relative rates on $n$ and $T$. To state the Lemma, let $V\defined\lim_{n,T\to\infty}\var_0(\sqrt{n}D^\frakt(\theta_0))$, which exists and is nonsingular (see Assumption~\ref{ass-variance}).
\begin{lem}\label{lemma-scaled-indicator}
Let Assumptions~\ref{ass-sampling}--\ref{ass-estimators}, and~\ref{ass-variance} hold.
\begin{enumerate}
\item 
\[
\sqrt{n}D^\frakt(\thetahat^\frakt)=\sqrt{n}D^\frakt(\theta_0)+\Op(n^{-1/2})+\Op(T^{-1/2}) 
\]
and 
\[
\sqrt{n}D^\frakt(\theta_0)\dto \No(0,V).
\]
\item 
If $n,T\to\infty$, 
\[
\sqrt{n}D^\frakp(\thetahat^\frakp)=
\sqrt{n}D^\frakt(\theta_0)+\Op(n^{-1/2})+\Op(T^{-1/2})+\Op(\frac{\sqrt{n}}{T}).
\]
Moreover, if Assumption~\ref{ass-n-T-rate}\textup{(ii)} holds, i.e.\ $n/T^2\to\rho\in[0,\infty)$, then
\[
\sqrt{n}D^\frakp(\thetahat^\frakp)\dto \No(\sqrt{\rho}\,\kappa(\theta_0),V),
\] 
where $\kappa(\theta_0)$ is a non-random limit containing the first-order score and information bias of the profile likelihood (see the proof).
\end{enumerate}
\end{lem}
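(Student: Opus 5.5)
The plan is to Taylor-expand the indicator in $\theta$ around $\theta_0$, control the plug-in error using the rates of the estimators, and then, in the profile case, expand the profile indicator around the target indicator. Assumption~\ref{ass-estimators} supplies the estimator rates and Assumption~\ref{ass-moments} the derivative bounds.

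For part 1, a mean-value expansion gives
\[
\sqrt{n}D^\frakt(\thetahat^\frakt)=\sqrt{n}D^\frakt(\theta_0)+\partial_{\theta'}D^\frakt(\bar\theta)\sqrt{n}(\thetahat^\frakt-\theta_0).
\]
The Jacobian involves $T\elliOne\elliTwo$ and $\elliThree$. Write $\partial_{\theta'}D^\frakt(\bar\theta)=G_0+\op(1)$, with
\[
G_0=\E_0[\partial_{\theta'}D^\frakt(\theta_0)]=O(1).
\]
Under the target likelihood, the generalized information equality makes $G_0$ equal to a cross moment involving $T\E_0[D_i\,\elliOne']$. I would use $\sqrt{nT}(\thetahat^\frakt-\theta_0)=\Op(1)$, so $\sqrt{n}(\thetahat^\frakt-\theta_0)=\Op(T^{-1/2})$. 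The remainder is then $\Op(T^{-1/2})$ plus the quadratic term $\Op(\sqrt{n}\cdot(nT)^{-1})$, which is within the $\Op(n^{-1/2})$ term. This is exactly the point where third derivatives become irrelevant: their contribution is multiplied by a vanishing factor.

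The CLT for $\sqrt{n}D^\frakt(\theta_0)$ follows from Lyapunov or Lindeberg over independent $i$. The summands $D_i^\frakt(\theta_0)$ have mean zero, by score and information unbiasedness of the target likelihood. Their variance is bounded by Lemma~\ref{lemma-variance}, with limit $V$ nonsingular by Assumption~\ref{ass-variance}. A uniform bound on a $(2+\delta)$ moment of $D_i$ from Assumption~\ref{ass-moments} supplies the Lyapunov condition. Because $n,T$ both diverge, I would phrase this as a triangular-array CLT.

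For part 2, write
\[
D^\frakp(\thetahat^\frakp)=D^\frakt(\theta_0)+[D^\frakp(\theta_0)-D^\frakt(\theta_0)]+[D^\frakp(\thetahat^\frakp)-D^\frakp(\theta_0)].
\]
For the middle term, use the stochastic expansion $\alphahat_i(\theta_0)-\alpha_i^\frakt(\theta_0)=\Op(T^{-1/2})$ together with the score- and information-bias expansions of Section~\ref{sec:pseudo-lik}. These give
\[
\elliOne^\frakp=\elliOne^\frakt+T^{-1}b_i+\text{centred }\Op(T^{-1})\text{ terms},
\qquad
\elliTwo^\frakp=\elliTwo^\frakt+T^{-1}c_i+\text{centred terms}.
\]
Hence $T\elliOne^\frakp\elliOne^{\frakp\prime}+\elliTwo^\frakp$ differs from the target indicator by the following pieces:
\begin{enumerate}[(i)]
\item a deterministic mean $T^{-1}\kappa_i$, where $\kappa_i$ collects the score-bias cross term $\E_0[T\elliOne^\frakt\cdot(\text{correction})]$ and the information bias;
\item mean-zero pieces of order $T^{-1/2}$ per individual.
\end{enumerate}
Averaging and scaling by $\sqrt{n}$ gives $\sqrt{n}T^{-1}\bar\kappa+\Op(T^{-1/2})$. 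The mean-zero pieces average to $\Op(n^{-1/2}T^{-1/2})\cdot\sqrt{n}$, or at worst $\Op(T^{-1/2})$.

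For the last term, expand as in part 1, now with $\thetahat^\frakp-\theta_0=\Op((nT)^{-1/2})+\Op(T^{-1})$. This contributes $\Op(T^{-1/2})+\Op(\sqrt{n}/T)$. Its leading deterministic part, $G_0^\frakp\sqrt{n}B/T$ with $B$ the Hahn--Newey bias, is absorbed into $\kappa$. Setting $\kappa(\theta_0)=\lim[\bar\kappa+G_0B]$ and using $\sqrt{n}/T\to\sqrt\rho$ then gives the noncentral normal limit by Slutsky.

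The main obstacle is the middle term. The concern is the $T$-scaled product $T\elliOne^\frakp\elliOne^{\frakp\prime}$: cross terms between the $\Op(T^{-1/2})$ target score and the $\Op(T^{-1})$ random fluctuations of the profile correction are of order $T\cdot T^{-1/2}\cdot T^{-1}=T^{-1/2}$ per individual. These are not mean zero, and they contribute to the $T^{-1}$ mean. I would handle this by computing their expectations explicitly through the second-order expansion of $\alphahat_i$ and showing the centred remainder has variance $O(T^{-1})$ uniformly in $i$. Independence across $i$ then makes its $\sqrt{n}$-scaled average $\Op(T^{-1/2})$. The residual higher-order means are $O(T^{-3/2})$, contributing $\sqrt{n}T^{-3/2}=o(1)$.
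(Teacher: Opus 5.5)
There is a genuine gap in part~2, in your treatment of $D^\frakp(\thetahat^\frakp)-D^\frakp(\theta_0)$. You expand to first order ``as in part~1'' and treat the Jacobian at the intermediate value as $G_0^\frakp+\op(1)$. This fails for the profile estimator.

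By Lemma~\ref{lemma-rates}, $\E_0[D_{i,2}^\frakp(\theta_0)]$ is genuinely of order $T$. Its leading part is $2T\lambda_{i,2}^2$, and in Neyman--Scott it equals $T/(2\theta_0^4)+O(1)$. At the same time, $\bar\theta-\theta_0$ inherits the $O(T^{-1})$ incidental-parameter bias. Hence $D_1^\frakp(\bar\theta)-D_1^\frakp(\theta_0)\approx\E_0[D_2^\frakp](\bar\theta-\theta_0)=\Op(1)$, the same order as $G_0^\frakp$ itself.

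Equivalently, the quadratic Taylor term $\tfrac12\sqrt{n}D_2^\frakp(\theta_0)(\thetahat^\frakp-\theta_0)^2\approx\tfrac12\sqrt{n}\,\E_0[D_2^\frakp]B^2/T^2$ is of order $\sqrt{n}/T$, exactly like the linear term. It therefore enters the noncentrality. This is the paper's $B_{T,2}^{\mathrm{score}}=\tfrac12 T\lambda_2^{-2}\E_0[D_2^\frakp](\lambda_1^\frakp)^2$. The correct constant is $\kappa=\lim[\bar\kappa+G_0B+\tfrac12\bar G_2B^2]$ with $\bar G_2=\lim T^{-1}\E_0[D_2^\frakp]$; your $\lim[\bar\kappa+G_0B]$ misses the last piece.

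In the Neyman--Scott model of Appendix~\ref{sec:neyman-scott-short}, your formula gives $\tfrac{3}{4\theta_0^2}-\tfrac{1}{2\theta_0^2}=\tfrac{1}{4\theta_0^2}$. A direct computation gives a different answer. With $u_i=\theta_0\eta_i/T$, one has exactly $D^\frakp(\thetahat^\frakp)=T\widehat{\var}(u_i)/(4(\thetahat^\frakp)^4)-1/(2(\thetahat^\frakp)^2)$, where $\widehat{\var}$ is the cross-sectional sample variance. Using $\thetahat^\frakp\approx\theta_0(1-1/T)$, its mean is approximately $1/(2T\theta_0^2)$, i.e.\ $\kappa=1/(2\theta_0^2)$. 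So your rate display for part~2 survives, but the distributional claim with your $\kappa$ is false when $\rho>0$.

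The repair is to expand to at least second order, as the paper does via the linearization $\thetahat-\theta_0=-\lambda_2^{-1}\ell_1+R$. Keep the $\E_0[D_2^\frakp](\lambda_1^\frakp)^2$ term. Then verify that the cubic and quartic terms vanish, which holds because $D_3^\frakp(\theta_0)$ and $\sup_\theta|D_4^\frakp(\theta)|$ are only $\Op(T)$.

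Two smaller points concern part~1.

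\begin{itemize}
\item Your claim $\partial_{\theta'}D^\frakt(\bar\theta)=G_0+\op(1)$ is also inaccurate unless $T/n\to0$. Because $\var_0(D_{i,1}^\frakt(\theta_0))=O(T)$, the average $D_1^\frakt(\theta_0)$ fluctuates around $G_0$ at rate $\sqrt{T/n}$. Moving to $\bar\theta$ adds $\Op(T)\cdot\Op((nT)^{-1/2})=\Op(\sqrt{T/n})$. Multiplied by $\sqrt{n}(\thetahat^\frakt-\theta_0)=\Op(T^{-1/2})$, both become $\Op(n^{-1/2})$, which is precisely where that term in the statement comes from. Your ``quadratic term'' should read $\sqrt{n}\cdot T\cdot(nT)^{-1}$, not $\sqrt{n}(nT)^{-1}$. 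The conclusion survives once you split $D_1^\frakt(\theta_0)$ into its mean and its fluctuation, as the paper does.
\item The Lyapunov bound $\E_0[D_i^\frakt(\theta_0)^4]=O(1)$ does not follow from the envelopes of Assumption~\ref{ass-moments} alone. You need the product-moment rates of Lemma~\ref{lemma-mixing} to control terms such as $T^4\E_0[l_{i,1}^8]$.
\end{itemize}
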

\begin{rem}
    Although the target estimator can be consistent for \(\theta_0\) at fixed panel length, the first part of Lemma \ref{lemma-scaled-indicator} still requires \(T\to\infty\). To see this, consider the first term in a Taylor expansion  \(\sqrt{n}\,D_1^\frakt(\theta_0)(\thetahat^\frakt-\theta_0)\). By Lemma~\ref{lemma-rates}, \(\E_0[D_{i,1}^\frakt(\theta_0)]=\Op(1)\) and \(\var_0(D_{i,1}^\frakt(\theta_0))=\Op(T)\), so this remainder is \(\Op(n^{-1/2})+\Op(T^{-1/2})\). The \(\Op(T^{-1/2})\) piece, which forces \(T\to\infty\), comes from the nonzero mean of \(D_1^\frakt(\theta_0)\). Bartlett identities keep \(D_i^\frakt(\theta_0)\) mean zero with \(\Op(1)\) variance only at the true parameter, where the score is mean zero. Since they do not transfer to \(\theta\)-derivatives of the indicator,  higher-order likelihood derivatives retain nonzero means, and \(D_{i,1}^\frakt\) therefore does not inherit the rates of \(D_i^\frakt\) itself.
    \end{rem}

\begin{rem}
    The asymptotic bias term $\sqrt{\rho}\kappa(\theta_0)$ that appears in the asymptotic distribution of the profile indicator when $\rho>0$ is a function of both the score- and information bias of the profile likelihood. The score bias enters when expanding the indicator around the true parameter, as the first-order bias of the fixed effects MLE is driven by the first-order score bias of the profile likelihood. The information bias enters when replacing the profile indicator at $\theta_0$ with the target indicator at the true parameter.
\end{rem}

As shown in Lemma \ref{lemma-variance}, the order of the variance does not depend on the panel length. In contrast, both the bias and the variance of $\thetahat^\frakp$ and $\thetahat^\frakt$ vanish as $n,T\to\infty$. Combined with the plug-in rates in Lemma \ref{lemma-scaled-indicator},  this fact implies that the estimator of the variance of $D(\thetahat)$ does not require third derivatives of the pseudo-likelihood $\ell$, even in the presence of unobserved fixed effects:
\begin{lem}\label{lemma-variance-estimator}
Let Assumptions~\ref{ass-sampling} -- \ref{ass-estimators}, and~\ref{ass-variance} hold. Define
\[
    \Vhat(\theta)\defined \frac{1}{n}\sumi D_i(\theta)D_i(\theta)'.
\]
Then,
\[
\Vhat(\thetahat)=V+\op(1)
\]
for $\ell\in\{\ell^\frakp,\ell^\frakt\}$.
\end{lem}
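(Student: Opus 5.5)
The plan has three steps. First, replace $\thetahat$ by $\theta_0$. Second, replace the profile indicators by the target indicators. Third, apply a law of large numbers to the resulting cross-sectional average of independent terms.

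\emph{Step 1: estimator replacement.} A mean-value expansion of $D_i(\thetahat)D_i(\thetahat)'$ around $\theta_0$ gives
\[
\Vhat(\thetahat)-\Vhat(\theta_0)=\frac{1}{n}\sumi \bigl[D_{i,1}(\bar\theta)(\thetahat-\theta_0)\,D_i(\bar\theta)'+D_i(\bar\theta)\,(\thetahat-\theta_0)'D_{i,1}(\bar\theta)'\bigr],
\]
where $\bar\theta$ lies between $\thetahat$ and $\theta_0$ (applied row by row), and $D_{i,1}$ denotes the Jacobian of $D_i$. By Cauchy--Schwarz, each term is bounded by
\[
\|\thetahat-\theta_0\|\,\Bigl(\tfrac1n\sumi\|D_i(\bar\theta)\|^2\Bigr)^{1/2}\Bigl(\tfrac1n\sumi\|D_{i,1}(\bar\theta)\|^2\Bigr)^{1/2}.
\]
Lemma~\ref{lemma-rates} gives $\E_0 D_{i,1}(\theta_0)=\Op(1)$ and $\var_0 D_{i,1}(\theta_0)=\Op(T)$. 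Hence the second moment of $D_{i,1}$ is $\Op(T)$, and the last factor is $\Op(T^{1/2})$. The middle factor is $\Op(1)$ by Lemma~\ref{lemma-variance} and the moment bounds in Assumption~\ref{ass-moments}. Assumption~\ref{ass-estimators} gives $\|\thetahat-\theta_0\|=\Op((nT)^{-1/2})+\Op(T^{-1})$, where the $T^{-1}$ term is incidental-parameter bias present only in the profile case. The product is therefore $\Op(n^{-1/2})+\Op(T^{-1/2})=\op(1)$.

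This step is the main obstacle. The moment bounds hold at $\theta_0$, but they are needed at the random intermediate point $\bar\theta$. I would handle this with the uniform local domination built into Assumption~\ref{ass-moments}, i.e.\ an envelope $M_i$ over a shrinking neighbourhood of $\theta_0$ with $\E M_i^2=O(T)$. Consistency of $\thetahat$ then places $\bar\theta$ in that neighbourhood with probability approaching one. This is exactly where the argument avoids third derivatives: the derivative terms never enter the limit and are only bounded.

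\emph{Step 2: profile-to-target replacement.} For $\ell^\frakp$, the stochastic expansion of $\alphahatiT(\theta_0)$ around $\alpha_i^\frakt(\theta_0)$ (Assumption~\ref{ass-estimators}) yields $D_i^\frakp(\theta_0)=D_i^\frakt(\theta_0)+R_i$ with $\frac1n\sumi\|R_i\|^2=\Op(T^{-1})$. This matches the step already used in part~2 of Lemma~\ref{lemma-scaled-indicator}. Applying Cauchy--Schwarz again gives $\Vhatp(\theta_0)-\Vhatstar(\theta_0)=\Op(T^{-1/2})$.

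\emph{Step 3: law of large numbers.} Consider $\Vhatstar(\theta_0)=n^{-1}\sumi D_i^\frakt(\theta_0)D_i^\frakt(\theta_0)'$. By Assumption~\ref{ass-sampling}, the summands are independent across $i$. They have mean $\E_0 D_i^\frakt D_i^{\frakt\prime}$, because information unbiasedness makes $D_i^\frakt(\theta_0)$ mean zero. Averaging these means gives $\var_0(\sqrt n D^\frakt(\theta_0))\to V$ by Assumption~\ref{ass-variance}. A triangular-array weak law of large numbers applies, since fourth moments of $D_i^\frakt(\theta_0)$ are bounded uniformly in $T$ by Assumption~\ref{ass-moments}, which delivers the Lindeberg-type condition. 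It gives $\Vhatstar(\theta_0)-\var_0(\sqrt nD^\frakt(\theta_0))=\op(1)$. Combining the three steps yields $\Vhat(\thetahat)=V+\op(1)$ for both pseudo-likelihoods.
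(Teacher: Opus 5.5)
Your architecture matches the paper's: replace $\thetahat$ by $\theta_0$, pass from $D_i^\frakp$ to $D_i^\frakt$ using $\E_0[\Delta_i(\theta_0)^2]=O(T^{-1})$ and Cauchy--Schwarz, then apply a fourth-moment weak law to $\Vhat^\frakt(\theta_0)$, whose mean $V_T$ tends to $V$. Steps~2 and~3 are essentially the paper's argument, with two citation corrections. The decomposition comes from Lemma~\ref{lem:indicators}, not Assumption~\ref{ass-estimators}. The uniform fourth-moment bound comes from Lemma~\ref{lemma-mixing}, as in the Lyapunov step of Lemma~\ref{lemma-scaled-indicator}, not from the envelopes alone.

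The real difference is Step~1. The paper uses a fourth-order Taylor expansion so that $\Vhat_1,\Vhat_2,\Vhat_3$ are evaluated at $\theta_0$, where Lemma~\ref{lemma-rates} gives $D_{i,1}(\theta_0)=\Op(\sqrt T)$ and $D_{i,2}(\theta_0),D_{i,3}(\theta_0)=\Op(T)$. Only $\Vhat_4(\thetabar)=\Op(T^2)$ sits at the intermediate point. There the crude global bound $\sup_\theta|D_{i,k}(\theta)|=\Op(T)$ is absorbed by $(\thetahat-\theta_0)^4$. That route needs no localization but uses four derivatives of $D_i$. Your first-order expansion with Cauchy--Schwarz needs only $D_{i,1}$ and handles vector $\theta$ directly. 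However, it rests entirely on the local envelope, and your justification of that envelope needs two corrections.

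\emph{The envelope must be constructed.} No such envelope is built into Assumption~\ref{ass-moments}. You can build it from three ingredients: the bound $\|\ell_{i,1}(\theta)\|\le\|\ell_{i,1}(\theta_0)\|+r\sup_{\theta\in\Theta}\|\ell_{i,2}(\theta)\|$ for $\|\theta-\theta_0\|\le r$; the sup-envelopes in Assumption~\ref{ass-moments}\textup{(ii)}; and $\E_0[T^2\|\ell_{i,1}(\theta_0)\|^4]=O(1)$ from Lemma~\ref{lemma-mixing}, together with $\lambda_{i,1}^\frakp=O(T^{-1})$ in the profile case. This gives
\[
\E\sup_{\|\theta-\theta_0\|\le r}\|D_i(\theta)\|^2=O(1)
\quad\text{and}\quad
\E\sup_{\|\theta-\theta_0\|\le r}\|D_{i,1}(\theta)\|^2=O(T),
\]
but only if $r=O(T^{-1/2})$.

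\emph{Consistency is not enough to localize.} Consistency of $\thetahat$ does not put $\bar\theta$ in a ball of radius $O(T^{-1/2})$. On a fixed neighbourhood the argument breaks down: there $T\ell_{i,1}(\theta)$ is of order $T$, so $D_i$ and $D_{i,1}$ are both of order $T$. Your bound then becomes of order $T^2\|\thetahat-\theta_0\|$, which does not vanish. What you need is
\[
\|\thetahat-\theta_0\|=\Op((nT)^{-1/2})+\Op(T^{-1})=\op(T^{-1/2})
\]
from Assumption~\ref{ass-estimators}, which you already use in the same step.

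With these repairs your route is valid and slightly more economical in smoothness than the paper's.
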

Given that the estimator of the variance only requires first- and second derivatives of the pseudo-likelihood, we define the PIM test statistics based on the target and profile likelihoods as 
\begin{equation}
\calW^\frakt\defined n D^\frakt(\thetahat^\frakt)' (\widehat{V}^\frakt)^{-1} D^\frakt(\thetahat^\frakt),
\qquad
\calW^\frakp\defined n D^\frakp(\thetahat^\frakp)' (\widehat{V}^\frakp)^{-1} D^\frakp(\thetahat^\frakp).
\label{eq:def-Whatp}
\end{equation}

Lemmas \ref{lemma-scaled-indicator} and \ref{lemma-variance-estimator} directly lead to the main result of this section, which derives the asymptotic distribution of the PIM test for the target and the profile likelihood.
\begin{thm}\label{main-theorem}
Let Assumptions~\ref{ass-sampling}--\ref{ass-estimators} and~\ref{ass-variance} hold.
\begin{enumerate}
	\item As $n,T\to\infty$,
	\begin{equation}
	\Whattarget \dto \chi^2_q.
\label{eq:main-thm-target}
\end{equation}
\item If $n,T\to\infty$ such that $n/T^2\to\rho\in[0,\infty)$, then
\begin{equation}
\Whatprofile \dto \chi^2_q(\lambda_\rho),
\qquad
\lambda_\rho=\rho\,\kappa(\theta_0)'V^{-1}\kappa(\theta_0).
\label{eq:main-thm-profile}
\end{equation}
\end{enumerate}
\end{thm}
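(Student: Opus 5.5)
The theorem is a continuous-mapping consequence of Lemmas~\ref{lemma-scaled-indicator} and~\ref{lemma-variance-estimator}. I would treat both parts in parallel: write $Z_n\defined\sqrt{n}D(\thetahat)$ and note that $\calW=Z_n'\Vhat(\thetahat)^{-1}Z_n$. The plan is to obtain the weak limit of $Z_n$, show that $\Vhat(\thetahat)^{-1}$ converges in probability to $V^{-1}$, and combine these by Slutsky's theorem and the continuous mapping theorem.

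For part 1, Lemma~\ref{lemma-scaled-indicator}(1) gives $Z_n^\frakt=\sqrt{n}D^\frakt(\theta_0)+\op(1)$ as $n,T\to\infty$, since the remainders $\Op(n^{-1/2})+\Op(T^{-1/2})$ vanish. Together with $\sqrt{n}D^\frakt(\theta_0)\dto\No(0,V)$, Slutsky's theorem yields $Z_n^\frakt\dto\No(0,V)$. Lemma~\ref{lemma-variance-estimator} gives $\Vhatstar\defined\Vhat^\frakt(\thetahatstar)=V+\op(1)$. Because $V$ is nonsingular by Assumption~\ref{ass-variance}, matrix inversion is continuous at $V$. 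Hence $(\Vhatstar)^{-1}=V^{-1}+\op(1)$, and in particular $\Vhatstar$ is invertible with probability approaching one, so $\calW^\frakt$ is well defined on an event whose probability tends to one. By the continuous mapping theorem applied to $(z,A)\mapsto z'A^{-1}z$ at $(Z,V)$, we get $\calW^\frakt\dto Z'V^{-1}Z$ with $Z\sim\No(0,V)$. Writing $Z=V^{1/2}\xi$ with $\xi\sim\No(0,I_q)$ gives $Z'V^{-1}Z=\xi'\xi\sim\chi^2_q$.

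For part 2, under $n/T^2\to\rho$, Lemma~\ref{lemma-scaled-indicator}(2) gives $Z_n^\frakp\dto\No(\sqrt{\rho}\kappa(\theta_0),V)$. Lemma~\ref{lemma-variance-estimator}, applied to $\ell^\frakp$, gives $\Vhatp=V+\op(1)$; note that the limit is the same target-based $V$. The same continuous mapping argument then gives $\calW^\frakp\dto(\mu+V^{1/2}\xi)'V^{-1}(\mu+V^{1/2}\xi)$, where $\mu=\sqrt{\rho}\kappa(\theta_0)$. This equals $\|V^{-1/2}\mu+\xi\|^2$, a noncentral $\chi^2_q$ with noncentrality $\|V^{-1/2}\mu\|^2=\rho\,\kappa(\theta_0)'V^{-1}\kappa(\theta_0)=\lambda_\rho$. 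When $\rho=0$ this reduces to the central case.

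The argument is routine once the lemmas are granted. The only point needing care is the use of Lemma~\ref{lemma-variance-estimator} in the profile case. One must confirm that the bias of $D^\frakp$ does not leak into $\Vhatp$. It does not, because $\Vhat$ is an uncentered second moment: the bias contributes $\Op(1/T^2)$ to the mean of $D_iD_i'$, which is $o(1)$. This is exactly what that lemma delivers, so I would simply cite it.
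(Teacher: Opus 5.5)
Your proposal is correct and matches the paper's proof: take the weak limits of $\sqrt{n}D(\thetahat)$ from Lemma~\ref{lemma-scaled-indicator} and the consistency $\Vhat(\thetahat)=V+\op(1)$ from Lemma~\ref{lemma-variance-estimator}, invert using nonsingularity of $V$, and apply continuous mapping to the quadratic form to obtain $\chi^2_q$ and $\chi^2_q(\lambda_\rho)$. Your side remark on why the profile bias does not leak into $\Vhatp$ is only heuristic, since it covers just the squared $O(T^{-1})$ mean; the paper's lemma also controls the fluctuation of $D_i^\frakp-D_i^\frakt=\Delta_i$ via $\E_0[\Delta_i^2]=O(T^{-1})$ and Cauchy--Schwarz. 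Because you cite the lemma rather than rely on the remark, nothing is lost.
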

Theorem~\ref{main-theorem} highlights two practical consequences.
The more surprising one concerns incidental parameters. Under rectangular
asymptotics the profile MLE $\thetahat^\frakp$ is biased at order $O(1/T)$,
and Wald tests for $\theta$ based on it are unreliable. In contrast, 
when $n/T^2\to 0$, $\Whatprofile$ has a central $\chi^2_q$ limit even
though estimation uses the same biased $\thetahat^\frakp$. When
$n/T^2\to\rho\in(0,\infty)$, the distortion is confined to the explicit
noncentrality $\lambda_\rho$ in \eqref{eq:main-thm-profile}, which contrasts with the explosive bias of the Wald test statistic. A specification
check can therefore be first-order valid in panels that are still too short
for conventional asymptotic inference on $\theta$.
The second consequence is computational. Because plug-in estimation error is
asymptotically negligible for $D(\thetahat)$, the feasible statistic uses only
the sample second-moment matrix of the individual indicators
$D_i(\thetahat)$. Third derivatives of the log-likelihood, which have long
impeded routine use of the cross-sectional IM test
\citep{chesher1983information,lancaster1984covariance,orme1990small}, are
not required. For the target (and target-equivalent) PIM, the null limit is
central $\chi^2_q$ without any restriction on $n/T^2$.
\begin{rem}
The IM test is the leading example of a likelihood specification test based on Bartlett identities. Related procedures include conditional moment diagnostics that exploit the same identities \citep{newey1985,tauchen1985} and tests of higher-order Bartlett restrictions. The rate argument developed here is not special to the second identity, since whenever the indicator is an average across individuals of within-individual likelihood moments whose $\sqrt{n}$-scaled variance remains of order one as $T\to\infty$, plug-in of $\thetahat$ becomes negligible and incidental-parameter bias enters only through $\sqrt{n}/T$. The same qualitative conclusions---redundancy of the next order of derivatives in the variance estimator, and first-order robustness of the profile statistic under rectangular asymptotics---should therefore extend to other Bartlett-identity tests in panels with fixed effects. A full treatment is left for future work.
\end{rem}
\begin{rem}
    The results in Theorem \ref{main-theorem} crucially depend on the definition of the individual indicator. 
Write $\ell_{it,1}(\theta)$ and $\ell_{it,2}(\theta)$ for the period-$t$ contributions to the score and Hessian of $\ell_i(\theta)$, so that $\elliOne(\theta)=T^{-1}\sumt\ell_{it,1}(\theta)$ and $\elliTwo(\theta)=T^{-1}\sumt\ell_{it,2}(\theta)$. Then,
\[
D_i(\theta)
=
T^{-1}\sumt\bigl(\ell_{it,1}(\theta)\ell_{it,1}(\theta)'+\ell_{it,2}(\theta)\bigr)
+
T^{-1}\sum_{t\neq s}\ell_{it,1}(\theta)\ell_{is,1}(\theta)',
\]
where the second (cross-period) sum is why $\var_0(D_i(\theta_0))$ remains bounded as $T\to\infty$, and why the sample indicator must be scaled by $\sqrt{n}$. Because those products are mean-zero at $\theta_0$ under independence across time, an apparently natural alternative is to retain only the period-level terms
\[
\widetilde{D}_i(\theta)
=
\frac{1}{T}\sumt\bigl(\ell_{it,1}(\theta)\ell_{it,1}(\theta)'+\ell_{it,2}(\theta)\bigr),
\qquad
\widetilde{D}=n^{-1}\sumi\widetilde{D}_i.
\]
For this choice, $\var_0(\widetilde{D}(\theta_0))=O((nT)^{-1})$, so the corresponding statistic must be scaled by $\sqrt{nT}$. The plug-in error replacing $\theta_0$ by  $\thetahat$ is then of the same order as the fluctuation of the indicator, and, as in the pure cross-sectional case, the variance estimator requires third derivatives of $\ell$. In addition,  the $O(T^{-1})$ incidental-parameter bias of the profile likelihood indicator is multiplied by $\sqrt{nT}$ rather than $\sqrt{n}$, resulting in an asymptotic bias under rectangular asymptotics.
\end{rem}

\section{Parametric bootstrap}\label{sec:bootstrap}
While Theorem~\ref{main-theorem} shows that the PIM is reliable even when $n$ is substantially larger than $T$, size distortions can remain when asymptotic inference for the profile statistic uses central $\chi^2_q$ critical values in the stabilizing regime $\rho\in(0,\infty)$. Residual finite-sample distortions of that feasible statistic may also persist in short panels. In these cases we recommend a parametric bootstrap of the profile PIM. \citet{horowitz1994bootstrap} motivates a parametric bootstrap for the cross-sectional information-matrix test. The construction follows \citet{higgins2024bootstrap}. Formal bootstrap validity is not inherited from \citet{higgins2024bootstrap}, because their result applies to the MLE and the likelihood-ratio statistic under rectangular asymptotics, and their delta method (Theorem~2) does not apply directly to the PIM statistic. We therefore establish a separate first-order bootstrap validity result for the profile statistic under neighborhood-uniform versions of the $\E_0$ primitives in Section~\ref{sec:ass-bootstrap}.

Following \citet{higgins2024bootstrap}, bootstrap outcomes are drawn recursively from the fitted profile transition densities at $(\thetahat^\frakp,\{\alphahat_i\})$, holding covariates and initial conditions fixed. On each bootstrap panel one recomputes  the profile bootstrap PIM statistic $\calW^{\frakp*}$. Write $P^*$ for the bootstrap probability measure conditional on the original sample, and $P_0$ for the sampling probability under the true conditional density with parameter $(\theta_0',\alpha_{10},...,\alpha_{n0})'$. We write $\dtostar$ for weak convergence under $P^*$, understood throughout as holding in probability under the sampling measure $P_0$. Explicit transition densities and implementation details are given in Appendix~\ref{sec:mc-implementation}.

\begin{thm}\label{thm:bootstrap}
Let Assumptions~\ref{ass-sampling}--\ref{ass-estimators},~\ref{ass-variance}, and~\ref{ass-bootstrap-uniform}--\ref{ass-bootstrap-bias} hold, and suppose $n/T^2\to\rho\in[0,\infty)$. Conditionally on the sample, under the parametric bootstrap for the profile pseudo-likelihood,
\[
\calW^{\frakp*}\dtostar \chi^2_q(\lambda_\rho)
\quad\text{in probability},
\]
with $\lambda_\rho=\rho\,\kappa(\theta_0)'V^{-1}\kappa(\theta_0)$ as in Theorem~\ref{main-theorem}. Consequently,
\[
\sup_{x\in\R}\bigl|P^*(\calW^{\frakp*}\le x)-P_0(\Whatprofile\le x)\bigr|\to_p 0,
\]
so that bootstrap critical values are consistent for the quantiles of the limiting law of the profile PIM statistic.
\end{thm}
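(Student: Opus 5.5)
The plan is to transfer the proof of Theorem~\ref{main-theorem}(2) to the bootstrap world, with the fitted parameters $(\thetahat^\frakp,\{\alphahat_i\})$ playing the role of the true values. Write $\theta^*$ for the bootstrap profile MLE, $D^{\frakp*}$ and $\Vhat^{\frakp*}$ for the bootstrap indicator and variance estimator, and let $D^{\frakt*}$ denote the bootstrap target indicator, i.e.\ the target indicator built from the fitted transition densities. Since each ingredient of the original argument was derived from $\E_0$ primitives, the work consists in redoing the three steps under $P^*$. Assumptions~\ref{ass-bootstrap-uniform}--\ref{ass-bootstrap-bias} should make the $\E_0$ bounds hold uniformly over a shrinking neighborhood of $(\theta_0,\{\alphaio\})$. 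Because the fitted parameters lie in that neighborhood with $P_0$-probability tending to one, every statement then holds in $P_0$-probability.

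\emph{Step 1 (bootstrap expansion).} Repeat the Taylor argument of Lemma~\ref{lemma-scaled-indicator}(2) on the bootstrap panel. The goal is
\[
\sqrt{n}D^{\frakp*}(\theta^*)=\sqrt{n}D^{\frakt*}(\thetahat^\frakp)+\sqrt{n}\,T^{-1}\kappa^*_{nT}+\opstar(1),
\]
where $\kappa^*_{nT}$ is the profile score/information bias functional evaluated at the fitted parameters. The $\Opstar(n^{-1/2})+\Opstar(T^{-1/2})$ plug-in terms are controlled by the rates of Lemma~\ref{lemma-rates} made uniform in the neighborhood. Under $n/T^2\to\rho$, the bias term becomes $\sqrt{\rho}\,\kappa^*_{nT}+\op(1)$.

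\emph{Step 2 (bootstrap CLT and bias limit).} Under $P^*$, the summands $D_i^{\frakt*}(\thetahat^\frakp)$ are independent across $i$ and mean zero by the Bartlett identities of the fitted model. Their variance matrix $V^*_{nT}$ converges to $V$ in probability by uniform continuity of $\var_0(\sqrt{n}D^\frakt)$ in the parameters. A Lyapunov CLT, using uniform fourth-moment bounds from Assumption~\ref{ass-bootstrap-uniform}, then gives $\sqrt{n}D^{\frakt*}(\thetahat^\frakp)\dtostar\No(0,V)$.

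The main obstacle is Assumption~\ref{ass-bootstrap-bias}: we need $\kappa^*_{nT}\to_p\kappa(\theta_0)$. Here $\kappa^*_{nT}$ is an average over $i$ of functionals evaluated at $\alphahat_i$, which are only $\Op(T^{-1/2})$-consistent individually, and noisy individual estimates can shift such an average by $O(T^{-1})$. First I would show that $\kappa^*_{nT}-\kappa_{nT}(\theta_0)=\Op(\max_i|\alphahat_i-\alphaio|)+\Op(|\thetahat^\frakp-\theta_0|)$ via a mean-value expansion with uniformly bounded derivatives. Since $\kappa$ is multiplied only by $\sqrt{\rho}$, consistency is sufficient, and uniform consistency of $\max_i|\alphahat_i-\alphaio|$ follows from standard fixed-effects results. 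If that route fails, I would invoke Assumption~\ref{ass-bootstrap-bias} directly as the high-level condition.

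\emph{Step 3 (variance and conclusion).} Repeat Lemma~\ref{lemma-variance-estimator} under $P^*$ with a bootstrap Chebyshev argument to get $\Vhat^{\frakp*}(\theta^*)=V+\opstar(1)$. The continuous mapping theorem then yields $\calW^{\frakp*}\dtostar\chi^2_q(\lambda_\rho)$ in probability. By Theorem~\ref{main-theorem}(2), $\Whatprofile$ has the same limit law. Since the noncentral $\chi^2_q$ distribution function is continuous, Pólya's theorem upgrades both weak convergences to uniform convergence of distribution functions. The triangle inequality then gives the sup-norm statement. Applying it along subsequences, so that convergence in probability becomes almost sure, handles the in-probability qualification, and consistency of the bootstrap quantiles follows because the limit distribution function is strictly increasing.
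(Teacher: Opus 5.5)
Your proposal is correct and follows essentially the same route as the paper. Both arguments transfer the expansion of Lemma~\ref{lemma-scaled-indicator}(ii) to $P^*$ via Assumption~\ref{ass-bootstrap-indicator}, apply a conditional Lyapunov CLT and bootstrap variance consistency (the paper's Lemmas~\ref{lemma-bootstrap-clt} and~\ref{lemma-bootstrap-variance}), match the noncentrality through Assumption~\ref{ass-bootstrap-bias}, and finish with continuous mapping and P\'olya's theorem. The differences are cosmetic: you linearize onto the bootstrap target indicator $D^{\frakt*}(\thetahat^\frakp)$ rather than onto the $P^*$-centered profile indicator, and you sketch a justification of the bias-continuity condition that the paper simply imposes.
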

Theorem~\ref{thm:bootstrap} is proved under the neighborhood-uniform bootstrap conditions in Section~\ref{sec:ass-bootstrap}. When $\rho>0$, the profile bootstrap replicates the noncentral limit $\chi^2_q(\lambda_\rho)$ rather than a central $\chi^2_q$ law. Consequently, the bootstrap yields critical values that track the sampling distribution of $\Whatprofile$. While the target PIM does not suffer from size distortions as $n,T\to\infty$ with $n/T^2\to\rho>0$, we find that a similar parametric bootstrap procedure outlined in Appendix~\ref{sec:mc-implementation} improves the size of the target PIM in small panel data sets. This is in line with the findings of \cite{horowitz1994bootstrap}, who finds that the parametric bootstrap improves the empirical size of the cross-sectional IM test for correctly specified likelihoods.
\section{Simulations}\label{sec:simulations}
In this section, we analyze the small-sample properties of the PIM test in static logit, static probit, and dynamic probit under the designs in Section~\ref{sec:sim-setup}. Neyman--Scott simulation results are deferred to Subsection~\ref{sec:add-sim-ns}.

\subsection{Simulation Setup}\label{sec:sim-setup}
\noindent
We conduct Monte Carlo experiments for three panel binary-choice models: static logit, static probit, and dynamic probit. In each design, we report results for a balanced panel with $n\in\{125,250,500,1000\}$ and $T\in\{5,10,15,20,25,30,40,50\}$. 

\noindent 
Let $m_{it}(\theta)=\theta_1 X_{it,1}+\theta_2 X_{it,2}$. In the correctly-specified design, the outcomes are generated as
\begin{equation}\label{eq:assumed-model-simulation}
\Pr(\Yit=1|\calXiT,\alphaio;\theta_0)=F(m_{it}(\theta_0)+\alphaio),
\end{equation}
where $F$ corresponds to the standard logistic CDF in static logit and to the standard normal CDF in static probit, respectively. The covariates $X_{it,1}$ and $X_{it,2}$ are standard normal with $\mathrm{Corr}(X_{it,1},X_{it,2})=0.5$. The individual effects satisfy $\alphaio\sim\No(\bar{X}_{i,1},1)$ with $\bar{X}_{i,1}=T^{-1}\sumt X_{it,1}$, and we set $\theta_0=(\theta_{0,1},\theta_{0,2})'=(1,-1)'$. Estimation always assumes model \eqref{eq:assumed-model-simulation}. We further consider two data generating processes that lead to a misspecified likelihood. In the heteroskedastic design, each unit has a latent scale $\sigma_i=\exp(0.75\,\xi_i)$ with $\xi_i\stackrel{\mathrm{iid}}{\sim}\No(0,1)$, and outcomes follow
\begin{equation*}
\Pr(\Yit=1\mid\calXiT,\alphaio,\sigma_i;\theta_0)
=F\!\left(\frac{m_{it}(\theta_0)+\alphaio}{\sigma_i}\right),
\end{equation*}
whereas in the omitted-interaction design \citep[Section~3.2, eq.~(7)]{horowitz1994bootstrap},
\begin{equation*}
\Pr(\Yit=1\mid\calXiT,\alphaio;\theta_0)
=F\!\left(m_{it}(\theta_0)+1.5\,X_{it,1}X_{it,2}+\alphaio\right).
\end{equation*}
In dynamic probit, $m_{it}(\theta)$ is replaced by
\[
m_{it}^{\mathrm{dyn}}(\theta,\gamma)
=\theta_{1} X_{it,1}+\theta_{2} X_{it,2}+\gamma Y_{i,t-1},
\]
with $\gamma_0=0.5$. 

In each of the 5000 Monte Carlo iterations, we compute the PIM test based on the target and the profile likelihood as defined in \eqref{eq:def-Whatp}. For the asymptotic PIM, we compare the statistic with the $95\%$ quantile of $\chi^2_q$, where $q=K(K+1)/2$ is the number of distinct entries of the information-matrix indicator ($K=2$ and $q=3$ in static logit and probit, $K=3$ and $q=6$ in dynamic probit). The Wald test uses $\chi^2_K$. The bootstrap is implemented as in Appendix~\ref{sec:mc-implementation} and uses $B=99$ bootstrap draws. We further report the rejection rates of the Wald test, which is asymptotically biased under rectangular asymptotics. The comparison of the Wald test and the PIM test highlights the practical relevance of the asymptotic regime since the PIM test is expected to perform well under rectangular asymptotics, whereas the Wald test is expected to suffer from severe size distortions. The rejection rates under the misspecified designs are reported in Appendix~\ref{sec:add-sim-results}.

\subsection{Simulation Results}\label{sec:sim-results}
\noindent
Tables~\ref{tab:logitrejectionrates}--\ref{tab:dynamicprobitrejectionrates} report rejection rates under the correctly specified DGP. Entries give the fraction of rejections at nominal size $5\%$ for the asymptotic IM test (IM), the bootstrap IM test (Boot), and the Wald test (Wald).

Several patterns recur across the three designs. The infeasible target likelihood behaves like a correctly specified cross-sectional IM test, since under the null its limiting law is central $\chi^2$ without requiring restrictions on the relative growth rates of $n$ and $T$. Asymptotic critical values therefore work well once $n$ is large enough. For instance, at $n=1000$, the target PIM rejection rates are close to the nominal level at every $T$ in the static designs. In the dynamic probit, the rates remain mildly elevated, which is consistent with the remainder terms in Lemma \ref{lemma-scaled-indicator}. In short panels the target test can still be liberal (up to $18.5\%$ in static probit and $28.6\%$ in dynamic probit at $T=5$ and $n=125$). However, the parametric bootstrap restores size close to $5\%$ throughout (e.g.\ $5.6\%$, $5.2\%$, and $6.6\%$ at that grid point in static logit, static probit, and dynamic probit), as observed for pure cross-sections in \citet{horowitz1994bootstrap}. In static logit, the conditional PIM tracks the target at every $(n,T)$ in Table~\ref{tab:logitrejectionrates}, consistent with the role of $\ell^\frakt$ as a placeholder for feasible pseudo-likelihoods in Section~\ref{sec:pseudo-lik}.

Turning to the profile PIM, asymptotic $\chi^2$ critical values can be severely oversized when $\rho>0$ is plausible, as in static logit with $T=5$, where rejection rises from $10.4\%$ at $n=125$ to $100\%$ at $n=1000$. Size distortions vanish much faster as $T$ increases than for Wald inference on $\theta$ based on the same profile estimator and asymptotic $\chi^2$ critical values. Holding $n=1000$ fixed in static logit, asymptotic profile rejection falls from $100\%$ at $T=5$ to $5.8\%$ at $T=50$, whereas Wald rejection remains $53\%$ at $T=30$ and $34\%$ at $T=50$. Moderately long panels can therefore support a reliable specification test before conventional asymptotic inference on $\theta$ becomes trustworthy.

When $T$ is still too short for asymptotic profile critical values, the parametric bootstrap removes most of the incidental-parameter size distortion (static logit, $T=5$, $n=500$: $98.9\%$ to $8.7\%$). At $T=5$ it does not fully restore the nominal level when $n$ is large. For instance, in static probit, profile bootstrap rejection rises with $n$ from $8.3\%$ at $n=125$ to $20.6\%$ at $n=1000$, and in dynamic probit it is $11.8\%$ at $n=1000$. By $T=10$ the same bootstrap is close to $5\%$ in all three designs, while asymptotic profile rejection at $n=1000$ is still $83.9\%$ in static logit.

\noindent
Power against misspecification is reported in Appendix~\ref{sec:add-sim-results}. Under the unit-scale heteroskedastic design of Subsection~\ref{sec:add-sim-hetero}, the PIM has strong power once $T$ is only moderately large. In static logit, bootstrap target rejection already reaches $98.2\%$ at $T=10$ and $n=500$, and $100\%$ at $T=10$ and $n=1000$; at the same grid points the Wald test based on $\thetahatp$ remains far weaker ($38.7\%$ and $60.6\%$). Even at $T=5$ and $n=1000$, bootstrap target rejection is $70.1\%$. Dynamic probit follows the same pattern once $T$ is moderate (profile Boot $92.7\%$ and target Boot $80.2\%$ at $T=10$ and $n=500$), though short-$T$ power is weaker than in the static designs. The omitted-interaction design of \citet{horowitz1994bootstrap} (Subsection~\ref{sec:add-sim-omitted}) is likewise powerful at moderate $T$: in static logit with $T=15$ and $n=1000$, bootstrap target rejection is $80.0\%$, and in static probit with $T=10$ and $n=500$ it is $72.4\%$. The wrong-link design (Subsection~\ref{sec:add-sim-link}) is harder to detect: for completed cells with $T\ge 10$, bootstrap rejection rates for the target and conditional PIM are moderate, typically between $10\%$ and $16\%$ at $n=1000$, reflecting that a link-function error is a relatively subtle deviation from the fitted logit model. Across these alternatives, the conditional likelihood again tracks the target closely in static logit, and correcting incidental-parameter size distortions through the bootstrap does not by itself create power where the correctly centered indicator has little signal.

\begin{landscape}
\begin{table}[p]
\centering
\caption{Rejection Rates for Static Logit Model}
\label{tab:logitrejectionrates}
\small
\setlength{\tabcolsep}{4.5pt}
\begin{tabular}{llcccccccccccccccc}
\toprule
 & & \multicolumn{3}{c}{$n=125$} & & \multicolumn{3}{c}{$n=250$} & & \multicolumn{3}{c}{$n=500$} & & \multicolumn{3}{c}{$n=1000$} \\
\cmidrule(lr){3-5} \cmidrule(lr){7-9} \cmidrule(lr){11-13} \cmidrule(lr){15-17}
$T$ & Method & IM & Boot & Wald & & IM & Boot & Wald & & IM & Boot & Wald & & IM & Boot & Wald \\
\midrule
5 & Profile & 0.1040 & 0.0876 & 0.4277 & & 0.5968 & 0.0896 & 0.7120 & & 0.9894 & 0.0872 & 0.9432 & & 1.0000 & 0.0967 & 0.9978 \\
 & Conditional & 0.0868 & 0.0586 & 0.0432 & & 0.0562 & 0.0564 & 0.0492 & & 0.0454 & 0.0606 & 0.0512 & & 0.0358 & 0.0627 & 0.0477 \\
 & Target & 0.0966 & 0.0556 & 0.0450 & & 0.0646 & 0.0546 & 0.0498 & & 0.0564 & 0.0594 & 0.0488 & & 0.0429 & 0.0597 & 0.0491 \\
\addlinespace
10 & Profile & 0.0436 & 0.0668 & 0.2436 & & 0.1086 & 0.0594 & 0.4274 & & 0.3816 & 0.0578 & 0.7202 & & 0.8394 & 0.0494 & 0.9542 \\
 & Conditional & 0.0930 & 0.0574 & 0.0524 & & 0.0742 & 0.0608 & 0.0490 & & 0.0520 & 0.0588 & 0.0506 & & 0.0436 & 0.0536 & 0.0516 \\
 & Target & 0.0968 & 0.0574 & 0.0530 & & 0.0796 & 0.0654 & 0.0464 & & 0.0540 & 0.0568 & 0.0512 & & 0.0498 & 0.0592 & 0.0518 \\
\addlinespace
15 & Profile & 0.0458 & 0.0658 & 0.1648 & & 0.0516 & 0.0574 & 0.2910 & & 0.1468 & 0.0616 & 0.5308 & & 0.4082 & 0.0540 & 0.8330 \\
 & Conditional & 0.0952 & 0.0566 & 0.0508 & & 0.0798 & 0.0618 & 0.0548 & & 0.0574 & 0.0602 & 0.0526 & & 0.0500 & 0.0580 & 0.0524 \\
 & Target & 0.0986 & 0.0612 & 0.0500 & & 0.0804 & 0.0642 & 0.0550 & & 0.0570 & 0.0602 & 0.0526 & & 0.0472 & 0.0566 & 0.0502 \\
\addlinespace
20 & Profile & 0.0544 & 0.0630 & 0.1310 & & 0.0460 & 0.0574 & 0.2368 & & 0.0764 & 0.0534 & 0.4092 & & 0.2080 & 0.0558 & 0.7140 \\
 & Conditional & 0.1024 & 0.0604 & 0.0464 & & 0.0674 & 0.0550 & 0.0512 & & 0.0500 & 0.0502 & 0.0524 & & 0.0548 & 0.0622 & 0.0512 \\
 & Target & 0.1032 & 0.0600 & 0.0456 & & 0.0674 & 0.0544 & 0.0502 & & 0.0486 & 0.0500 & 0.0528 & & 0.0544 & 0.0614 & 0.0490 \\
\addlinespace
25 & Profile & 0.0592 & 0.0590 & 0.1178 & & 0.0482 & 0.0626 & 0.1984 & & 0.0592 & 0.0558 & 0.3630 & & 0.1326 & 0.0642 & 0.6142 \\
 & Conditional & 0.0970 & 0.0590 & 0.0480 & & 0.0772 & 0.0616 & 0.0564 & & 0.0582 & 0.0546 & 0.0478 & & 0.0556 & 0.0608 & 0.0512 \\
 & Target & 0.0952 & 0.0600 & 0.0468 & & 0.0770 & 0.0608 & 0.0564 & & 0.0594 & 0.0534 & 0.0478 & & 0.0546 & 0.0628 & 0.0518 \\
\addlinespace
30 & Profile & 0.0686 & 0.0638 & 0.1142 & & 0.0514 & 0.0636 & 0.1626 & & 0.0528 & 0.0616 & 0.2870 & & 0.0914 & 0.0546 & 0.5256 \\
 & Conditional & 0.1050 & 0.0610 & 0.0534 & & 0.0784 & 0.0626 & 0.0502 & & 0.0552 & 0.0558 & 0.0536 & & 0.0486 & 0.0556 & 0.0518 \\
 & Target & 0.1038 & 0.0618 & 0.0546 & & 0.0760 & 0.0610 & 0.0486 & & 0.0560 & 0.0560 & 0.0534 & & 0.0480 & 0.0552 & 0.0520 \\
\addlinespace
40 & Profile & 0.0712 & 0.0564 & 0.0932 & & 0.0574 & 0.0630 & 0.1396 & & 0.0492 & 0.0600 & 0.2160 & & 0.0598 & 0.0542 & 0.4208 \\
 & Conditional & 0.0992 & 0.0626 & 0.0492 & & 0.0796 & 0.0632 & 0.0518 & & 0.0620 & 0.0570 & 0.0408 & & 0.0514 & 0.0582 & 0.0522 \\
 & Target & 0.0996 & 0.0578 & 0.0494 & & 0.0780 & 0.0648 & 0.0514 & & 0.0596 & 0.0568 & 0.0424 & & 0.0514 & 0.0568 & 0.0512 \\
\addlinespace
50 & Profile & 0.0768 & 0.0582 & 0.0780 & & 0.0530 & 0.0576 & 0.1118 & & 0.0454 & 0.0598 & 0.1954 & & 0.0582 & 0.0622 & 0.3424 \\
 & Conditional & 0.1000 & 0.0606 & 0.0494 & & 0.0756 & 0.0580 & 0.0518 & & 0.0618 & 0.0578 & 0.0482 & & 0.0562 & 0.0618 & 0.0506 \\
 & Target & 0.1004 & 0.0624 & 0.0486 & & 0.0772 & 0.0600 & 0.0508 & & 0.0614 & 0.0572 & 0.0476 & & 0.0580 & 0.0602 & 0.0510 \\
\bottomrule
\end{tabular}
\end{table}
\end{landscape}

\begin{landscape}
\begin{table}[p]
\centering
\caption{Rejection Rates for Static Probit Model}
\label{tab:probitrejectionrates}
\small
\setlength{\tabcolsep}{4.5pt}
\begin{tabular}{llcccccccccccccccc}
\toprule
 & & \multicolumn{3}{c}{$n=125$} & & \multicolumn{3}{c}{$n=250$} & & \multicolumn{3}{c}{$n=500$} & & \multicolumn{3}{c}{$n=1000$} \\
\cmidrule(lr){3-5} \cmidrule(lr){7-9} \cmidrule(lr){11-13} \cmidrule(lr){15-17}
$T$ & Method & IM & Boot & Wald & & IM & Boot & Wald & & IM & Boot & Wald & & IM & Boot & Wald \\
\midrule
5 & Profile & 0.0224 & 0.0826 & 0.7476 & & 0.2210 & 0.1232 & 0.9640 & & 0.9058 & 0.1640 & 0.9998 & & 1.0000 & 0.2062 & 1.0000 \\
 & Target & 0.1852 & 0.0522 & 0.0456 & & 0.1252 & 0.0550 & 0.0474 & & 0.0894 & 0.0618 & 0.0532 & & 0.0560 & 0.0550 & 0.0456 \\
\addlinespace
10 & Profile & 0.0272 & 0.0616 & 0.4930 & & 0.0738 & 0.0630 & 0.7852 & & 0.4148 & 0.0522 & 0.9748 & & 0.9386 & 0.0542 & 0.9998 \\
 & Target & 0.1426 & 0.0572 & 0.0504 & & 0.0948 & 0.0516 & 0.0474 & & 0.0654 & 0.0538 & 0.0534 & & 0.0462 & 0.0518 & 0.0490 \\
\addlinespace
15 & Profile & 0.0378 & 0.0658 & 0.3390 & & 0.0452 & 0.0608 & 0.6076 & & 0.1624 & 0.0586 & 0.8934 & & 0.5738 & 0.0490 & 0.9946 \\
 & Target & 0.1278 & 0.0604 & 0.0514 & & 0.0858 & 0.0588 & 0.0464 & & 0.0632 & 0.0530 & 0.0550 & & 0.0490 & 0.0576 & 0.0530 \\
\addlinespace
20 & Profile & 0.0470 & 0.0604 & 0.2602 & & 0.0352 & 0.0584 & 0.4840 & & 0.0838 & 0.0526 & 0.7758 & & 0.3030 & 0.0508 & 0.9802 \\
 & Target & 0.1176 & 0.0566 & 0.0504 & & 0.0860 & 0.0572 & 0.0540 & & 0.0612 & 0.0556 & 0.0484 & & 0.0552 & 0.0576 & 0.0436 \\
\addlinespace
25 & Profile & 0.0526 & 0.0590 & 0.2212 & & 0.0400 & 0.0610 & 0.3938 & & 0.0614 & 0.0546 & 0.6904 & & 0.1900 & 0.0586 & 0.9300 \\
 & Target & 0.1102 & 0.0606 & 0.0540 & & 0.0884 & 0.0584 & 0.0516 & & 0.0618 & 0.0586 & 0.0498 & & 0.0596 & 0.0638 & 0.0476 \\
\addlinespace
30 & Profile & 0.0568 & 0.0576 & 0.1922 & & 0.0394 & 0.0584 & 0.3272 & & 0.0520 & 0.0572 & 0.5950 & & 0.1304 & 0.0586 & 0.8910 \\
 & Target & 0.1172 & 0.0632 & 0.0542 & & 0.0826 & 0.0584 & 0.0480 & & 0.0658 & 0.0572 & 0.0514 & & 0.0604 & 0.0648 & 0.0520 \\
\addlinespace
40 & Profile & 0.0614 & 0.0536 & 0.1556 & & 0.0412 & 0.0546 & 0.2504 & & 0.0420 & 0.0540 & 0.4636 & & 0.0848 & 0.0598 & 0.7600 \\
 & Target & 0.1062 & 0.0540 & 0.0566 & & 0.0816 & 0.0552 & 0.0504 & & 0.0636 & 0.0588 & 0.0468 & & 0.0536 & 0.0578 & 0.0492 \\
\addlinespace
50 & Profile & 0.0766 & 0.0612 & 0.1230 & & 0.0488 & 0.0580 & 0.2050 & & 0.0464 & 0.0606 & 0.3800 & & 0.0654 & 0.0592 & 0.6724 \\
 & Target & 0.1136 & 0.0634 & 0.0516 & & 0.0842 & 0.0608 & 0.0538 & & 0.0678 & 0.0638 & 0.0492 & & 0.0534 & 0.0566 & 0.0470 \\
\bottomrule
\end{tabular}
\end{table}
\end{landscape}

\input{simulations/panel_im_test_dyn_correct_review_table.tex}

\section{Conclusion}\label{sec:conclusion}

This paper extends the Information Matrix test to panel data  models with fixed effects. Explicitly accounting for fixed effects allows us to separate functional misspecification from time-invariant unobserved heterogeneity. We show that  the indicator fluctuates at rate $\sqrt{n}$ while estimation error vanishes as both $n$ and $T$ increase. As a consequence, testing the likelihood specification is much more robust to incidental parameter bias than tests about the common parameter, as the panel information matrix test statistic is asymptotically unbiased under rectangular asymptotics whereas Wald tests remain biased and thus unreliable. Our Monte Carlo evidence confirms these theoretical predictions and indicates that the PIM is reliable in moderately long panels, even when Wald tests remain badly sized. Finally, our statistic can be computed from first- and second-order likelihood derivatives only, removing some of the computational burden that made the information matrix test impractical in cross-sectional designs.

For future research, we note that the rate argument developed here is not special to the second Bartlett identity. Hence, other likelihood-based specification diagnostics that aggregate within-individual moment conditions at rate $\sqrt{n}$ should inherit the same incidental-parameter robustness, potentially unifying a class of panel specification tests. Moreover, the computational simplicity of the PIM and its robustness to incidental parameters make it a promising tool for assessing the credibility of structural panel models, especially in cases where likelihoods are simulated or otherwise costly to differentiate \citep{rust1987,aguirregabiria-mira-2007,dearing-blevins-2025}.

\newpage

\appendix

\section{Assumptions}\label{sec-ass}
The following conventions are used in the assumptions and proofs. Subscripts indicate the level of aggregation: symbols without a subscript refer to sample averages across $i$ (e.g., $\ell(\theta)=n^{-1}\sumi\ell_i(\theta)$ and $D(\theta)=n^{-1}\sumi D_i(\theta)$); a subscript $i$ refers to individual $i$; and subscripts $i$ and $t$ together refer to individual $i$ in period $t$. Derivatives of the joint log-likelihood $(\theta,\alpha_i)\mapsto\elliT(\theta,\alpha_i)$ with respect to $\theta$ or $\alpha_i$ are indicated by letters in the index (e.g., $\elliTtheta(\theta,\alpha_i)=\partial_\theta\elliT(\theta,\alpha_i)$). Derivatives of a pseudo-likelihood $\ell_i(\theta)$ with respect to $\theta$ continue the numeric convention of Section~\ref{sec:pseudo-lik}: $\ell_{i,k}$ is the $k$-th derivative, with sample analogues $\ell_k=n^{-1}\sumi\ell_{i,k}$. Expectations and centered versions are $\lambda_i(\theta)=\E_0\ell_i(\theta)$ and $l_i(\theta)=\ell_i(\theta)-\lambda_i(\theta)$, and likewise at the sample level with $\lambda(\theta)$ and $l(\theta)$. Derivatives of the IM indicator $D_i$ follow the same numeric convention: $D_{i,k}=\partial^k_\theta D_i$ and $D_k=n^{-1}\sumi D_{i,k}$. We frequently omit the argument when evaluating at $\theta=\theta_0$.

At the period level, $\ell_{it,j}(\theta)$ denotes the period-$t$ contribution to the $j$-th derivative of $\ell_i(\theta)$, so that $\ell_{i,j}(\theta)=T^{-1}\sum_{t=1}^T\ell_{it,j}(\theta)$. Centered period contributions are $l_{it,j}(\theta)=\ell_{it,j}(\theta)-\E_0[\ell_{it,j}(\theta)]$ and $l_{i,j}(\theta)=T^{-1}\sum_{t=1}^T l_{it,j}(\theta)$. The same pattern extends to higher orders: $\ell_{it,k}(\theta)$ is the period-$t$ contribution to the $k$-th derivative, $\lambda_{it,k}(\theta)=\E_0[\ell_{it,k}(\theta)]$, and $\ell_{i,k}(\theta)=T^{-1}\sum_{t=1}^T\ell_{it,k}(\theta)$. Mixed partials of the joint log-likelihood are written $\ell_{i,k,m}(\theta,\alpha_i)$ for order $k$ in $\theta$ and order $m$ in $\alpha_i$, with $\lambda_{i,k,m}=\E_0\ell_{i,k,m}$. Superscripts $^c$ and $^o$ are reserved for corrected and oracle variants when needed; they are not used in the main results.

This section collects the assumptions for the asymptotic theory. Recall $\Iio\defined (Y_{i,1-L},\ldots,Y_{i0})$ and write $
\calG_i\defined\sigma\bigl((\Xit)_{t\in\N},\,\Iio,\,\alphaio\bigr)$. For each $i\in\N$ and $m\in\N$, define the conditional $\alpha$-mixing coefficient
\[
\mfa_i(m)\defined
\sup_{t\in\N}\;
\sup_{\substack{A\in\sigma(Y_{i1},\ldots,Y_{it})\\
B\in\sigma(Y_{i,t+m},Y_{i,t+m+1},\ldots)}}
\bigl|\Pr(A\cap B\mid\calG_i)-\Pr(A\mid\calG_i)\Pr(B\mid\calG_i)\bigr|
\]
(cf.\ \citealp{prakasa-rao-conditional}, Definition~4), with the convention $\mfa_i(m)\defined 1$ for $m\le 0$.

\begin{ass}[Sampling and dependence]\label{ass-sampling}
\textup{(i)}~The random variables $((\Yit,\Xit)_{t \in \N}, \Iio, \alphaio)$ are i.i.d.~for $i\in\N$\textup{;}
\textup{(ii)}~For each $i\in\N$, the distribution of $((\Yit,\Xit)_{t \in \N}, \Iio, \alphaio)$
is strictly stationary\textup{;}
\textup{(iii)}~Geometric mixing. The sequence $(\Yit)_{t\in\N}$ is conditionally $\mfa_i$-mixing given $\calG_i$, and there exists $a\in(0,1)$ such that
$
\sup_{i\in\N}\mfa_i(m)\overset{\wpone}{\lesssim} a^{m}
\qquad\text{for all }m\in\N.
$
\textup{;}
\textup{(iv)}~There exists an integer $L_*\ge 1$,
independent of $i$, $n$, $T$, and of $j\in\{1,\ldots,4\}$, such that each
centered period contribution $l_{it,j}$ (and each centered mixed-joint period
contribution $l_{it,k,m}$ used below) is, given $\calG_i$, a measurable
function of $(Y_{i,t-L_*+1},\ldots,Y_{it})$.
\end{ass}

Part~\textup{(i)} requires independence across individuals and a common distribution, which is standard in microeconometric applications. The identical-distribution part can be relaxed to allow for cross-sectional heterogeneity at the cost of more complicated notation.
Part~\textup{(ii)} rules out time effects $\gamma_t$ and other calendar-time parameters in the process $(\Yit,\Xit)_{t \in \N}$. Allowing an infinite number of such time effects in addition to the fixed effects would substantially increase the technical complexity of the proofs
(cf., e.g., \citealp{fernandez-val-weidner}) while obscuring the main contribution of the paper.
Part~\textup{(iii)} is the usual geometric conditional $\alpha$-mixing restriction on the primitive outcomes used by
\citet{hahn-kuersteiner-nonlinear-panels} and
\citet{fernandez-val-weidner}.
Part~\textup{(iv)} holds in the static and finite-lag dynamic models considered here since period likelihood contributions (and their $\theta$-derivatives) depend on at most finitely many lags of $Y_{it}$, so one may take $L_*$ of the order of the lag length already encoded in $\Iio$.
Together with the moment envelope in Assumption~\ref{ass-moments}\textup{(i)}, parts~\textup{(iii)}--\textup{(iv)}
imply that the centered period contributions are themselves geometrically conditionally mixing in the sense used by
\citet{hahn-kuersteiner-nonlinear-panels},
and yield cumulant summability of those contributions
(Lemma~\ref{lemma-cumulants}) which is needed in Lemma~\ref{lemma-mixing}. Details of the mixing-to-cumulant argument are collected in
Appendix~\ref{sec-intermediate}.

\begin{ass}[Moments and derivative bounds]\label{ass-moments}
Omit the argument $\theta_0$ when evaluating at $\theta_0$.
\textup{(i)}~Ninth-moment envelope. There exists a $\calG_i$-measurable function
$d_i \ge 0$ of $((\Xit)_{t\in\N},\Iio,\alphaio)$ such that, for all $j\in\{1,\ldots,4\}$,
$\sup_{t \in \N}\E_0\bigl[|l_{it,j}|^9\bigr] \overset{\wpone}{\le} d_i$
with $\E[d_i^{4}]<\infty$. The same envelope is imposed on the centered period contributions $l_{it,k,m}$ to the mixed joint partials for $k\in\{0,\ldots,4\}$ and $m\in\{1,\ldots,4\}$.
\textup{(ii)}~Uniform $L^{16}$ envelopes. There exists a function
$D_i \ge 0$ of $((\Xit)_{t\in\N},\Iio,\alphaio)$ such that, for $k\in\{1,\ldots,7\}$,
\begin{equation}\label{eq:uniform-bound}
\E_0\Bigl[\sup_{\theta\in\Theta}|\ell_{it,k}(\theta)|^{16}\Bigr] \overset{\wpone}{\le} D_i
\qquad\text{and}\qquad
\sup_{\theta\in\Theta}|\lambda_{it,k}(\theta)|^{16} \overset{\wpone}{\le} D_i,
\end{equation}
with $\E[D_i^{4}]<\infty$.
For the joint loglikelihood $\elliT(\theta,\alpha_i)$ and integers $k\in\{0,\ldots,10\}$, $m\in\{1,\ldots,10\}$, write $\ell_{i,k,m}(\theta,\alpha_i)$ for the mixed partial of order $k$ in $\theta$ and order $m$ in $\alpha_i$, and $\lambda_{i,k,m}(\theta,\alpha_i)\defined \E_0\ell_{i,k,m}(\theta,\alpha_i)$. Then
\begin{equation}\label{eq:uniform-bound-mixed}
\E_0\Bigl[\sup_{\theta\in\Theta,\,\alpha_i\in\calJ}|\ell_{i,k,m}(\theta,\alpha_i)|^{16}\Bigr] \overset{\wpone}{\le} D_i
\qquad\text{and}\qquad
\sup_{\theta\in\Theta,\,\alpha_i\in\calJ}|\lambda_{i,k,m}(\theta,\alpha_i)|^{16} \overset{\wpone}{\le} D_i.
\end{equation}
\textup{(iii)}~Neighborhood nondegeneracy. There exist a neighborhood $N\subset\Theta$ of $\theta_0$ and a constant $c>0$ such that
$\inf_{\theta\in N}|\lambda_{2}(\theta)|\ge c$,
$\inf_{\theta\in N}|\ell_{2}(\theta)|\ge c$, and
$\inf_{\theta\in N}|\lambda_{i,0,2}(\theta,\alpha_i^\frakt(\theta))|\ge c$.
\end{ass}

Part~\textup{(i)} supplies the integrable envelope $d_i$ used in product-moment arguments and in Lemma~\ref{lemma-cumulants} (the order strictly above~$8$ ensures a positive geometric mixing exponent for eighth-order cumulants).
Part~\textup{(ii)} is parallel in form, as a covariate/fixed-effect envelope $D_i$ bounds the $\E_0$-moment of $\sup_{\theta}|\ell_{it,k}|$ (and of the mixed partials), while the corresponding $\lambda$ bounds hold almost surely given $\calG_i$ because $\lambda_{it,k}$ and $\lambda_{i,k,m}$ are already $\E_0$-measurable.
These envelopes imply the usual $\Op(1)$ bounds by Markov's inequality and provide $\calG_i$-measurable dominates for remainder numerators. The seventh pure-$\theta$ order supports dominated differentiability.
Part~\textup{(iii)} keeps the $\theta$-Hessian of the pseudo-likelihood and the $\alpha_i$-Hessian of the joint likelihood bounded away from zero near $\theta_0$.
Compactness of $\Theta$ and $\calJ$ is taken from Assumption~\ref{ass-estimators}. Parts~\textup{(ii)}--\textup{(iii)} are stated over these compact sets for notational convenience, though the expansions only use them on a neighborhood of $(\theta_0,\alphaio)$.

The next high-level assumption allows us to deal with the remainder terms in various expansions.
It is in the spirit of \citet[Assumptions~2.3 and C.9]{schumann-info} and of high-level remainder conditions used elsewhere in the panel literature (e.g., \citealp{hahn-newey-nonlinear-panels}).

\begin{ass}\label{ass-remainder}
Taking the derivative or the expectation of a remainder term does not alter its rate
\textup{(}as $T\to\infty$\textup{)}.
\end{ass}
The remainders that arise in the proofs are of the form \(R_i(\theta)=T^{-k}r_i(\theta)\), where \(r_i(\theta)\) is a sum of ratios whose numerators are products of (centered) likelihood derivatives and whose denominators are Hessian (or information) blocks.
Assumption~\ref{ass-moments} supplies the ingredients for showing Assumption \ref{ass-remainder} under more primitive conditions, but at the cost of substantial algebra that does not yield further insight. To make this more explicit, note that the envelope $D_i$ in part~\textup{(ii)} yields $\calG_i$-measurable $L^{16}$ dominates for the ratio numerators (and, given $\E[D_i^{4}]<\infty$, unconditional integrability), part~\textup{(iii)} keeps denominators bounded away from zero near $\theta_0$, and the extra pure-$\theta$ order in part~\textup{(ii)} supports dominated differentiability of $r_i$ on compact $\Theta$ (Assumption~\ref{ass-estimators}). Sharp rates for products of centered terms under $\E_0$ are obtained from Assumptions~\ref{ass-sampling} and~\ref{ass-moments}\textup{(i)} via Lemmas~\ref{lemma-cumulants} and~\ref{lemma-mixing}, not from the $L^{16}$ envelopes alone. Stationarity in Assumption~\ref{ass-sampling}\textup{(ii)} ensures that the argument of \(r_i\) does not drift with \(n,T\).
It thus remains to verify case by case that differentiation and passage of \(\E_0\) through each such ratio leave the \(T\)-rate unchanged, which is algebraically tedious without offering deeper insight. Consequently, we record  Assumption~\ref{ass-remainder} as a high-level assumption, which is common in the literature (see, for instance, \citealp{hahn-newey-nonlinear-panels}).

The following assumption specifies identification conditions. Moreover, it imposes high-level conditions on the consistency and rate of convergence of $\thetahat$, which can be justified under similar regularity conditions as those imposed here.
\begin{ass}\label{ass-estimators}
The common parameter $\theta_0$ lies in the interior of a compact and convex parameter space $\Theta$. For every $i$, $\alphaio$ is a random variable that, with probability one, lies in the interior of a compact interval $\calJ\subset\R$. For a generic $i$, the pair $(\theta_0,\alphaio)$ is identified as a well-separated maximizer of the individual joint likelihood criterion, i.e., for every $\varepsilon>0$,
\begin{equation}\label{eq:joint-id}
\Bigl(
\E_0\bigl[\elliT(\theta_0,\alphaio)\bigr]
-
\sup_{\substack{(\theta,\alpha_i)\in\Theta\times\calJ\\
\|(\theta,\alpha_i)-(\theta_0,\alphaio)\|>\varepsilon}}
\E_0\bigl[\elliT(\theta,\alpha_i)\bigr]
\Bigr)
>0,
\end{equation}
where $\|\cdot\|$ denotes the Euclidean norm. The estimator $\thetahat$ is consistent for $\theta_0$ as $n,T\to\infty$.
Furthermore, as $n,T\to\infty$,
\begin{equation}\label{eq:rates-estimators}
\sqrt{nT}(\thetahatp-\theta_0)=\Op(1)+\Op(\sqrt{\frac{n}{T}}) \mbox{ and }
\sqrt{nT}(\thetahatstar-\theta_0)=\Op(1).
\end{equation}
\end{ass}

Condition~\eqref{eq:joint-id} is the individual joint identification assumption of \citet[Assumption~4]{kim-sun2016} or  \citealp{higgins2024bootstrap}, specialized to $\E_0$.
Because the sampling design is i.i.d.\ across~$i$, an $\inf_i$ is unnecessary, so it is enough to state the separation for a generic individual.
Profiling out $\alpha_i$ at $\alphai^\frakt(\theta)$ recovers well-separatedness of $\theta_0$ for the target likelihood, so~\eqref{eq:joint-id} is strictly stronger than target-only identification.
The stronger form is the primitive used by \citet[Theorem~1]{kim-sun2016} for uniform consistency of the fixed-effect estimators, $\max_i|\alphahat_i-\alphaio|=\op(1)$. Under Assumption~\ref{ass-n-T-rate}\textup{(ii)} their covering-plus-union bound continues to vanish when $n/T^2\to\rho\in[0,\infty)$.
Consistency of $\thetahat$ and the rates of convergence of $\sqrt{nT}(\thetahat-\theta_0)$ in \eqref{eq:rates-estimators} for the profile and target likelihood can be derived under our conditions: for profile likelihood estimators in static index panels, see \citet{hahn-newey-nonlinear-panels}; in dynamic single-index panels with predetermined regressors, see \citet{fernandez-val-probit}; for the target likelihood, see \citet{schumann-info} and \citet{schumann2020joe}.
Since deriving these rates is not the focus of this paper, they are stated as Assumption~\ref{ass-estimators} here.

The following assumption restricts the rates at which $n$ and $T$ can grow.

\begin{ass}\label{ass-n-T-rate}
\textup{(i)}~$n$ grows at least as fast as $T$, i.e., $\lim_{n,T\to \infty} n/T \in (0, \infty]$\textup{;}
\textup{(ii)}~$n/T^2\to\rho$, where $\rho\in[0,\infty)$.
\end{ass}
Assumption (i) models micro-economic panels where $n$ is larger than $T$.  This is mainly used to simplify remainder terms. Assumption (ii) indexes ``parabolic'' asymptotics, as the profile IM bias vanishes when $\rho=0$ and stabilizes when $\rho\in(0,\infty)$.

\begin{ass}\label{ass-variance}
    The limiting variance $V\defined \lim_{n,T}\var_0(\sqrt{n}D^\frakt(\theta_0))$ exists and is nonsingular.
\end{ass}
This assumption is the panel equivalent of Assumption~A10 in \citet{white1982maximum}, which requires the asymptotic variance of the IM indicators to be nonsingular. A notable difference is that \citet{white1982maximum} imposes this condition on the indicators evaluated at the QMLE, $D(\thetahat)$, whereas here $V$ is defined for the indicators evaluated at the true parameter $\theta_0$. This distinction reflects that, in the panel setting, the contribution of parameter estimation error is asymptotically negligible relative to the sampling variability of the indicators---in contrast to the cross-sectional case, where estimation uncertainty enters the limiting variance at the same order.

\subsection{Assumptions for the parametric bootstrap}\label{sec:ass-bootstrap}
In this subsection, we state neighborhood-uniform versions of the $\E_0$ sampling primitives under which Theorem~\ref{thm:bootstrap} follows. The bootstrap draws outcomes from the fitted profile transition densities at $(\thetahat^\frakp,\{\alphahat_i\}_{i=1}^n)$, where $\alphahat_i=\alphahatiT(\thetahat^\frakp)$, conditional on covariates and initial conditions. Write $P^*$ for the corresponding conditional bootstrap measure, and $\Opstar$ / $\opstar$ for stochastic orders under $P^*$. As in Section~\ref{sec:bootstrap}, $\dtostar$ denotes weak convergence under $P^*$, understood as holding in probability under $P_0$. Write $\tau=(\theta,\alpha_1,\ldots,\alpha_n)$ for a generic value of the joint parameter and $\tau_0=(\theta_0,\{\alphaio\}_{i=1}^n)$ for the true value. Let $\calU\defined\Theta\times\calJ^n$ denote the joint parameter space, and let $\calU_0\subset\calU$ be a subset whose interior contains $\tau_0$. Following \citet{higgins2024bootstrap}, define the $\delta$-neighborhood
\[
\calU_1\defined\bigl\{\tau\in\calU:d(\tau,\calU_0)<\delta\bigr\}
\]
for some $\delta>0$, where $d(\tau,\calU_0)\defined\inf\{\|\tau-\vartheta\|_2:\vartheta\in\calU_0\}$. For $\tau\in\calU$, let $\E_\tau$ denote integration with respect to the joint density
\[
\prod_{i=1}^n\prod_{t=1}^T f_{it}(\Yit\mid Y_{it-1},\ldots,Y_{it-L},\calXiT;\theta,\alpha_i),
\]
conditional on covariates and initial conditions. When likelihood derivatives are evaluated at the components of~$\tau$, centered period contributions are $l_{it,j}(\tau)\defined \ell_{it,j}(\tau)-\E_\tau[\ell_{it,j}(\tau)]$, and $\E_\tau$ replaces $\E_0$ throughout Appendix~\ref{sec-intermediate}.\footnote{This notation matches the $\mathbb{E}_\theta$-convention in \citet[Section~5]{higgins2024bootstrap}.} Under $P^*$, one has $\tau=(\thetahat^\frakp,\{\alphahat_i\}_{i=1}^n)$. In Remark~\ref{rem:bootstrap-primitives}, we comment on the uniform strengthening of the assumptions.

The next assumption imposes geometric mixing, derivative envelopes, and information-matrix nondegeneracy uniformly on a neighborhood of the true value, similar to Assumptions~2,~3, and~6 in \citet{higgins2024bootstrap}.
\begin{ass}[Uniform local regularity]\label{ass-bootstrap-uniform}
The analogues of Assumptions~\ref{ass-sampling}\textup{(iii)}, \ref{ass-moments}, and~\ref{ass-remainder} hold with $\E_0$ replaced by $\E_\tau$, \emph{uniformly} over $\tau\in\calU_1$ in the almost-sure / $\calG_i$-conditional sense of Appendix~\ref{sec-intermediate}. In particular, for each~$i$:
\begin{enumerate}
\item[(i)]~$(\Yit)_{t\in\N}$ is conditionally $\mfa_i$-mixing given $\calG_i$ under $\E_\tau$, with $\sup_{\tau\in\calU_1}\mfa_i(m)\overset{\wpone}{\lesssim} a^{m}$ for some $a\in(0,1)$\textup{;}
\item[(ii)]~the ninth- and sixteenth-moment envelopes of Assumption~\ref{ass-moments} hold with $\E_0$ replaced by $\E_\tau$, uniformly over $\tau\in\calU_1$\textup{;}
\item[(iii)]~the Hessian nondegeneracy bounds of Assumption~\ref{ass-moments}\textup{(iii)} hold uniformly over $\tau\in\calU_1$\textup{;}
\item[(iv)]~differentiation and passage of $\E_\tau$ through the remainder terms of Assumption~\ref{ass-remainder} preserve the stated $T$-rates, uniformly over $\tau\in\calU_1$.
\end{enumerate}
\end{ass}
The next assumption is the bootstrap counterpart of the profile rate in \eqref{eq:rates-estimators}.
\begin{ass}[Bootstrap estimator rates]\label{ass-bootstrap-estimators}
As $n,T\to\infty$,
\begin{equation}\label{eq:rates-bootstrap-estimators}
\sqrt{nT}(\thetahat^{\frakp*}-\thetahat^\frakp)=\Opstar(1)+\Opstar\Bigl(\sqrt{\frac{n}{T}}\Bigr),
\end{equation}
in probability under $P_0$.
\end{ass}
For a justification, see Remark~\ref{rem:bootstrap-primitives}. We record the rate as a high-level condition because the PIM is not a smooth functional of the MLE alone.

\begin{ass}[Bootstrap indicator transfer]\label{ass-bootstrap-indicator}
For every $\tau\in\calU_1$, the conclusions of Lemmas~\ref{lem:indicators} and~\ref{lemma-scaled-indicator} hold under $\E_\tau$ with the same $T$-rates as under $\E_0$ at $\tau=\tau_0$, and with constants that are uniform over $\tau\in\calU_1$. In particular, under $P^*=\E_{\hat\tau}$ at $\hat\tau=(\thetahat^\frakp,\{\alphahat_i\}_{i=1}^n)$, these expansions hold for the bootstrap profile indicator $D^{\frakp*}$ at $\thetahat^{\frakp*}$ and at $\thetahat^\frakp$. Here $\alphai^\frakt(\theta)$ is the target incidental parameter around which $\alphahatiT(\theta)$ is expanded inside that profile expansion.
\end{ass}
See Remark~\ref{rem:bootstrap-primitives}. Lemma~\ref{lemma-bootstrap-clt} uses this transfer together with Assumption~\ref{ass-bootstrap-uniform}. Lemma~\ref{lemma-bootstrap-variance} also uses Assumption~\ref{ass-bootstrap-estimators}.

\begin{ass}[Bootstrap bias continuity]\label{ass-bootstrap-bias}
For the profile likelihood, let $\kappa_T(\theta)\defined B_T^{\mathrm{score}}(\theta)+B_T^{\mathrm{info}}(\theta)$ denote the finite-$T$ score- and information-bias functional from the proof of Lemma~\ref{lemma-scaled-indicator}\textup{(ii)}, and let $\kappa_T^*(\thetahat^\frakp)\defined B_T^{\mathrm{score}*}+B_T^{\mathrm{info}*}$ be its bootstrap counterpart at the fitted profile estimator. Then
\[
\kappa_T^*(\thetahat^\frakp)-\kappa_T(\theta_0)=\op(1).
\]
\end{ass}
See Remark~\ref{rem:bootstrap-primitives}. Under Assumption~\ref{ass-n-T-rate}\textup{(ii)}, this yields $(\sqrt{n}/T)\kappa_T^*(\thetahat^\frakp)\to_p\sqrt{\rho}\,\kappa(\theta_0)$ and matches the noncentrality of Theorems~\ref{main-theorem} and~\ref{thm:bootstrap}.

\begin{rem}\label{rem:bootstrap-primitives}
The conditions above justify the profile bootstrap in Theorem~\ref{thm:bootstrap}. They are neighborhood-uniform strengthenings of the sampling primitives already used under~$\E_0$, rather than qualitatively new requirements. A full repetition of the intermediate lemmas under a generic law~$\E_\tau$ is omitted; the $T$-orders are the same once Assumption~\ref{ass-bootstrap-uniform} holds.

Assumption~\ref{ass-bootstrap-uniform} asks that geometric conditional mixing, the moment envelopes and Hessian nondegeneracy of Assumption~\ref{ass-moments}, and the remainder algebra of Assumption~\ref{ass-remainder} hold with~$\E_0$ replaced by~$\E_\tau$, uniformly for~$\tau\in\calU_1$.
This is the same  as Assumptions~\ref{ass-sampling}\textup{(iii)}, \ref{ass-moments}, and~\ref{ass-remainder}, localized to a $\delta$-neighborhood of~$\tau_0$ in the sense of \citet{higgins2024bootstrap}.
For smooth transition densities on compact parameter sets, these properties are typically continuous in~$(\theta,\alpha_i)$. Imposing them uniformly on~$\calU_1$ is then a mild strengthening of the pointwise conditions at~$\tau_0$.

Assumption~\ref{ass-bootstrap-indicator} records the transfer of Lemmas~\ref{lem:indicators} and~\ref{lemma-scaled-indicator} to~$\E_\tau$.
Those proofs expand $\alphahatiT(\theta)$ around $\alphai^\frakt(\theta)$ and obtain score- and information-bias formulae whose expectations are taken under the law that generates the outcomes.
Under~$\E_\tau$ the same algebra applies with expansion point equal to the common-parameter component of~$\tau$ and with~$\E_0$ replaced by~$\E_\tau$.
The leading bias terms $B_{T,1}^{\mathrm{score}}$, $B_{T,2}^{\mathrm{score}}$, and~$B_T^{\mathrm{info}}$ are rational functions of a fixed finite list of moments of likelihood derivatives.
Under Assumption~\ref{ass-bootstrap-uniform}, the mixing and product-moment arguments that deliver these $T$-orders under~$\E_0$ continue to hold under~$\E_\tau$ with constants independent of~$\tau\in\calU_1$.
In particular, the  score-bias expansion, the remainders in the $\Delta$-expansion, the derivative-moment rates of Lemma~\ref{lemma-rates}, the linearization of~$\thetahat$ in~\eqref{eq:thetahat-linearization}, and the remainder algebra of Assumption~\ref{ass-remainder} all transfer once their $\E_0$ primitives are stated uniformly on~$\calU_1$.
Under the parametric bootstrap, $P^*$ coincides with~$\E_{\hat\tau}$ at $\hat\tau=(\thetahat^\frakp,\{\alphahat_i\})$, so the expansions hold at this fitted~$\hat\tau$.
Membership of~$\hat\tau$ in~$\calU_1$ with probability approaching one uses the joint identification condition~\eqref{eq:joint-id} and \citet[Theorem~1]{kim-sun2016}, which also holds under parabolic asymptotics (as argued after Assumption~\ref{ass-estimators}), together with consistency of~$\thetahat^\frakp$ as in the neighborhood device of \citet{andrews2005bootstrap} used by \citet{higgins2024bootstrap}.
Lemmas~\ref{lemma-bootstrap-clt} and~\ref{lemma-bootstrap-variance} then deliver the centered Lyapunov CLT and variance consistency under~$P^*$. The fourth-moment bound uses the uniform envelopes in Assumption~\ref{ass-bootstrap-uniform} together with the indicator transfer in Assumption~\ref{ass-bootstrap-indicator}. Variance consistency also uses the profile rate in Assumption~\ref{ass-bootstrap-estimators}.

Assumption~\ref{ass-bootstrap-bias} ensures that the bootstrap and sampling noncentralities coincide when~$\rho>0$.
Because the moments that enter $\kappa_T$ are continuous in the parameter and obey a uniform LLN on a neighborhood of the truth, plugging in the consistent bootstrap fit $\hat\tau$ does not change the limiting bias, so $\kappa_T^*(\thetahat^\frakp)-\kappa_T(\theta_0)=\op(1)$.

Assumption~\ref{ass-bootstrap-estimators} is the bootstrap counterpart of the profile rate in~\eqref{eq:rates-estimators}.
\citet{higgins2024bootstrap} obtain the analogous rate from uniform weak convergence of the MLE on their neighborhood (via \citealp[Theorem~1]{kim-sun2016}) and Lemma~A.1 of \citet{andrews2005bootstrap}. While we state~\eqref{eq:rates-bootstrap-estimators} directly as a high-level condition (as under~$\E_0$), this rate can be derived from identification, uniform mixing and moment bounds, and Hessian nondegeneracy on~$\calU_1$, along the lines of \citet{hahn-newey-nonlinear-panels}, \citet{fernandez-val-probit}, and \citet{higgins2024bootstrap}.
\end{rem}

\section{Intermediate results}\label{sec-intermediate}
\label{sec-sampling-primitives}
This section derives the necessary rates for expectations of products of $k$ centered terms given the mixing conditions imposed in the assumptions.
The first result establishes \emph{conditional} cumulant summability of the centered period contributions under geometric conditional mixing. The second converts that bound into sharp product-moment rates for centered likelihood derivatives by a conditional moment--cumulant expansion, taking the unconditional expectation only at the last step.
These rates are used throughout the proofs in Appendix~\ref{sec-proofs}. Relative to a crude H\"older bound $O(T^{-k/2})$, the rate $O(T^{-\lceil k/2\rceil})$ is sharper when $k$ is odd.
Below, $A\lesssim B$ means that $A\le CB$ for a constant $C$ independent of $n$ and $T$. Write $\mathrm{cum}(\,\cdot\mid\calG_i)$ for joint cumulants under the conditional law given $\calG_i=\sigma((\Xit)_{t\in\N},\Iio,\alphaio)$, and omit the argument $\theta_0$ in $l_{it,j}\defined l_{it,j}(\theta_0)$.

\begin{lem}[Conditional cumulant summability]\label{lemma-cumulants}
Let Assumptions~\ref{ass-sampling} and~\ref{ass-moments}\textup{(i)} hold. Then, for every $r\in\{2,\ldots,8\}$ and every $j_1,\ldots,j_r\in\{1,\ldots,4\}$, there is a finite constant $C_r$ (independent of $n,T,i$) such that
\begin{equation}\label{eq:ass-mixing-cum-sum}
\sum_{t_1=1}^T\cdots\sum_{t_r=1}^T
\bigl|\mathrm{cum}\bigl(l_{it_1,j_1},\ldots,l_{it_r,j_r}\mid\calG_i\bigr)\bigr|
\overset{\wpone}{\le} C_r(1+d_i)\,T.
\end{equation}
The same bound holds for the centered period contributions $l_{it,k,m}$ to the mixed joint partials of $\elliT(\theta,\alpha_i)$ for $k\in\{0,\ldots,4\}$ and $m\in\{1,\ldots,4\}$.
\end{lem}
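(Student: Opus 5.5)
The argument conditions on $\calG_i$ throughout, so that the sequence $(l_{it,j})_{t}$ becomes a mean-zero, strictly stationary, geometrically $\alpha$-mixing sequence. Under $\Pr(\cdot\mid\calG_i)$, Assumption~\ref{ass-sampling}(iv) makes each $l_{it,j}$ a measurable function of the window $(Y_{i,t-L_*+1},\ldots,Y_{it})$. Sigma-fields generated by such variables at index sets separated by a gap $g$ are therefore generated by $Y$'s separated by at least $g-L_*+1$. Assumption~\ref{ass-sampling}(iii) then gives mixing coefficients $\beta(g)\lesssim a^{g-L_*}$ for the centered contributions, uniformly in $i$ w.p.1. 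The centered mixed partials $l_{it,k,m}$ are handled by exactly the same argument, so I treat only $l_{it,j}$.

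\textbf{Step 1 (single-cumulant bound).} For fixed $t_1\le\cdots\le t_r$ (after reordering), let $g=\max_{1\le s<r}(t_{s+1}-t_s)$ be the largest gap, splitting the indices into a left block and a right block. I use the standard cumulant-splitting identity: the joint cumulant is a sum of products of moments. Replacing each moment by its ``decoupled'' version, in which the left and right blocks are independent copies, gives a cumulant that vanishes. Each mixed moment therefore differs from the decoupled one by a covariance across the gap. Davydov's (Rio's) covariance inequality bounds such a covariance by $C\,\beta(g)^{1-2/p}\,\|U\|_{p}\|W\|_{p}$, where $U$ and $W$ are the block products. Choose $p$ so that the block products of at most $r\le8$ factors have finite $L^p$ norms. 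With ninth-moment envelopes, Hölder gives $\|\prod_{s\in S}l_{it_s,j_s}\|_{9/|S|}\le d_i^{|S|/9}$. Taking $p$ appropriately, e.g.\ bounding $\E|U|^{p}$ with $p|S|\le 9$ through the complementary Hölder split, leaves a strictly positive exponent $1-\sum 1/p_\cdot>0$ precisely because $9>8$. The outcome is
\[
\bigl|\mathrm{cum}(l_{it_1,j_1},\ldots,l_{it_r,j_r}\mid\calG_i)\bigr|\le C\,(1+d_i)\,a^{\gamma_r\,(g-L_*)_+},\qquad \gamma_r>0,
\]
where the powers $d_i^{\cdot/9}$ summing to at most $8/9$ are absorbed into $1+d_i$, and the number of moment-partition terms depends only on $r$.

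\textbf{Step 2 (summation).} Every gap $t_{s+1}-t_s$ is at most $g$. Hence, for a fixed starting index $t_1$, the number of ordered configurations with maximal gap equal to $g$ is at most $(r-1)(g+1)^{r-2}$. Summing over orderings (a factor $r!$), over $t_1\in\{1,\ldots,T\}$ (a factor $T$), and over $g\ge0$ gives
\[
\sum_{t_1,\ldots,t_r}\bigl|\mathrm{cum}(\cdot\mid\calG_i)\bigr|\le C(1+d_i)\,T\,r!\,(r-1)\sum_{g\ge0}(g+1)^{r-2}a^{\gamma_r(g-L_*)_+}\le C_r(1+d_i)\,T,
\]
since the series converges. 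All constants are independent of $n,T,i$ by stationarity and the $\sup_i$ in the mixing bound, and the inequalities hold w.p.1 because they are pathwise in $\calG_i$.

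\textbf{Main obstacle.} The delicate point is Step 1: choosing Hölder and Davydov exponents uniformly for all $r\le8$ and all block sizes so that the $\beta$-exponent stays positive while only ninth moments are used. I would first verify the worst case $r=8$ with blocks of size $1$ and $7$, where the exponent is smallest. I would then check that the remaining splits give larger exponents.
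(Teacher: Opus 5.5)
Your proposal is correct and follows the paper's route. You condition on $\calG_i$, pass geometric mixing through the finite window of Assumption~\ref{ass-sampling}\textup{(iv)}, bound each cumulant by a positive power of the mixing coefficient at the largest gap times ninth-moment norms (absorbing $d_i^{r/9}$ into $1+d_i$), and sum over tuples using geometric decay. The differences are in execution only: you derive the single-cumulant bound from the cumulant-splitting identity and Davydov's inequality where the paper cites Prakasa Rao's Theorem~1.1, and for unbalanced splits you need the asymmetric exponents $9/k$ and $9/m$, giving $\alpha^{1-(k+m)/9}$, rather than the symmetric $1-2/p$ form you first wrote. You also count gap vectors by their maximum instead of factorizing $a^{\gamma h}$ over the gaps. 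Finally, the law given $\calG_i$ is generally not stationary, but your argument only uses the $\sup_t$ already built into the envelope and the mixing coefficient.
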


\begin{proof}[\textupandbold{Proof of Lemma~\ref{lemma-cumulants}}]
In the following, we consider cumulants under the conditional law given $\calG_i$. 
As before, we omit the argument $\theta_0$ and write $l_{it,j}\defined l_{it,j}(\theta_0)$.
By Assumption~\ref{ass-sampling}\textup{(iv)}, conditioning on $\calG_i$ makes each $l_{it,j}$ a measurable function of a fixed-length window of $(\Yit)$. By the finite-memory mapping theorem (\citealp{davidson-book}, Theorem~14.1), geometric conditional mixing of $(\Yit)$ therefore transfers to $(l_{it,j})$: the sequences of centered period contributions are geometrically conditionally mixing given $\calG_i$ with the same rate $a$. The same holds for the mixed-joint contributions $l_{it,k,m}$.
Assumption~\ref{ass-moments}\textup{(i)} and $|x|^p\le\1(|x|\le 1)+|x|^9$ give the envelope
\begin{equation}\label{eq:moment-envelope-primitive}
\sup_{t\in\N}\E_0\bigl[|l_{it,j}|^p\bigr]
\overset{\wpone}{\le}1+d_i
\qquad\text{for all }0<p\le 9.
\end{equation}
For a generic $i$, fix $r\in\{2,\ldots,8\}$,  $j_1,\ldots,j_r$, and write
\[
S_{i,r}\defined
\sum_{t_1=1}^T\cdots\sum_{t_r=1}^T
\bigl|\mathrm{cum}\bigl(l_{it_1,j_1},\ldots,l_{it_r,j_r}\mid\calG_i\bigr)\bigr|.
\]
For each tuple, let $t_{(1)}\le\cdots\le t_{(r)}$ be the ordered times, set $\delta_\ell\defined t_{(\ell+1)}-t_{(\ell)}$ and $h\defined t_{(r)}-t_{(1)}$, and write
\[
q\defined\bigl(\max_{\ell}\delta_\ell-L_*+1\bigr)_+
\]
for the effective outcome separation, taking into account the number of lags $L_*$ in the likelihood specification.
By Assumption~\ref{ass-sampling}\textup{(iv)}, each centered period contribution $l_{it,j}$ depends on only a short window of $(\Yit)$. Ordering the times and cutting at the largest gap therefore splits the $r$ contributions into an early group and a late group whose underlying outcome windows are separated by $q$ periods. Conditional $\alpha$-mixing of $(\Yit)$ in Assumption~\ref{ass-sampling}\textup{(iii)} then bounds the dependence between these two groups by $\mfa_i(q)$. Applying the joint-cumulant bound of \citet{prakasa-rao-cumulant-mixing} (Theorem~1.1 with $\lambda=0$) under the conditional law given $\calG_i$ with moment order $9>r$ therefore yields
\[
\bigl|\mathrm{cum}(l_{it_1,j_1},\ldots,l_{it_r,j_r}\mid\calG_i)\bigr|
\overset{\wpone}{\lesssim}
\mfa_i(q)^{\nu_r}
\prod_{m=1}^{r}\|l_{it_m,j_m}\|_{9,\E_0},
\qquad\nu_r\defined 1-\tfrac{r}{9}\in(0,1),
\]
where $\|Z\|_{9,\E_0}\defined\bigl(\E_0[|Z|^9]\bigr)^{1/9}$. By~\eqref{eq:moment-envelope-primitive}, each factor satisfies $\|l_{it_m,j_m}\|_{9,\E_0}\overset{\wpone}{\lesssim}(1+d_i)^{1/9}$, so the product of $r\le 8$ norms is absorbed into $(1+d_i)$ and
\[
\bigl|\mathrm{cum}(l_{it_1,j_1},\ldots,l_{it_r,j_r}\mid\calG_i)\bigr|
\overset{\wpone}{\lesssim}
(1+d_i)\,\mfa_i(q)^{\nu_r}.
\]
If $q\ge 1$, then $\mfa_i(q)\lesssim a^{q}$, so
\[
\mfa_i(q)^{\nu_r}
\lesssim a^{\nu_r\max_{\ell}\delta_\ell}
\le a^{\nu_r h/(r-1)}
=a^{\gamma h}
=\prod_{\ell=1}^{r-1}a^{\gamma\delta_\ell},
\qquad\gamma\defined\frac{\nu_r}{r-1},
\]
using $\max_{\ell}\delta_\ell\ge h/(r-1)$ and $a\in(0,1)$.
If $q=0$, then $\max_{\ell}\delta_\ell\le L_*-1$, so $h\le(r-1)(L_*-1)$ and
$1\le a^{-\gamma(r-1)(L_*-1)}a^{\gamma h}$; the moment bound $|\mathrm{cum}|\overset{\wpone}{\lesssim}(1+d_i)$
therefore yields $|\mathrm{cum}|\overset{\wpone}{\lesssim}(1+d_i)a^{\gamma h}$ after absorbing
$a^{-\gamma(r-1)(L_*-1)}$ into the $\lesssim$ constant.
Summing over all $T^r$ tuples costs at most an $r!$ factor relative to the sum over nondecreasing tuples since there are at most $r!$ reorderings of any tuple.
Moreover, for each fixed gap vector $(\delta_1,\ldots,\delta_{r-1})$, there are at most $T$ admissible starting times $t_{(1)}$, hence at most $T$ nondecreasing tuples.
Therefore
\begin{align*}
S_{i,r}
&\le r!\sum_{\delta_1=0}^{T-1}\cdots\sum_{\delta_{r-1}=0}^{T-1}
T\,(1+d_i)\prod_{\ell=1}^{r-1}a^{\gamma\delta_\ell}\\
&=r!\,(1+d_i)\,T\Bigl(\sum_{\delta=0}^{T-1}a^{\gamma\delta}\Bigr)^{r-1}
\overset{\wpone}{\lesssim}(1+d_i)\,T,
\end{align*}
since $\sum_{\delta=0}^\infty a^{\gamma\delta}<\infty$ and $r\le 8$ is fixed. Consequently, \eqref{eq:ass-mixing-cum-sum} follows. Analogous arguments show the corresponding result for  $l_{it,k,m}$.
\end{proof}

\begin{lem}\label{lemma-mixing}
Let Assumptions~\ref{ass-sampling} and~\ref{ass-moments} hold. Then, for each $k\in\{1,\ldots,8\}$ and indices $j_1,\ldots,j_k\in\{1,\ldots,4\}$,
\begin{align}
\E_0\prod_{m=1}^{k} l_{i,j_m}(\theta_0)
&=O\bigl(T^{-\lceil k/2\rceil}\bigr),
\label{eq:mixing-prod-individual}\\
\E_0\prod_{m=1}^{k} l_{j_m}(\theta_0)
&=O\bigl((nT)^{-\lceil k/2\rceil}\bigr).
\label{eq:mixing-prod-sample}
\end{align}
\end{lem}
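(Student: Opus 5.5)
The plan is to write each moment as a sum of joint cumulants, bound each cumulant using Lemma~\ref{lemma-cumulants}, and only at the end integrate out the conditioning variables. Fix $i$ and write $l_{i,j}=T^{-1}\sum_{t=1}^T l_{it,j}$. Since each $l_{it,j}$ is centered under the conditional law given $\calG_i$, we first compute $\E_0[\prod_{m=1}^k l_{i,j_m}\mid\calG_i]$. Expanding by multilinearity gives
\[
\E_0\Bigl[\prod_{m=1}^k l_{i,j_m}\Bigm|\calG_i\Bigr]=T^{-k}\sum_{t_1,\ldots,t_k}\E_0\Bigl[\prod_{m=1}^k l_{it_m,j_m}\Bigm|\calG_i\Bigr].
\]
We then apply the moment--cumulant formula. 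Each mixed moment equals a sum over set partitions $\pi$ of $\{1,\ldots,k\}$ of $\prod_{B\in\pi}\mathrm{cum}(l_{it_m,j_m}:m\in B\mid\calG_i)$. Because the variables are conditionally centered, every partition containing a singleton block contributes zero, so only partitions with all blocks of size $\ge 2$ remain.

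For a fixed partition $\pi$, the sum over $(t_1,\ldots,t_k)$ factorizes over blocks, because each block involves disjoint time indices. Lemma~\ref{lemma-cumulants} applies to each block of size $r\in\{2,\ldots,8\}$ and bounds its contribution by $C_r(1+d_i)T$. The block sizes are admissible because $k\le 8$. The partition $\pi$ therefore contributes at most $C(1+d_i)^{|\pi|}T^{|\pi|}$. Since every block has size at least two, $|\pi|\le\lfloor k/2\rfloor$. This yields the conditional bound
\[
\Bigl|\E_0\Bigl[\prod_{m} l_{i,j_m}\Bigm|\calG_i\Bigr]\Bigr|\overset{\wpone}{\lesssim}(1+d_i)^{\lfloor k/2\rfloor}\,T^{\lfloor k/2\rfloor-k}.
\]
Since $k-\lfloor k/2\rfloor=\lceil k/2\rceil$, the right-hand side is $(1+d_i)^{\lfloor k/2\rfloor}T^{-\lceil k/2\rceil}$. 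The number of partitions is a constant depending only on $k$. Taking unconditional expectations, $\lfloor k/2\rfloor\le 4$ and $\E[d_i^4]<\infty$ give $\E(1+d_i)^{\lfloor k/2\rfloor}<\infty$, which establishes \eqref{eq:mixing-prod-individual}. (The bound on $\E_0$ holds w.p.1 with an integrable constant, which is the sense of $O(\cdot)$ used throughout.)

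For \eqref{eq:mixing-prod-sample}, write $l_j=n^{-1}\sum_i l_{i,j}$ and expand once more. The summands are independent across $i$ with mean zero under $\E_0$ (Assumption~\ref{ass-sampling}(i)), so the same cumulant argument applies at the level of individuals. Joint cumulants of $(l_{i_1,j_1},\ldots,l_{i_k,j_k})$ vanish unless all $i_m$ coincide. Hence
\[
\E_0\prod_m l_{j_m}=n^{-k}\sum_{\pi}\prod_{B\in\pi}\sum_{i}\mathrm{cum}(l_{i,j_m}:m\in B),
\]
again over partitions without singletons. A within-individual cumulant of order $r$ is itself a finite combination of products of moments of order $\le r$. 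By the first part, with the time-level cumulant bound applied directly, such a cumulant is $O(T^{-(r-1)})$, since a single time-level cumulant sum of order $r$ gives $T\cdot T^{-r}$.

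Each block therefore contributes $O(nT^{-(|B|-1)})$, and partition $\pi$ contributes $n^{-k}\,n^{|\pi|}\,T^{|\pi|-k}=(n)^{|\pi|-k}T^{|\pi|-k}$. With $|\pi|\le\lfloor k/2\rfloor$, this is $O((nT)^{-\lceil k/2\rceil})$.

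\textbf{Main obstacle.} The delicate point is the order of conditioning. The sample-level cumulants must be unconditional across $i$, because independence holds for the unconditional law. Within $i$, however, Lemma~\ref{lemma-cumulants} only controls conditional cumulants. I would handle this by noting that the $l_{it,j}$ are conditionally centered. The unconditional cumulant of the $T$-average then equals the expectation of the conditional cumulant plus cumulants of conditional cumulants (law of total cumulance). Each of these terms has at least the same $T$-order by Lemma~\ref{lemma-cumulants}, and the moment envelope $\E[d_i^4]<\infty$ keeps them integrable, since products of at most four blocks appear.
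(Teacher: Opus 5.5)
Your proof of \eqref{eq:mixing-prod-individual} is the paper's proof. It uses a conditional moment--cumulant expansion, removes singleton blocks by conditional centering, bounds each block by $C_r(1+d_i)T$ through Lemma~\ref{lemma-cumulants}, uses $|\pi|\le\lfloor k/2\rfloor$, and integrates $(1+d_i)^{|\pi|}$ only at the end.

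For \eqref{eq:mixing-prod-sample} you take a different and somewhat sharper route. The paper sorts the index tuple $(i_1,\ldots,i_k)$ by its equality pattern and factors the \emph{moment} across distinct individuals. It then inserts the individual moment bound $O(T^{-\lceil r_b/2\rceil})$ from part one and needs the extra count $\sum_b\lceil r_b/2\rceil\ge\lceil k/2\rceil$. You instead apply the moment--cumulant formula to the sample averages and let cross-individual cumulants vanish. You then bound each within-individual cumulant of order $r$ directly by $T^{-r}C_r(1+d_i)T=C_r(1+d_i)T^{1-r}$. As a result $n$ and $T$ carry the same exponent $|\pi|-k$, and the claim is immediate. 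This rate comes from applying Lemma~\ref{lemma-cumulants} to the block itself. Rewriting the cumulant through moments and invoking part one would only give $T^{-\lceil r/2\rceil}$. The price of your route is that it genuinely needs the rate for \emph{conditional} cumulants. The paper's route only reuses part one and would also go through with unconditional moments.

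For this reason your ``main obstacle'' paragraph should be deleted rather than carried out, for two reasons.
\begin{itemize}
\item \textbf{It is not an obstacle.} $\E_0$ integrates against $\prod_i\prod_t f_{it}$, so it is the conditional law given the covariates, initial conditions and fixed effects of all individuals. By Assumption~\ref{ass-sampling}(i), individuals are independent under this conditional law, and the law of individual $i$ depends on $\calG_i$ only. So every cumulant in your sample-level expansion can be taken under $\E_0$. Cross-individual cumulants vanish there, and Lemma~\ref{lemma-cumulants} applies verbatim. Unconditional expectations enter only at the very end, e.g.\ via $\E\bigl[(n^{-1}\sum_i(1+d_i))^{|\pi|}\bigr]\le\E\bigl[(1+d_1)^{|\pi|}\bigr]<\infty$ for $|\pi|\le 4$.
\item \textbf{The fix you sketch is false as stated.} In the law of total cumulance, an inner partition with $b$ blocks contributes an outer cumulant of $b$ conditional cumulants. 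That term is only $O(T^{-(r-b)})$, not $O(T^{1-r})$. For example, let $l_{it,1}$ be i.i.d.\ over $t$ given $\calG_i$, with conditional variance $\sigma^2(\calG_i)$ and conditional fourth cumulant $c_4(\calG_i)$. Then the unconditional fourth cumulant of $l_{i,1}$ equals $T^{-3}\E[c_4(\calG_i)]+3T^{-2}\var(\sigma^2(\calG_i))$, which is generically of exact order $T^{-2}$.
\end{itemize}
With unconditional cumulants, your bookkeeping $n^{|\pi|-k}T^{|\pi|-k}$ would therefore be unjustified. You would have to fall back on the paper's ceiling count $T^{-\sum_B\lceil |B|/2\rceil}$ to conclude.
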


In particular, \eqref{eq:mixing-prod-sample} with $k=2$ yields $\E_0 \lk^2(\theta_0)=O((nT)^{-1})$. Since $\E_0 \lk(\theta_0)=0$ by construction, this implies
\begin{equation}\label{lbarnTk-rate}
\lk(\theta_0) = \Op\bigl((nT)^{-1/2}\bigr), \qquad k\in\{1, \ldots, 4\}.
\end{equation}
When the outcomes are independent across $t$ under $\E_0$, \eqref{eq:mixing-prod-individual}--\eqref{eq:mixing-prod-sample} reduce to the usual Wick/pairing count under time-independence.
By Lemma~\ref{lemma-cumulants}, the same rates hold for products of the centered mixed-joint averages $l_{i,k,m}$ with $k\in\{0,\ldots,4\}$ and $m\in\{1,\ldots,4\}$.

\begin{proof}[\textupandbold{Proof of Lemma~\ref{lemma-mixing}}]
For a generic individual $i$, fix integers $k\in\{1,\ldots,8\}$ and $j_1,\ldots,j_k\in\{1,\ldots,4\}$.
Omit $\theta_0$. By definition, $\E_0[l_{it,j_m}]=0$ and
\(
l_{i,j_m}=T^{-1}\sum_{t=1}^T l_{it,j_m}.
\)
Let $\Pi_k$ denote the set of all partitions of $\{1,\ldots,k\}$. For $\pi\in\Pi_k$ and a block $B\in\pi$, write
\[
\kappa_B\defined\mathrm{cum}\Bigl(\sum_{t=1}^T l_{it,j_m}:m\in B\Bigm|\calG_i\Bigr)
\]
for the joint cumulant conditional on $\calG_i$.
Since $\prod_{m=1}^{k}l_{i,j_m}=T^{-k}\prod_{m=1}^{k}\bigl(\sum_{t=1}^T l_{it,j_m}\bigr)$, the Leonov--Shiryaev moment--cumulant formula \citep{leonov-shiryaev} yields
\begin{equation}\label{eq:moment-cumulant-prod}
\E_0\prod_{m=1}^{k} l_{i,j_m}
=T^{-k}\sum_{\pi\in\Pi_k}\prod_{B\in\pi}\kappa_B.
\end{equation}
We have to show that for every block $B$ with $r\defined|B|\in\{1,\ldots,8\}$,
\begin{equation}\label{eq:cum-S-order}
|\kappa_B|\overset{\wpone}{\lesssim}(1+d_i)\,T.
\end{equation}
For $r=1$, $\kappa_B=\E_0\bigl[\sum_{t=1}^T l_{it,j_m}\bigr]=0$ by centering, which is stronger than~\eqref{eq:cum-S-order}. For $r\ge 2$, multilinearity of cumulants gives
\[
\kappa_B
=\sum_{t_1=1}^T\cdots\sum_{t_r=1}^T
\mathrm{cum}\bigl(l_{it_1,j_{m_1}},\ldots,l_{it_r,j_{m_r}}\mid\calG_i\bigr),
\]
where $\{m_1,\ldots,m_r\}=B$. Hence, by the triangle inequality and Lemma~\ref{lemma-cumulants},
\[
|\kappa_B|
\le
\sum_{t_1=1}^T\cdots\sum_{t_r=1}^T
\bigl|\mathrm{cum}\bigl(l_{it_1,j_{m_1}},\ldots,l_{it_r,j_{m_r}}\mid\calG_i\bigr)\bigr|\overset{\wpone}{\le} C_r(1+d_i)\,T,
\]
which is~\eqref{eq:cum-S-order}. Since partitions with block size one are zero by centering, we only need to consider partitions with block size of at least two. Consequently, for a partition $\pi$ with $\min_{B\in\pi}|B|\ge 2$, one has $|\pi|\le\lfloor k/2\rfloor\le 4$, and
\[
\biggl|T^{-k}\prod_{B\in\pi}\kappa_B\biggr|
\overset{\wpone}{\lesssim} (1+d_i)^{|\pi|}\,T^{|\pi|-k}
\le (1+d_i)^{4}\,T^{-\lceil k/2\rceil},
\]
using $|\pi|-k\le\lfloor k/2\rfloor-k=-\lceil k/2\rceil$.
Summing over the finitely many partitions in $\Pi_k$ and using~\eqref{eq:moment-cumulant-prod},
\[
\Bigl|\E_0\prod_{m=1}^{k} l_{i,j_m}\Bigr|
\overset{\wpone}{\lesssim} (1+d_i)^{4}\,T^{-\lceil k/2\rceil}.
\]
Taking the unconditional expectation and applying $\E[d_i^{4}]<\infty$ from Assumption~\ref{ass-moments}\textup{(i)} yields~\eqref{eq:mixing-prod-individual}. For the sample averages across $i$, expand
\[
\E_0\prod_{m=1}^{k} l_{j_m}
=n^{-k}\sum_{i_1=1}^n\cdots\sum_{i_k=1}^n
\E_0\prod_{m=1}^{k} l_{i_m j_m}.
\]
By cross-sectional independence in Assumption~\ref{ass-sampling}\textup{(i)},
the expectation factors across distinct individuals.
Each tuple $(i_1,\ldots,i_k)$ induces an equality partition $\rho$ of $\{1,\ldots,k\}$
with $q$ blocks of sizes $r_1,\ldots,r_q$ (so $\sum_b r_b=k$), and
\[
\E_0\prod_{m=1}^{k} l_{i_m j_m}
=\prod_{b=1}^{q}
\E_0\prod_{m\in B_b} l_{i^{(b)},j_m},
\]
where $i^{(b)}$ is the common label on block $B_b$.
If some $r_b=1$, the corresponding factor vanishes by centering, so only partitions
with $\min_b r_b\ge 2$ contribute, which implies that  $q\le\lfloor k/2\rfloor$.
For such a partition, there are at most $n^q$ assignments of the $q$ blocks to
distinct individuals, and~\eqref{eq:mixing-prod-individual} gives
\[
\Bigl|\E_0\prod_{m\in B_b} l_{i^{(b)},j_m}\Bigr|
=O\bigl(T^{-\lceil r_b/2\rceil}\bigr)
\]
inside each block. The contribution of $\rho$ is therefore
\[
O\bigl(n^{q-k}\,T^{-\sum_b\lceil r_b/2\rceil}\bigr)=
O\bigl(n^{q-k}\,T^{-\sum_b\lceil r_b/2\rceil}\bigr).
\]
To see the last equality, note that since $q-k\le\lfloor k/2\rfloor-k=-\lceil k/2\rceil$ and
$\sum_b\lceil r_b/2\rceil\ge\lceil k/2\rceil$
(because $r_b\ge 2$ and $\sum_b r_b=k$). Summing over the finitely many partitions $\rho$ yields~\eqref{eq:mixing-prod-sample}.
\end{proof}

\newcommand{\ellitalpha}{\ell_{it\alpha}}
\newcommand{\ellitalphasq}{\ell_{it\alpha\alpha}}
\newcommand{\ellitalphacube}{\ell_{it\alpha\alpha\alpha}}

\section{Proofs}\label{sec-proofs}
This section collects the proofs of the paper's main results.
In many instances, we have to show that each component of a vector or matrix is of a specific stochastic order. Since the same proof can be applied component-wise, we will often assume $\dim(\theta_0)=1$ to simplify notation.
The product-moment rates used below are those of Lemmas~\ref{lemma-cumulants} and~\ref{lemma-mixing} in Appendix~\ref{sec-intermediate}.

\subsection{Main proofs}
We begin by showing Lemma \ref{lemma-variance}.
\begin{proof}[\textupandbold{Proof of Lemma \ref{lemma-variance}}]
Let $\dim(\theta_0)=1$. Recall that $D=n^{-1}\sumi D_i$. Since the data is iid across individuals by Assumption \ref{ass-sampling}, the desired result follows if $\var_0(D_i)=\Op(1)$. To see this,  note that since $\E_0[D_i]=\IB$, where $\IB=0$ if $\ell$ is the target likelihood and $\IB=\Op(T^{-1})$ if $\ell$ is the profile likelihood,
\[
\var_0(D_i)=\E_0[T^2\ell_{i,1}^4+ 2T\ell_{i,1}^2\ell_{i,2}+\ell_{i,2}^2]-\IB^2.
\]
Using that $\ell_{i,1}=l_{i,1}+\lambda_{i,1}$ and $\ell_{i,2}=l_{i,2}+\lambda_{i,2}$, we get
\begin{align*}
\var_0(D_i)=& T^2 \left( \E_0 [ l_{i,1}^4  ] + 4\lambda_{i,1} \E_0 [ l_{i,1}^3  ] + 6\lambda_{i,1}^2 \E_0 [ l_{i,1}^2  ] + \lambda_{i,1}^4 \right) \\
&\quad + 2T \left( \E_0 [ l_{i,1}^2 l_{i,2}  ] + \lambda_{i,2} \E_0 [ l_{i,1}^2  ] + 2\lambda_{i,1} \E_0 [ l_{i,1} l_{i,2}  ] + \lambda_{i,1}^2 \lambda_{i,2} \right) \\
&\quad +  \E_0 [ l_{i,2}^2  ] + \lambda_{i,2}^2 -\IB^2.
\end{align*}
By Lemma~\ref{lemma-mixing}, expectations of products of $k$ centered likelihood terms are of order $O(T^{-\lceil k/2\rceil})$. Moreover, recall that $\lambda_{i,1}^\frakt=0$ and $\lambda_{i,1}^\frakp=\Op(T^{-1})$, whereas $\lambda_{i,2}=\Op(1)$ by Assumption \ref{ass-moments}. Therefore, 
\begin{align*}
    \var_0(D_i) &= T^2\E_0 [ l_{i,1}^4] + 2T (\E_0 [ l_{i,1}^2 l_{i,2} ] + \lambda_{i,2} \E_0 [ l_{i,1}^2]) + \E_0 [ l_{i,2}^2 ] + \lambda_{i,2}^2 && \text{(target)} \\
    &\quad + T^2(4\lambda_{i,1} \E_0 [ l_{i,1}^3 ] + 6\lambda_{i,1}^2 \E_0 [ l_{i,1}^2 ] + \lambda_{i,1}^4) + 2T(2\lambda_{i,1} \E_0 [ l_{i,1} l_{i,2} ] + \lambda_{i,1}^2 \lambda_{i,2}) - \IB^2, && \text{(profile)}
\end{align*}
where the first line contains all terms of order $\Op(1)$ that appear in the variance of the target indicator, whereas the second line contains additional terms of order $\Op(T^{-1})$ or lower due to the incidental parameter estimation in the profile likelihood. Overall, $\var_0(D_i)=\Op(1)$, which finishes the proof.
\end{proof}

The proofs of Lemmas \ref{lemma-scaled-indicator} and \ref{lemma-variance-estimator} require stochastic orders for the indicator and its derivatives. To establish these, we need the following Lemmata.

\begin{lem}\label{lemma-rates}
For $k=0,...,4$, let $D_{i,k}(\theta)$ denote the $k$-th derivative of $D_i(\theta)$, with $D_{i,0}(\theta)=D_i(\theta)$. Under Assumptions~\ref{ass-sampling}, \ref{ass-moments}, and Lemma~\ref{lemma-mixing}, 
\begin{align}
    \E_0[D_{i,1}(\theta_0)] &= \Op(1) \mbox{ and }\var_0(D_{i,1}(\theta_0))=\Op(T) \label{eq:rate_first_deriv} \\
    \sup_\theta |D_{i,k}(\theta)| &= \Op(T) \mbox{ for }k\in\{0,1,2,3,4\}. \label{eq:rate_second_deriv-b}\\
    \E_0[D_{i,k}(\theta_0)] &= \Op(T) \mbox{ and }\var_0(D_{i,k}(\theta_0))=\Op(T) \mbox{ for }k\in\{2,3,4\}.  \label{eq:rate_kth_deriv} 
\end{align}
\end{lem}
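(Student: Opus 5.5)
Work with $\dim(\theta_0)=1$, as elsewhere in this section, and differentiate $D_i(\theta)=T\ell_{i,1}(\theta)^2+\ell_{i,2}(\theta)$ directly. This gives
\[
D_{i,1}=2T\ell_{i,1}\ell_{i,2}+\ell_{i,3},\qquad
D_{i,2}=2T(\ell_{i,2}^2+\ell_{i,1}\ell_{i,3})+\ell_{i,4},
\]
and in general $D_{i,k}=T\sum_{a+b=k}\binom{k}{a}\ell_{i,a+1}\ell_{i,b+1}+\ell_{i,k+2}$ for $k\le 4$. So $D_{i,k}$ involves pseudo-likelihood derivatives up to order six. For the profile likelihood, these derivatives are themselves expressed, via implicit differentiation of $\alphahatiT(\theta)$, through the mixed partials $\ell_{i,k,m}$. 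This is where the orders seven to ten in Assumption~\ref{ass-moments}(ii) come in.

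For \eqref{eq:rate_second_deriv-b}, bound each factor uniformly in $\theta$. By Assumption~\ref{ass-moments}(ii), $\E_0\sup_\theta|\ell_{i,j}(\theta)|^{16}\le \max_t\E_0\sup_\theta|\ell_{it,j}|^{16}\le D_i$ by convexity of the $T$-average. The same bound holds for profile derivatives: their implicit-function representation has numerators that are products of mixed partials dominated by $D_i$, and denominators bounded away from zero by part~(iii). Markov's inequality together with $\E[D_i^4]<\infty$ then gives $\sup_\theta|\ell_{i,j}|=\Op(1)$. Since $D_{i,k}$ is $T$ times a finite sum of products of two such factors plus one further factor, $\sup_\theta|D_{i,k}|=\Op(T)$. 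The same reasoning gives $\E_0 D_{i,k}(\theta_0)=\Op(T)$ in \eqref{eq:rate_kth_deriv}, using Cauchy--Schwarz and the $L^{16}$ envelopes on each product.

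The sharper claims use the centered decomposition $\ell_{i,j}=l_{i,j}+\lambda_{i,j}$ at $\theta_0$, exactly as in the proof of Lemma~\ref{lemma-variance}. For \eqref{eq:rate_first_deriv},
\[
\E_0 D_{i,1}=2T\bigl(\E_0[l_{i,1}l_{i,2}]+\lambda_{i,1}\lambda_{i,2}\bigr)+\lambda_{i,3}.
\]
Lemma~\ref{lemma-mixing} with $k=2$ gives $T\E_0[l_{i,1}l_{i,2}]=O(1)$. Also $\lambda_{i,1}=0$ for the target and $\lambda_{i,1}=\Op(T^{-1})$ for the profile, so $T\lambda_{i,1}\lambda_{i,2}=\Op(1)$, and $\lambda_{i,3}=\Op(1)$. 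Hence $\E_0 D_{i,1}=\Op(1)$.

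For the variance, expand $\var_0(2T\ell_{i,1}\ell_{i,2}+\ell_{i,3})$. Its leading piece is
\[
4T^2\var_0(l_{i,1}l_{i,2}+\lambda_{i,2}l_{i,1}+\lambda_{i,1}l_{i,2}).
\]
Here Lemma~\ref{lemma-mixing} applies to each term:
\begin{itemize}
\item[] $T^2\E_0[l_{i,1}^2l_{i,2}^2]=O(1)$, by the fourth-order case;
\item[] $T^2\lambda_{i,2}^2\E_0 l_{i,1}^2=O(T)$, since $\E_0 l_{i,1}^2=O(T^{-1})$;
\item[] the cross term $T^2\lambda_{i,2}\E_0[l_{i,1}^2l_{i,2}]=O(1)$, by the third-order case;
\item[] the covariance with $\ell_{i,3}$ contributes $T\,O(T^{-1})$ or $T\lambda$-type terms of order at most $O(T)$.
\end{itemize}
The dominant term is therefore $4T^2\lambda_{i,2}^2\E_0 l_{i,1}^2=O(T)$, giving $\var_0 D_{i,1}=\Op(T)$. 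The same bookkeeping for $k=2,3,4$ gives $\var_0 D_{i,k}=\Op(T)$. Each $D_{i,k}$ is $T\times(\text{products of two derivatives})$, and after centering, every term containing at least one centered factor has variance $T^2\cdot O(T^{-1})$ at worst. The remaining pure-$\lambda$ product $T\lambda\lambda$ is nonrandom given $\calG_i$ and drops out of $\var_0$.

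The main obstacle is the profile case. There, $\ell^\frakp_{i,j}$ is not a simple $T$-average of period contributions, so Lemma~\ref{lemma-mixing} does not apply verbatim. I would expand $\alphahatiT(\theta_0)$ around $\alphai^\frakt(\theta_0)$ in the usual stochastic expansion, which writes $\ell^\frakp_{i,j}=\ell^\frakt_{i,j}+(\text{products of centered mixed partials }l_{i,k,m})/\text{Hessians}+O(T^{-1})$-remainders. I would then apply Lemma~\ref{lemma-mixing} to the mixed-partial products, as its closing remark permits, and invoke Assumption~\ref{ass-remainder} so that the expectations and variances of the remainders keep their $T$-rates. This yields $\lambda^\frakp_{i,j}=\lambda^\frakt_{i,j}+O(T^{-1})$ and centered parts with the same moment orders as in the target case, after which the computations above go through unchanged.
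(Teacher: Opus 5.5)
Your proof is correct and follows the paper's argument: explicit Leibniz differentiation of $D_i=T\ell_{i,1}^2+\ell_{i,2}$, uniform envelope bounds that carry the factor $T$, and the centering $\ell_{i,j}=l_{i,j}+\lambda_{i,j}$ at $\theta_0$ combined with Lemma~\ref{lemma-mixing} and $\lambda_{i,1}\in\{0,\Op(T^{-1})\}$. As in the paper, the dominant variance term is $4T^2\lambda_{i,2}^2\E_0 l_{i,1}^2=O(T)$.

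Your extra treatment of the profile case goes beyond the paper, which simply reads Assumption~\ref{ass-moments}\textup{(ii)} and Lemma~\ref{lemma-mixing} as applying to either pseudo-likelihood. If you do take the implicit-function route, note that part~\textup{(iii)} only bounds the expected $\alpha$-Hessian at $\alpha_i^\frakt(\theta)$ on a neighborhood $N$. It does not bound the sample Hessian at $\alphahatiT(\theta)$ over all of $\Theta$, so that step needs an additional uniform-consistency argument.
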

\begin{proof}[\textupandbold{Proof of Lemma \ref{lemma-rates}}]
    Recall that $D_i(\theta) = T \elliOne(\theta)^2 + \elliTwo(\theta)$ and take the first four derivatives of $D_i(\theta)$ with respect to $\theta$, which yields
    \begin{align*}
    D_{i,1}(\theta) &= 2T \elliOne(\theta) \elliTwo(\theta) + \ell_{i,3}(\theta), \\
    D_{i,2}(\theta) &= 2T \elliTwo(\theta)^2 + 2T \elliOne(\theta) \elliThree(\theta) + \ell_{i,4}(\theta), \\
    D_{i,3}(\theta) &= 6T \elliTwo(\theta)\elliThree(\theta) + 2T \elliOne(\theta)\ell_{i,4}(\theta) + \ell_{i,5}(\theta), \\
    D_{i,4}(\theta) &= 6T \elliThree(\theta)^2 + 8T \elliTwo(\theta)\ell_{i,4}(\theta) + 2T \elliOne(\theta)\ell_{i,5}(\theta) + \ell_{i,6}(\theta).
    \end{align*}
    Recall that $\ell_{i,k}(\theta)$ is a time-series average of individual- and period-specific likelihood derivatives that are each uniformly $\Op(1)$ in $\theta$ by Assumption \ref{ass-moments}. Due to the presence of the factor $T$ in the leading term, it thus follows that  $\sup_\theta|D_{i,k}(\theta)|=\Op(T)$. Since $D_{i,1}(\theta)$ depends on the score of the pseudo-likelihood, its rate changes when evaluated at the true parameter. Again by Assumption \ref{ass-moments}, $\ell_{3} = \Op(1)$.  By centering the pseudo-likelihood derivatives around its expectation, $\ellitOne=\lambda_{i,1}+l_{i,1}$ and $\elliTwo = \lambda_{i,2} + l_{i,2}$. Hence,
    \[
    2T \elliOne \elliTwo=2T(\lambda_{i,1}+l_{i,1})(\lambda_{i,2} + l_{i,2})
    =2T\Bigl(\lambda_{i,1}\lambda_{i,2} + \lambda_{i,1}l_{i,2}+l_{i,1}\lambda_{i,2}+l_{i,1}l_{i,2}\Bigr).
    \]
    Recall that $\lambda_{i,1}^\frakt=0$ and $\lambda_{i,1}^\frakp=\Op(T^{-1})$, and that the centered terms have mean zero and variance of order $\Op(T^{-1})$ by Lemma~\ref{lemma-mixing}. Moreover, by \eqref{eq:uniform-bound}, $\lambda_{i,k}=\Op(1)$ for $k=1,...,6$. Hence,
    \[
    \E_0[2T \elliOne \elliTwo]=2T\E_0[l_{i,1}l_{i,2}]+ 2T\lambda_{i,1}\lambda_{i,2}=\Op(1).
    \]
    Again by Lemma~\ref{lemma-mixing} and \eqref{eq:uniform-bound}, 
    \[
    \var_0(2T \elliOne \elliTwo)=\var_0(2T l_{i,1}\lambda_{i,2})+\Op(1)=\Op(T).
    \]
    Consequently, $\E_0D_{i,1}(\theta_0)=\Op(1)$ and $\var_0 D_{i,1}(\theta_0)=\Op(T)$, which is~\eqref{eq:rate_first_deriv}.
    Sample averages inherit the same rates from Assumption~\ref{ass-sampling}\textup{(i)}. Finally, we show \eqref{eq:rate_kth_deriv} by using a centering argument. Notice that for $k=2,3,4$, $D_{i,k}(\theta_0)$ is of the form
    \[
    c_1T\ell_{i,v_1}\ell_{i,v_2}+ c_2T\ell_{i,v_3}\ell_{i,v_4}+c_3T\ell_{i,1}\ell_{i,v_5}+\ell_{i,v_6}
    \]
    where $v_1,...,v_6\in\{2,...,6\}$ and $c_1, c_2, c_3$ are constants. By centering the likelihood,
    \[
    T\ell_{i,v_1}\ell_{i,v_2}=Tl_{i,v_1}l_{i,v_2}+T\lambda_{i,v_1}l_{i,v_2}+T\lambda_{i,v_2}l_{i,v_1}+T\lambda_{i,v_1}\lambda_{i,v_2}. 
    \]
    Since the centered terms have mean zero, 
    \[
    \E_0[T\ell_{i,v_1}\ell_{i,v_2}]=T\E_0[l_{i,v_1}l_{i,v_2}]+T\lambda_{i,v_1}\lambda_{i,v_2}.
    \]
    By Lemma~\ref{lemma-mixing}, $\E_0[l_{i,v_1}l_{i,v_2}]=\Op(T^{-1})$, so that $T\E_0[l_{i,v_1}l_{i,v_2}]=\Op(1)$. Moreover, by \eqref{eq:uniform-bound}, $\lambda_{i,v_j}=\Op(1)$, and therefore $T\lambda_{i,v_1}\lambda_{i,v_2}=\Op(T)$. Hence, $\E_0[T\ell_{i,v_1}\ell_{i,v_2}]=\Op(T)$. For the variance, note that $T\lambda_{i,v_1}\lambda_{i,v_2}$ is non-random with respect to $\var_0$, so that
    \[
    \var_0(T\ell_{i,v_1}\ell_{i,v_2})=\var_0\bigl(Tl_{i,v_1}l_{i,v_2}+T\lambda_{i,v_1}l_{i,v_2}+T\lambda_{i,v_2}l_{i,v_1}\bigr).
    \]
    Again by Lemma~\ref{lemma-mixing} and \eqref{eq:uniform-bound}, $\var_0(T\lambda_{i,v_1}l_{i,v_2})=\Op(T)$ and $\var_0(Tl_{i,v_1}l_{i,v_2})=\Op(1)$. The same argument applies to the remaining product terms in $D_{i,k}(\theta_0)$, whereas $\ell_{i,v_6}=\Op(1)$ contributes only at a lower order.     Consequently, $\E_0[D_{i,k}(\theta_0)]=\Op(T)$ and $\var_0(D_{i,k}(\theta_0))=\Op(T)$ for $k\in\{2,3,4\}$, which is~\eqref{eq:rate_kth_deriv}.
\end{proof}
    
The next Lemma relates the individual indicator of the profile likelihood to the corresponding individual indicator of the target likelihood.
\begin{lem}\label{lem:indicators}
Let Assumptions~\ref{ass-sampling},~\ref{ass-moments}, and~\ref{ass-remainder} hold.
For $k=0,...,4$, let $\Delta_{ik}(\theta)$ denote the $k$-th derivative of $\Delta_i(\theta)$, where 
    \begin{equation}
        \Delta_i(\theta)\defined D_i^\frakp(\theta)-D_i^\frakt(\theta),
        \qquad
        \delta(\theta)\defined\E_0[\Delta_i(\theta)].
    \end{equation}
    Then, $\Delta_{ik}(\theta)=\Op(T^{-1/2})$,   $\delta(\theta)=\Op(T^{-1})$ and $\var_0(\Delta_{ik}(\theta))=\Op(T^{-1})$.

\end{lem}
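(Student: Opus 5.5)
The plan is to expand the profile pseudo-likelihood around the target pseudo-likelihood, using the stochastic expansion of $\alphahatiT(\theta)$ around $\alphai^\frakt(\theta)$, and then to push this expansion through the definition of $D_i$. We take $\dim(\theta_0)=1$ and work at a generic $\theta$ in the neighborhood $N$ of Assumption~\ref{ass-moments}(iii).

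\textbf{Step 1: expand $\alphahatiT(\theta)$.} Write $\hat\varepsilon_i(\theta)\defined\alphahatiT(\theta)-\alphai^\frakt(\theta)$. A Taylor expansion of the first-order condition $\ell_{i,0,1}(\theta,\alphahatiT(\theta))=0$ around $\alphai^\frakt(\theta)$ gives
\[
\hat\varepsilon_i(\theta)=-\lambda_{i,0,2}^{-1}\,l_{i,0,1}+R_{i}(\theta),
\]
with $R_i=\Op(T^{-1})$. Here $\lambda_{i,0,2}$ is bounded away from zero by Assumption~\ref{ass-moments}(iii), and Lemma~\ref{lemma-mixing} gives $l_{i,0,1}=\Op(T^{-1/2})$. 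Hence $\hat\varepsilon_i=\Op(T^{-1/2})$.

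\textbf{Step 2: expand the profile derivatives.} Substituting into $\ell_i^\frakp(\theta)=\elliT(\theta,\alphai^\frakt(\theta)+\hat\varepsilon_i)$ and differentiating in $\theta$ via the chain rule (envelope-type identities) expresses each $\ell^\frakp_{i,k}$, $k=1,2$, as
\[
\ell^\frakp_{i,k}=\ell^\frakt_{i,k}+a_{i,k}\,\hat\varepsilon_i+b_{i,k}\,\hat\varepsilon_i^2+\Op(T^{-3/2}).
\]
The coefficients $a_{i,k},b_{i,k}$ are mixed partials $\ell_{i,k',m}$. Centering them as $\lambda+l$ and using Lemma~\ref{lemma-mixing} yields
\[
\ell^\frakp_{i,1}-\ell^\frakt_{i,1}=\beta_{i,1}+\Op(T^{-1}),\qquad \ell^\frakp_{i,2}-\ell^\frakt_{i,2}=\beta_{i,2}+\Op(T^{-1}),
\]
where each $\beta_{i,k}$ is linear in centered averages, so that $\E_0\beta_{i,k}=0$ and $\var_0\beta_{i,k}=O(T^{-1})$.

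For the score there is a further reduction. Because $\lambda_{i,1,1}-\lambda_{i,0,2}\,\partial_\theta\alphai^\frakt=0$ at the target (the defining property of $\alphai^\frakt$), the linear term in the profile score actually cancels to higher order. What remains is the product $l_{i,1,1}\,l_{i,0,1}$ together with quadratic terms, which is the standard score bias of order $T^{-1}$. So $\ell^\frakp_{i,1}-\ell^\frakt_{i,1}=\Op(T^{-1})$ with mean $O(T^{-1})$.

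\textbf{Step 3: assemble $\Delta_i$.} We have
\[
\Delta_i=T\bigl(\ell^\frakp_{i,1}-\ell^\frakt_{i,1}\bigr)\bigl(\ell^\frakp_{i,1}+\ell^\frakt_{i,1}\bigr)+\bigl(\ell^\frakp_{i,2}-\ell^\frakt_{i,2}\bigr).
\]
In the first term, $\ell^\frakp_{i,1}+\ell^\frakt_{i,1}=\Op(T^{-1/2})$ (its mean is $O(T^{-1})$), so the term is $T\cdot\Op(T^{-1})\cdot\Op(T^{-1/2})=\Op(T^{-1/2})$.

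Its expectation is the main obstacle. One must show it is $O(T^{-1})$, not $O(T^{-1/2})$. The product of the score bias with $l_{i,1}$ contains terms such as $T\,\E_0[l_{i,1,1}\,l_{i,0,1}\,l_{i,1}]$, a third-order product. By Lemma~\ref{lemma-mixing} this is $O(T^{1-2})=O(T^{-1})$, and the $\lceil k/2\rceil$ sharpening for odd $k$ is exactly what is needed here. Fourth-order terms contribute $T\cdot O(T^{-2})$.

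For the second term, $\E_0\beta_{i,2}=0$ and the quadratic remainders have mean $O(T^{-1})$. Combining these gives $\delta(\theta)=O(T^{-1})$, which is the information bias. The variance bound $\var_0(\Delta_i)\le\E_0\Delta_i^2=O(T^{-1})$ follows by expanding the square and applying Lemma~\ref{lemma-mixing} with $k\le 8$ factors. The envelopes in Assumption~\ref{ass-moments}(ii) and Hölder's inequality control the remainders.

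\textbf{Step 4: derivatives.} For $\Delta_{ik}$, $k\le4$, we differentiate the expansion of Step 3 term by term. Each $\theta$-derivative only replaces likelihood derivatives by higher-order mixed partials, which are covered by the envelopes up to order $10$. Assumption~\ref{ass-remainder} guarantees that differentiating the remainders, and taking their expectations, preserves their $T$-rates. The same counting as above then gives $\Delta_{ik}=\Op(T^{-1/2})$ and $\var_0(\Delta_{ik})=O(T^{-1})$.
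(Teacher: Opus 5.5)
Your Steps 1--3 are essentially the paper's argument. The paper expands the likelihood difference $\ell_i^\frakp-\ell_i^\frakt=B_i^{(1)}+B_i^{(3/2)}+B_i^{(2)}$ with $B_i^{(1)}=-l_{i,0,1}^2/(2T\lambda_{i,0,2})$, differentiates it to get $\Gamma_{i,1},\Gamma_{i,2}$, and writes $\Delta_i=2T\ell^\frakt_{i,1}\Gamma_{i,1}+T\Gamma_{i,1}^2+\Gamma_{i,2}$. It then counts centered factors with Lemma~\ref{lemma-mixing}, using the odd-order sharpening exactly where you do.

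Expanding the likelihood difference is slightly cleaner than your chain-rule route. That difference is quadratic in centered terms, so \emph{all} its $\theta$-derivatives are $\Op(T^{-1})$. Hence your $\beta_{i,2}$ vanishes too, not only the linear score term. This is harmless for your upper bounds. Also, the identity should read $\lambda_{i,1,1}+\lambda_{i,0,2}\partial_\theta\alpha_i^\frakt=0$.

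Note, however, that Step 3 uses $\ell^\frakp_{i,1}+\ell^\frakt_{i,1}=\Op(T^{-1/2})$, i.e.\ $\lambda^\frakt_{i,1}(\theta)=0$. That holds only at $\theta=\theta_0$. For fixed $\theta\ne\theta_0$ the piece $2T\lambda^\frakt_{i,1}(\theta)\Gamma_{i,1}(\theta)$ is $\Op(1)$ with nonzero mean. So your argument, like the paper's, proves the $k=0$ claims at $\theta_0$ only. That is all that Lemmas~\ref{lemma-scaled-indicator} and~\ref{lemma-variance-estimator} actually use.

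The genuine gap is Step 4. Your counting principle is that differentiation only swaps likelihood derivatives for higher-order ones, so the number of centered factors is unchanged. This fails for the factor $\ell^\frakt_{i,1}$. At $\theta_0$ it is centered only because $\lambda^\frakt_{i,1}(\theta_0)=0$, but its derivative is $\ell^\frakt_{i,2}=\lambda^\frakt_{i,2}+l^\frakt_{i,2}$ with $\lambda^\frakt_{i,2}(\theta_0)\ne 0$. Hence $\Delta_{i1}(\theta_0)$ contains $2T\lambda^\frakt_{i,2}\Gamma_{i,1}$. This term is $\Op(1)$, has mean $2T\lambda^\frakt_{i,2}\lambda^\frakp_{i,1}=O(1)$ (Hessian times scaled score bias), and has variance of order one. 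Assumption~\ref{ass-remainder} cannot rescue this, since it is a leading term, not a remainder.

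The paper's own Neyman--Scott appendix confirms the problem. Differentiating $\Delta_i=Tb_{i,1}^2+2T\ell^\frakt_{i,1}b_{i,1}+b_{i,2}$ gives $\Delta_{i1}(\theta_0)=\xi_i/(2\theta_0^3)+\Op(T^{-1/2})$ with $\xi_i\sim\chi^2_1$. This matches the displayed $\E_0[D^\frakp_{i,1}(\theta_0)]=1/(2\theta_0^3)-2/(T\theta_0^3)$ versus $\E_0[D^\frakt_{i,1}(\theta_0)]=0$.

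So the $k\ge 1$ claims, and the claims at generic $\theta$, cannot be proved as stated. The paper's closing sentence (``differentiation does not change the number of centered likelihood terms'') makes the same mistake. Restrict the statement to $k=0$ at $\theta=\theta_0$; there your Steps 1--3 are sound.
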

\begin{proof}
    To prove this Lemma, we derive an explicit expression of $\Delta_i$. For this, we need an expansion of the profile likelihood around the target likelihood.
    Following \cite{Schumann-second}, and using the mixed-partial notation from Assumption~\ref{ass-moments}\textup{(ii)}, write $l_{i,k,m}(\theta,\alpha_i)=\sqrt{T}\bigl(\ell_{i,k,m}(\theta,\alpha_i)-\lambda_{i,k,m}(\theta,\alpha_i)\bigr)$ for the centered and $\sqrt{T}$-scaled versions.
    Using this notation, the individual profile likelihood can be expanded around the individual target likelihood as
\newcommand{\alphabar}{\bar{\alpha}}
\newcommand{\alphastar}{\alpha^\frakt}

\begin{equation}
\elliT(\theta,\alphahatiT(\theta))-\elliT(\theta,\alphaiT^\frakt(\theta))= B_{i}^{(1)}(\theta)+B_{i}^{(3/2)}(\theta)+B_{i}^{(2)}(\theta),
\label{eq:profile-IB-expansion}
\end{equation}
where $B_i^{(2)}(\theta)=B_{i}^{(2)}(\theta,\alphabar_i(\theta))$ and $\alphabar(\theta)$ lies between $\alphahat_i(\theta)$ and $\alphai^\frakt(\theta)$. Suppressing arguments for brevity,
\begin{align*}
B_{i}^{(1)}(\theta)
&\defined -\frac{l_{i,0,1}^{2}}{2T\lambda_{i,0,2}}, \quad B_{i}^{(3/2)}(\theta)
\defined \frac{l_{i,0,1}^{2}\,l_{i,0,2}}{2T^{3/2}\lambda_{i,0,2}^{2}}
-\frac{l_{i,0,1}^{3}\,\lambda_{i,0,3}}{6T^{3/2}\lambda_{i,0,2}^{3}},\\
B_{i}^{(2)}(\theta,\alpha_i)
&\defined -\frac{l_{i,0,1}^{2}\,l_{i,0,2}^{2}}{2T^{2}\lambda_{i,0,2}^{3}}
-\frac{l_{i,0,1}^{3}\,l_{i,0,3}}{6T^{2}\lambda_{i,0,2}^{3}}
+\frac{l_{i,0,1}^{3}\,l_{i,0,2}\,\lambda_{i,0,3}}{2T^{2}\lambda_{i,0,2}^{4}}
-\frac{l_{i,0,1}^{4}\,\lambda_{i,0,3}^{2}}{8T^{2}\lambda_{i,0,2}^{5}}
+\frac{l_{i,0,1}^{4}\,\lambda_{i,0,4}}{24T^{2}\lambda_{i,0,2}^{4}}.
\end{align*}
Inspecting these expressions and using Assumption~\ref{ass-moments}\textup{(ii)}--\textup{(iii)}, $B_i^{(1)}(\theta)=\Op(T^{-1})$ and $B_i^{(3/2)}(\theta)=\Op(T^{-3/2})$ uniformly in a neighborhood of $\theta_0$ and $B_i^{(2)}(\theta,\alpha_i)=\Op(T^{-2})$ uniformly in a neighborhood of $(\theta_0,\alphaio)$. Moreover, since the stochastic order of these terms is determined by the number of centered likelihood derivatives, which is not affected by taking derivatives with respect to $\theta$, differentiation with respect to $\theta$ does not affect the stochastic order of these terms. Now, recall that 
\[
D_i^\frakp(\theta)=T\ell_{i,1}^\frakp(\theta)^2+\ell_{i,2}^\frakp(\theta).
\]
Differentiating \eqref{eq:profile-IB-expansion} with respect to $\theta$ (once and twice) yields $\ell_{i,1}^\frakp=\ell_{i,1}^\frakt+\Gamma_{i,1}$ and $\ell_{i,2}^\frakp=\ell_{i,2}^\frakt+\Gamma_{i,2}$, where
\[
\Gamma_{i,1}(\theta)\defined B_{i,1}^{(1)}(\theta)+B_{i,1}^{(3/2)}(\theta)+B_{i,1}^{(2)}(\theta),
\qquad
\Gamma_{i,2}(\theta)\defined B_{i,2}^{(1)}(\theta)+B_{i,2}^{(3/2)}(\theta)+B_{i,2}^{(2)}(\theta),
\]
so that $\ell_{i,1}^\frakp=\ell_{i,1}^\frakt+\Gamma_{i,1}$ and $\ell_{i,2}^\frakp=\ell_{i,2}^\frakt+\Gamma_{i,2}$. Substituting into the profile indicator and expanding the square,
\begin{align*}
D_i^\frakp(\theta)
&=T\bigl(\ell_{i,1}^\frakt(\theta)+\Gamma_{i,1}(\theta)\bigr)^2+\ell_{i,2}^\frakt(\theta)+\Gamma_{i,2}(\theta)\\
&=T\ell_{i,1}^\frakt(\theta)^2+\ell_{i,2}^\frakt(\theta)+2T\ell_{i,1}^\frakt(\theta)\Gamma_{i,1}(\theta)+T\Gamma_{i,1}(\theta)^2+\Gamma_{i,2}(\theta)\\
&=D_i^\frakt(\theta)+2T\ell_{i,1}^\frakt(\theta)\Gamma_{i,1}(\theta)+T\Gamma_{i,1}(\theta)^2+\Gamma_{i,2}(\theta).
\end{align*}
Expanding the products in $\Gamma_{i,1}$ and $\Gamma_{i,2}$ and collecting the terms according to their stochastic order yields, after some algebra, 
\begin{align}\label{eq:Delta-expansion}
    \Delta_i(\theta)
    &=D_i^\frakp(\theta)-D_i^\frakt(\theta)\notag\\
    &=2T\ell_{i,1}^\frakt(\theta)B_{i,1}^{(1)}(\theta)
    +T\bigl(B_{i,1}^{(1)}(\theta)\bigr)^2
    +B_{i,2}^{(1)}(\theta)
    +2T\ell_{i,1}^\frakt(\theta)B_{i,1}^{(3/2)}(\theta)+R_i^{(3/2)}(\theta)+R_i^{(2)}(\theta),
    \end{align}
    where
    \begin{align*}
        R_i^{(3/2)}(\theta)&=2T\ell_{i,1}^\frakt(\theta)B_{i,1}^{(2)}(\theta)
        +2T B_{i,1}^{(1)}(\theta)B_{i,1}^{(3/2)}(\theta)
        +B_{i,2}^{(3/2)}(\theta) \mbox{ and }\\
        R_i^{(2)}(\theta)&=2T B_{i,1}^{(1)}(\theta)B_{i,1}^{(2)}(\theta)
        +T\bigl(B_{i,1}^{(3/2)}(\theta)\bigr)^2
        +2T B_{i,1}^{(3/2)}(\theta)B_{i,1}^{(2)}(\theta)
        +T\bigl(B_{i,1}^{(2)}(\theta)\bigr)^2
        +B_{i,2}^{(2)}(\theta).
    \end{align*}
By the preceding discussion of the stochastic order, $\E_0[R_i^{(2)}(\theta)]=\Op(T^{-2})$  and $\var_0(R_i^{(2)}(\theta))=\Op(T^{-4})$ in a neighborhood of $\theta_0$. Similarly, $R_i^{(3/2)}(\theta)=\Op(T^{-3/2})$. However, since this remainder term consists of terms that involve products of 5 centered likelihood terms, $\E_0 R_i^{(3/2)}(\theta)=\Op(T^{-2})$ in a neighborhood of $\theta_0$ by Lemma~\ref{lemma-mixing}. Further notice that $2T\ell_{i,1}^\frakt(\theta)B_{i,1}^{(1)}(\theta)=\Op(T^{-1/2})$, but since this term involves 3 centered likelihood terms, $2T\E_0\ell_{i,1}^\frakt(\theta)B_{i,1}^{(1)}(\theta)=\Op(T^{-1})$ by Lemma~\ref{lemma-mixing}. Therefore, $\Delta_i(\theta)=\Op(T^{-1/2})$ and $\delta(\theta)=\E_0[\Delta_i(\theta)]+\Op(T^{-2})$, where
\[
\delta(\theta)=\E_0\Bigl[2T\ell_{i,1}^\frakt(\theta)B_{i,1}^{(1)}(\theta)
+T\bigl(B_{i,1}^{(1)}(\theta)\bigr)^2
+B_{i,2}^{(1)}(\theta)
+2T\ell_{i,1}^\frakt(\theta)B_{i,1}^{(3/2)}(\theta)\Bigr].
\]
Taking the variance on both sides of \eqref{eq:Delta-expansion} and using Assumption~\ref{ass-moments}\textup{(ii)}--\textup{(iii)}, 
\[
    \var_0(\Delta_i(\theta))\leq 4T^2\E_0[(\ell_{i,1}^\frakt(\theta)B_{i,1}^{(1)}(\theta))^2]+\Op(T^{-2})=\Op(T^{-1}).
\]
Since differentiation does not change the number of centered likelihood terms involved in each term, this finishes the proof.

\end{proof}

Using Lemmas \ref{lemma-rates} and \ref{lem:indicators}, we now turn to the proof of Lemma \ref{lemma-scaled-indicator}.
\medskip
\begin{proof}[\textupandbold{Proof of Lemma \ref{lemma-scaled-indicator}}]
The argument uses Assumptions~\ref{ass-sampling}, \ref{ass-moments}, \ref{ass-remainder}, \ref{ass-estimators}, and~\ref{ass-variance}. The profile CLT under parabolic asymptotics requires Assumption~\ref{ass-n-T-rate}\textup{(ii)}. A second-order Taylor expansion of $D^\frakt$ around $\theta_0$ yields
\[
\sqrt{n}D^\frakt(\thetahat^\frakt)=\sqrt{n}D^\frakt(\theta_0)+\sqrt{n}D_1^\frakt(\theta_0)(\thetahat^\frakt-\theta_0)+\mathrm{R}_n^\frakt,
\]
where $\mathrm{R}_n^\frakt=\frac12\sqrt{n}\,D_2^\frakt(\thetabar)(\thetahat^\frakt-\theta_0)^2$ for an intermediate value $\thetabar$ between $\thetahat^\frakt$ and $\theta_0$. By \eqref{eq:rate_first_deriv}, $\E_0[D_{i,1}^\frakt(\theta_0)]=\Op(1)$ and $\var_0(D_{i,1}^\frakt(\theta_0))=\Op(T)$. Hence, under Assumption~\ref{ass-sampling},
\begin{equation}\label{eq:D1-mean-fluct}
D_1^\frakt(\theta_0)=\E_0[D_1^\frakt(\theta_0)]+\Op\bigl(T^{1/2}n^{-1/2}\bigr).
\end{equation}
Consequently, 
\[
\sqrt{n}D_1^\frakt(\theta_0)(\thetahat^\frakt-\theta_0)=\E_0[D_{i,1}^\frakt(\theta_0)]\sqrt{n}(\thetahat^\frakt-\theta_0) + \Op(\sqrt{T})(\thetahat^\frakt-\theta_0),
\]
so that by Assumption~\ref{ass-estimators}
\[
\sqrt{n}D_1^\frakt(\theta_0)(\thetahat^\frakt-\theta_0)=\Op(n^{-1/2})+\Op(T^{-1/2}).
\]
For the second remainder term, recall that $\sup_\theta|D_2^\frakt(\theta)|=\Op(T)$ by Lemma~\ref{lemma-rates}. Hence, Assumption~\ref{ass-estimators} implies that
\[
R_n=\Op(n^{-1/2}).
\]
Overall, we thus get
\begin{equation}\label{eq:rates-indicator-thetahat}
\sqrt{n}D^\frakt(\thetahat^\frakt)=\sqrt{n}D^\frakt(\theta_0)+\Op(n^{-1/2})+\Op(T^{-1/2}),
\end{equation}
as stated in the Lemma. Next, recall from Lemma \ref{lemma-variance} that $\var_0(D_i^\frakt)=\Op(1)$. Moreover, since the target likelihood is information unbiased, $\E_0[D_i^\frakt(\theta_0)]=0$. To apply a CLT for triangular arrays, we use the Lyapunov criterion by showing that $\E_0[D_i^\frakt(\theta_0)^4]=\Op(1)$. To see this, omit the argument $\theta_0$ and the target superscript $^\frakt$, and write $D_i=T\ell_{i,1}^2+\ell_{i,2}$. For the target likelihood, $\lambda_{i,1}=0$, so $\ell_{i,1}=l_{i,1}$ and $\ell_{i,2}=l_{i,2}+\lambda_{i,2}$. Hence,
$D_i=T l_{i,1}^2+l_{i,2}+\lambda_{i,2}$. 
Expanding the fourth power,
\begin{align*}
D_i^4
&=(T l_{i,1}^2+l_{i,2}+\lambda_{i,2})^4\\
&=T^4 l_{i,1}^8
+4T^3 l_{i,1}^6(l_{i,2}+\lambda_{i,2})
+6T^2 l_{i,1}^4(l_{i,2}+\lambda_{i,2})^2
+4T l_{i,1}^2(l_{i,2}+\lambda_{i,2})^3
+(l_{i,2}+\lambda_{i,2})^4.
\end{align*}
Taking expectations while noting that products of $k$ centered individual likelihood derivatives are of order $O(T^{-\lceil k/2\rceil})$ by Lemma~\ref{lemma-mixing} together with $\lambda_{i,2}=\Op(1)$ from Assumption~\ref{ass-moments}, the desired result follows. Next, we continue by showing asymptotic normality for the target indicator. By Assumption~\ref{ass-variance}, the limiting variance $V=\lim_{n,T\to\infty}\var_0(\sqrt{n}D^\frakt(\theta_0))$ exists and is nonsingular.
Writing $\sqrt{n}D^\frakt(\theta_0)=\sum_{i=1}^n \xi_{n,i}$ with
$\xi_{n,i}\defined n^{-1/2}D_i^\frakt(\theta_0)$,
 the triangular array $\{\xi_{n,i}\}_{i=1}^n$ is i.i.d.\ with mean zero and $\sigma_n^2\defined\var_0(\sum_{i=1}^n\xi_{n,i})=\var_0(D_1^\frakt(\theta_0))\to V\in(0,\infty)$ for each $(n,T)$. 
By the Lyapunov criterion with $\delta=2$, 
\[
\frac{1}{\sigma_n^{4}}\sum_{i=1}^n\E_0|\xi_{n,i}|^{4}
=
\frac{1}{\sigma_n^{4}}\,n\cdot n^{-2}\E_0\bigl[D_1^\frakt(\theta_0)^4\bigr]
=
\frac{\E_0\bigl[D_1^\frakt(\theta_0)^4\bigr]}{n\,\sigma_n^{4}}
=\Op(n^{-1})\to 0,
\]
using $\E_0[D_1^\frakt(\theta_0)^4]=\Op(1)$ and $\sigma_n^2\to V>0$. As the Lyapunov condition implies that the Lindeberg condition holds, $\sqrt{n}D^\frakt(\theta_0)\dto\No(0,V)$. 
Combined this result with \eqref{eq:rates-indicator-thetahat} and Slutsky's lemma, $\sqrt{n}D^\frakt(\thetahat^\frakt)\dto\No(0,V)$, which finishes part~(i).

\newcommand{\calD}{\mathcal{D}}

We now continue with part~(ii) for the profile likelihood. To make the role of the score bias of the profile likelihood explicit, define $\calD(\theta)=D_i(\theta)-\E_0[D(\theta)]$ and denote additional derivatives by the corresponding numbers in the index. By a second-order Taylor expansion,
\[
\sqrt{n}D^\frakp(\thetahat^\frakp)=\sqrt{n}D^\frakp(\theta_0)+\sqrt{n}\E_0[D_1(\theta_0)](\thetahat-\theta_0)+\sqrt{n}\calD_1(\theta_0)(\thetahat-\theta_0)+ \frac{1}{2}\sqrt{n}\,D_2^\frakp(\thetabar)(\thetahat^\frakp-\theta_0)^2
\]
 for an intermediate value $\thetabar$ between $\thetahat^\frakp$ and $\theta_0$.
A fourth-order Taylor expansion of $\sqrt{n}D^\frakp(\thetahat^\frakp)$ around $\theta_0$ yields
\begin{multline}
\sqrt{n}D^\frakp(\thetahat^\frakp)
=\sqrt{n}D^\frakp(\theta_0)
+\sqrt{n}D_1^\frakp(\theta_0)(\thetahat^\frakp-\theta_0)
+\frac{1}{2}\sqrt{n}D_2^\frakp(\theta_0)(\thetahat^\frakp-\theta_0)^2\\
+\frac{1}{6}\sqrt{n}D_3^\frakp(\theta_0)(\thetahat^\frakp-\theta_0)^3
+\frac{1}{24}\sqrt{n}D_4^\frakp(\thetabar)(\thetahat^\frakp-\theta_0)^4,
\end{multline}
where $\thetabar$ lies between $\thetahat^\frakp$ and $\theta_0$.
By Lemma~\ref{lemma-rates}, $D_3^\frakp(\theta_0)=\Op(T)$ and $D_4^\frakp(\thetabar)=\Op(T)$. Combined with Assumption~\ref{ass-estimators},
\[
\frac{1}{6}\sqrt{n}D_3^\frakp(\theta_0)(\thetahat^\frakp-\theta_0)^3
=\Op(n^{-1}T^{-1/2})+\Op(n^{-1/2}T^{-1})+\Op(\sqrt{n}\,T^{-2})
\]
and
\[
\frac{1}{24}\sqrt{n}D_4^\frakp(\thetabar)(\thetahat^\frakp-\theta_0)^4
=\Op(n^{-3/2}T^{-1})+\Op(n^{-1/2}T^{-2})+\Op(\sqrt{n}\,T^{-3}),
\]
so that all terms vanish if $n/T^2\to\rho\in[0,\infty)$. 
Next, we consider the first remainder term $\sqrt{n}D_1^\frakp(\theta_0)(\thetahat^\frakp-\theta_0)$, which is shown to contribute to asymptotic bias if $\rho>0$. To see this, we use equation~(F.1) of \citet{schumann-info}, which shows that
\begin{equation}\label{eq:thetahat-linearization}
\thetahat-\theta_0=-\lambda_2^{-1}\ell_1+R,
\end{equation}
where the remainder satisfies
\[
R=\begin{cases}
\Op\bigl((nT)^{-1}\bigr), & \ell=\ell^\frakt,\\
\Op\bigl((nT)^{-1}\bigr)+\Op\bigl((nT^3)^{-1/2}\bigr), & \ell=\ell^\frakp
\end{cases}
\]
and where we omit the argument when evaluating at $\theta_0$. Hence,
\[
\sqrt{n}D_1^\frakp(\thetahat^\frakp-\theta_0)=-\sqrt{n}D_1^\frakp\lambda_2^{-1}\ell_1^\frakp+\sqrt{n}D_1^\frakp R.
\]
By Lemma~\ref{lemma-rates}, $\E_0[D_1^\frakp]=\Op(1)$ and $\var_0(D_1^\frakp)=\Op(T/n)$, so $D_1^\frakp=\Op(1)+\Op(\sqrt{T/n})$. Combined with the profile rate for $R$ in \eqref{eq:thetahat-linearization},
\[
\sqrt{n}D_1^\frakp R=\Op(n^{-1/2}T^{-1})+\Op(T^{-3/2})+\Op(n^{-1}T^{-1/2})=\op(1).
\]
For the leading term, $\lambda_2^{-1}=\Op(1)$ by \eqref{eq:uniform-bound} is non-random with respect to $\var_0$. For the profile likelihood, $\ell_1^\frakp=l_1^\frakp+\lambda_1^\frakp$ with $l_1^\frakp=\Op((nT)^{-1/2})$ and $\lambda_1^\frakp=\Op(T^{-1})$ is the score bias of the profile likelihood. Writing $D_1^\frakp=\E_0[D_1^\frakp]+\calD_1^\frakp$ and centering the score,
\begin{align*}
-\sqrt{n}D_1^\frakp\lambda_2^{-1}\ell_1^\frakp
&=-\sqrt{n}\lambda_2^{-1}\bigl(\E_0[D_1^\frakp]+\calD_1^\frakp\bigr)\bigl(\lambda_1^\frakp+l_1^\frakp\bigr)\\
&=-\sqrt{n}\lambda_2^{-1}\E_0[D_1^\frakp]\,\lambda_1^\frakp
-\sqrt{n}\lambda_2^{-1}\E_0[D_1^\frakp]\,l_1^\frakp
-\sqrt{n}\lambda_2^{-1}\calD_1^\frakp\,\lambda_1^\frakp
-\sqrt{n}\lambda_2^{-1}\calD_1^\frakp\,l_1^\frakp.
\end{align*}
By Assumption~\ref{ass-sampling}, $\E_0[\calD_1^\frakp]=\E_0[l_1^\frakp]=0$. The leading contribution to the expectation is therefore $\sqrt{n}/TB_{T,1}^{\mathrm{score}}$, where
\[
B_{T,1}^{\mathrm{score}}\defined -T\lambda_2^{-1}\E_0[D_1^\frakp]\,\lambda_1^\frakp.
\]
By Lemma~\ref{lemma-rates}, $\E_0[D_1^\frakp]=\Op(1)$, and since $\lambda_1^\frakp=\Op(T^{-1})$ and $\lambda_2^{-1}=\Op(1)$,
\[
 B_{1}^{\mathrm{score}}\defined \plim_{n,T\to\infty}\frac{\sqrt{n}}{T}B_{T,1}^{\mathrm{score}}
 = \sqrt{\rho}\plim_{T\to\infty}B_{T,1}^{\mathrm{score}}.
\]
The second limit exists under the maintained assumptions. To see this, notice that by the first-order score-bias expansion for the profile likelihood \citep{schumann-info,hahn-newey-nonlinear-panels}, $\lambda_{i,1}^\frakp=T^{-1}\beta_i+o_{\mathrm{a.s.}}(T^{-1})$ with $\beta_i$ $\calG_i$-measurable. Averaging across $i$ and using Assumption~\ref{ass-sampling}\textup{(i)} yields $T\lambda_1^\frakp\to_p\beta\defined\E[\beta_i]$. Analogously, $\E_0[D_{i,1}^\frakp]\overset{\wpone}{\to}\mu_i$ by Lemma~\ref{lemma-rates} and time-stationarity in Assumption~\ref{ass-sampling}, so $\E_0[D_1^\frakp]\to_p\mu\defined\E[\mu_i]$, and $\lambda_2\to_p\bar\lambda_2\neq 0$. Continuous mapping then gives $B_{T,1}^{\mathrm{score}}\to_p-\bar\lambda_2^{-1}\mu\beta$ and $B_1^{\mathrm{score}}=-\sqrt{\rho}\,\bar\lambda_2^{-1}\mu\beta$.
For the variance, $\E_0[D_1^\frakp]$, $\lambda_1^\frakp$, and $\lambda_2^{-1}$ are non-random with respect to $\var_0$, $\var_0(\calD_1^\frakp)=\Op(T/n)$, and $\var_0(l_1^\frakp)=\Op((nT)^{-1})$. The dominating term in the variance is $-\sqrt{n}\lambda_2^{-1}\E_0[D_1^\frakp]\,l_1^\frakp$, with $\var_0\bigl(-\sqrt{n}\lambda_2^{-1}\E_0[D_1^\frakp]\,l_1^\frakp\bigr)=\Op(T^{-1})$.
The remaining terms have variances of lower order, so $\var_0(-\sqrt{n}D_1^\frakp\lambda_2^{-1}\ell_1^\frakp)=\Op(T^{-1})$.

Next, we consider the second remainder term $\frac12\sqrt{n}D_2^\frakp(\theta_0)(\thetahat^\frakp-\theta_0)^2$. Using \eqref{eq:thetahat-linearization},
\[
(\thetahat^\frakp-\theta_0)^2=\lambda_2^{-2}(\ell_1^\frakp)^2-2\lambda_2^{-1}\ell_1^\frakp R+R^2,
\]
and therefore
\[
\frac12\sqrt{n}D_2^\frakp(\thetahat^\frakp-\theta_0)^2=\frac12\lambda_2^{-2}\sqrt{n}D_2^\frakp(\ell_1^\frakp)^2+\widetilde{R}_n,
\]
where
\[
\widetilde{R}_n\defined\frac12\sqrt{n}D_2^\frakp\bigl(-2\lambda_2^{-1}\ell_1^\frakp R+R^2\bigr).
\]
By Lemma~\ref{lemma-rates}, $D_2^\frakp=\Op(T)$. Combined with the rates for $\ell_1^\frakp$ and $R$ already established,
\[
\widetilde{R}_n=\Op(n^{-1/2}T^{-1})+\Op(T^{-3/2})+\Op(n^{-1}T^{-1/2})=\op(1).
\]
Writing $D_2^\frakp=\E_0[D_2^\frakp]+\calD_2^\frakp$ and centering the scores as above, the same independence argument yields the leading mean contribution $\sqrt{n}/TB_{T,2}^{\mathrm{score}}$, where
\[
B_{T,2}^{\mathrm{score}}\defined \tfrac12 T\lambda_2^{-2}\E_0[D_2^\frakp]\,(\lambda_1^\frakp)^2.
\]
By Lemma~\ref{lemma-rates}, $\E_0[D_2^\frakp]=\Op(T)$, and since $\lambda_1^\frakp=\Op(T^{-1})$ and $\lambda_2^{-2}=\Op(1)$,
\[
B_2^{\mathrm{score}}\defined \plim_{n,T\to\infty}\frac{\sqrt{n}}{T}B_{T,2}^{\mathrm{score}}
= \sqrt{\rho}\plim_{T\to\infty}B_{T,2}^{\mathrm{score}}.
\]
Again, the second limit exists following analogous arguments. Hence $B_{T,2}^{\mathrm{score}}\to_p\tfrac12\bar\lambda_2^{-2}\mu_2\beta^2$ and $B_2^{\mathrm{score}}=\tfrac12\sqrt{\rho}\,\bar\lambda_2^{-2}\mu_2\beta^2$.
For the variance, the leading term equals $\frac12\lambda_2^{-2}\sqrt{n}D_2^\frakp(\ell_1^\frakp)^2$. Expanding as above,
\begin{align*}
\sqrt{n}D_2^\frakp(\ell_1^\frakp)^2
&=\sqrt{n}\,\E_0[D_2^\frakp]\,(\lambda_1^\frakp)^2
+2\sqrt{n}\,\E_0[D_2^\frakp]\,\lambda_1^\frakp\,l_1^\frakp
+\sqrt{n}\,\E_0[D_2^\frakp]\,(l_1^\frakp)^2\\
&\quad+\sqrt{n}\,\calD_2^\frakp\,(\lambda_1^\frakp)^2
+2\sqrt{n}\,\calD_2^\frakp\,\lambda_1^\frakp\,l_1^\frakp
+\sqrt{n}\,\calD_2^\frakp\,(l_1^\frakp)^2.
\end{align*}
The leading fluctuation is $2\sqrt{n}\,\E_0[D_2^\frakp]\,\lambda_1^\frakp\,l_1^\frakp$. Using $\E_0[D_2^\frakp]=\Op(T)$, $\var_0(\calD_2^\frakp)=\Op(T/n)$, and $\var_0(l_1^\frakp)=\Op((nT)^{-1})$ as before,
\[
\var_0\bigl(2\sqrt{n}\,\E_0[D_2^\frakp]\,\lambda_1^\frakp\,l_1^\frakp\bigr)=\Op(T^{-1}),
\]
and the remaining terms are of lower order. Hence,
\[
\var_0\Bigl(\tfrac12\lambda_2^{-2}\sqrt{n}D_2^\frakp(\ell_1^\frakp)^2\Bigr)=\Op(T^{-1}).
\]
The centered contributions in both score expansions therefore satisfy $\Op(T^{-1/2})=\op(1)$. Together with $\sqrt{n}D_1^\frakp R=\op(1)$ and $\widetilde{R}_n=\op(1)$,
\begin{equation}\label{eq:profile-indicator-approx}
\sqrt{n}D^\frakp(\thetahat^\frakp)=\sqrt{n}D^\frakp(\theta_0)+\frac{\sqrt{n}}{T}B_{T}^{\mathrm{score}}+\op(1),
\end{equation}
where $B_{T}^{\mathrm{score}}\defined B_{T,1}^{\mathrm{score}}+B_{T,2}^{\mathrm{score}}$ collects the terms due to the score bias that are asymptotically non-negligible if $\rho>0$. Next, we focus on the leading term. Recall that by Lemma~\ref{lem:indicators}, $D_i^\frakp(\theta_0)=D_i^\frakt(\theta_0)+\Delta_i(\theta_0)$ so that  $\E_0[\Delta(\theta_0)]=\delta_T(\theta_0)$ is the information bias of the profile likelihood. Thus,
\begin{equation}\label{eq:profile-at-theta0}
\sqrt{n}D^\frakp(\theta_0)
=\sqrt{n}D^\frakt(\theta_0)
+\frac{1}{\sqrt{n}}\sumi\bigl(\Delta_i(\theta_0)-\delta_T(\theta_0)\bigr)
+\sqrt{n}\,\delta_T(\theta_0).
\end{equation}
Since $\var_0(\Delta_i(\theta_0))=\Op(T^{-1})$ by Lemma~\ref{lem:indicators}, 
\[
\frac{1}{\sqrt{n}}\sumi\bigl(\Delta_i(\theta_0)-\E_0[\Delta_i(\theta_0)]\bigr)=\Op(T^{-1/2}).
\]
The remaining term is the information-bias contribution $\sqrt{n}\,\delta_T(\theta_0)$. Writing
\[
\sqrt{n}\,\delta_T(\theta_0)=\frac{\sqrt{n}}{T}B_T^{\mathrm{info}},
\qquad
B_T^{\mathrm{info}}\defined T\delta_T(\theta_0),
\]
Lemma~\ref{lem:indicators} yields $\delta_T(\theta_0)=\Op(T^{-1})$, and therefore
\[
B^{\mathrm{info}}\defined \plim_{n,T\to\infty}\frac{\sqrt{n}}{T}B_T^{\mathrm{info}}
= \sqrt{\rho}\plim_{T\to\infty}B_T^{\mathrm{info}},
\]
where the second limit exists under the maintained assumptions. To see this, notice that by Lemma~\ref{lem:indicators}, $B_T^{\mathrm{info}}$ equals $T$ times the expectation of the leading terms in \eqref{eq:Delta-expansion} up to $o(1)$. Each of these terms is a smooth function of within-individual time averages of strictly stationary, geometrically mixing period contributions, so $T\E_0[\Delta_i(\theta_0)]\overset{\wpone}{\to}\gamma_i$ with $\gamma_i$ $\calG_i$-measurable. Averaging across $i$ and using Assumptions~\ref{ass-sampling} and~\ref{ass-moments} then yields $B_T^{\mathrm{info}}\to_p\gamma\defined\E[\gamma_i]$ and $B^{\mathrm{info}}=\sqrt{\rho}\,\gamma$. The centered information-bias average in \eqref{eq:profile-at-theta0} is $\Op(T^{-1/2})=\op(1)$. Substituting \eqref{eq:profile-at-theta0} into \eqref{eq:profile-indicator-approx} therefore gives
\begin{equation}\label{eq:profile-indicator-rate}
\sqrt{n}D^\frakp(\thetahat^\frakp)
=\sqrt{n}D^\frakt(\theta_0)
+\frac{\sqrt{n}}{T}\bigl(B_{T}^{\mathrm{score}}+B_T^{\mathrm{info}}\bigr)
+\op(1).
\end{equation}
Since $\sqrt{n}D^\frakt(\theta_0)\dto \No(0,V)$ as shown in the proof for part (i), Slutsky's Lemma implies that
\[
    \sqrt{n}D^\frakp(\thetahat^\frakp)\dto \No\Bigl( \sqrt{\rho}\kappa,V\Bigr),
    \]
    with $\kappa\defined \plim_{T\to\infty} (B_{T}^{\mathrm{score}}+B_T^{\mathrm{info}})$, which finishes the proof.
\end{proof}

Using Lemma \ref{lemma-rates}, we now continue with the proof of Lemma \ref{lemma-variance-estimator}.
\medskip
\begin{proof}[\textupandbold{Proof of Lemma \ref{lemma-variance-estimator}}]
    Assume $\dim(\theta_0)=1$. Let $\ell$ denote a generic pseudo-likelihood (target or profile), with corresponding indicator $D$ and maximizer $\thetahat$, and recall $\Vhat(\theta)\defined n^{-1}\sumi D_i(\theta)^2$. Expanding $\Vhat(\thetahat)$ around the true parameter,
    \begin{multline}\label{eq:fourth-order-taylor}
    \Vhat(\thetahat)
    =\Vhat(\theta_0)
    +\Vhat_1(\theta_0)(\thetahat-\theta_0)
    +\frac{1}{2}\Vhat_2(\theta_0)(\thetahat-\theta_0)^2\\
    +\frac{1}{6}\Vhat_3(\theta_0)(\thetahat-\theta_0)^3
    +\frac{1}{24}\Vhat_4(\thetabar)(\thetahat-\theta_0)^4,
    \end{multline}
    where $\thetabar$ lies between $\thetahat$ and $\theta_0$, and $\Vhat_k$ denotes the $k$-th derivative of $\Vhat(\theta)$ for $k=1,\ldots,4$. Since $D_i(\theta_0)=\Op(1)$ and $D_{i,1}(\theta_0)=\Op(\sqrt{T})$ by Lemma~\ref{lemma-rates},
    \[
    \Vhat_1(\theta_0)=\frac{1}{n}\sumi 2D_{i,1}(\theta_0)D_i(\theta_0)=\Op(\sqrt{T}).
    \]
    Combined with Assumption~\ref{ass-estimators},
    \[
    \Vhat_1(\theta_0)(\thetahat-\theta_0)=
    \begin{cases}
    \Op(n^{-1/2}), &\quad \ell=\ell^\frakt,\\
    \Op(n^{-1/2})+\Op(T^{-1/2}), &\quad \ell=\ell^\frakp.
    \end{cases}
    \]
    Next, evaluating the second derivative at $\theta_0$ (so that $D_{i,1}(\theta_0)=\Op(\sqrt{T})$ rather than the uniform $\Op(T)$ bound),
    \[
    \Vhat_2(\theta_0)=\frac{1}{n}\sumi\bigl[2D_{i,1}(\theta_0)^2+2D_{i,2}(\theta_0)D_i(\theta_0)\bigr]=\Op(T),
    \]
    and therefore
    \[
    \Vhat_2(\theta_0)(\thetahat-\theta_0)^2=
    \begin{cases}
    \Op(n^{-1}), &\quad \ell=\ell^\frakt,\\
    \Op(n^{-1})+\Op(T^{-1})+\Op\bigl((nT)^{-1/2}\bigr), &\quad \ell=\ell^\frakp
    \end{cases}
    \]
    is $\op(1)$. Differentiating once more,
    \[
    \Vhat_3(\theta_0)=\frac{1}{n}\sumi\bigl[6D_{i,1}(\theta_0)D_{i,2}(\theta_0)+2D_{i,3}(\theta_0)D_i(\theta_0)\bigr]=\Op(T^{3/2}),
    \]
    since $D_{i,2}(\theta_0)=\Op(T)$ and $D_{i,3}(\theta_0)=\Op(T)$ by Lemma~\ref{lemma-rates}. Hence
    \[
    \Vhat_3(\theta_0)(\thetahat-\theta_0)^3=
    \begin{cases}
    \Op(n^{-3/2}), &\quad \ell=\ell^\frakt,\\
    \Op(n^{-3/2})+\Op(n^{-1}T^{-1/2})+\Op(n^{-1/2}T^{-1})+\Op(T^{-3/2}), &\quad \ell=\ell^\frakp,
    \end{cases}
    \]
    so the third remainder in \eqref{eq:fourth-order-taylor} is $\op(1)$. Likewise,
    \[
    \Vhat_4(\theta)=\frac{1}{n}\sumi\bigl[6D_{i,2}(\theta)^2+8D_{i,1}(\theta)D_{i,3}(\theta)+2D_{i,4}(\theta)D_i(\theta)\bigr].
    \]
    By Lemma~\ref{lemma-rates}, $\sup_\theta|D_{i,k}(\theta)|=\Op(T)$ for $k=0,\ldots,4$, so $\Vhat_4(\thetabar)=\Op(T^2)$. Combined with Assumption~\ref{ass-estimators},
    \[
    \Vhat_4(\thetabar)(\thetahat-\theta_0)^4=
    \begin{cases}
    \Op(n^{-2}), &\quad \ell=\ell^\frakt,\\
    \Op(n^{-2})+\Op(n^{-3/2}T^{-1/2})+\Op(n^{-1}T^{-1})+\Op(n^{-1/2}T^{-3/2})+\Op(T^{-2}), &\quad \ell=\ell^\frakp,
    \end{cases}
    \]
    hence the fourth remainder is also $\op(1)$. Altogether, $\Vhat(\thetahat)=\Vhat(\theta_0)+\op(1)$.

It remains to show $\Vhat(\theta_0)=V+\op(1)$. For the target likelihood, $\E_0[D_i^\frakt(\theta_0)]=0$, so $\E_0[\Vhat^\frakt(\theta_0)]=V_T$ with $V_T\defined\var_0(D_i^\frakt(\theta_0))$. Assumption~\ref{ass-variance} yields $V_T\to V$. Independence across individuals and $\E_0[D_i^\frakt(\theta_0)^4]=\Op(1)$ (as in the Lyapunov argument in the proof of Lemma~\ref{lemma-scaled-indicator}) give $\Vhat^\frakt(\theta_0)=V_T+\Op(n^{-1/2})=V+\op(1)$, and therefore $\Vhat^\frakt(\thetahat^\frakt)=V+\op(1)$.

For the profile likelihood, Lemma~\ref{lem:indicators} yields $D_i^\frakp(\theta_0)=D_i^\frakt(\theta_0)+\Delta_i(\theta_0)$ with $\E_0[\Delta_i(\theta_0)^2]=\Op(T^{-1})$. Expanding the square,
\[
\Vhat^\frakp(\theta_0)=\Vhat^\frakt(\theta_0)+\frac{1}{n}\sumi\bigl(2D_i^\frakt(\theta_0)\Delta_i(\theta_0)+\Delta_i(\theta_0)^2\bigr).
\]
The expectation of the summand is $\op(1)$ by the Cauchy--Schwarz inequality, so the average is $\op(1)$ by Markov's inequality under Assumption~\ref{ass-sampling}\textup{(i)}. Combined with $\Vhat^\frakt(\theta_0)=V+\op(1)$ and $\Vhat^\frakp(\thetahat^\frakp)=\Vhat^\frakp(\theta_0)+\op(1)$, we obtain $\Vhat^\frakp(\thetahat^\frakp)=V+\op(1)$.
\end{proof}

\begin{proof}[\textupandbold{Proof of Theorem \ref{main-theorem}}]
Recall from \eqref{eq:def-Whatp} that
\[
\Whattarget
=\bigl(\sqrt{n}D^\frakt(\thetahat^\frakt)\bigr)'(\Vhatstar)^{-1}\bigl(\sqrt{n}D^\frakt(\thetahat^\frakt)\bigr),
\qquad
\Whatprofile
=\bigl(\sqrt{n}D^\frakp(\thetahat^\frakp)\bigr)'(\Vhatp)^{-1}\bigl(\sqrt{n}D^\frakp(\thetahat^\frakp)\bigr),
\]
where $\Vhatstar\defined\Vhat^\frakt(\thetahat^\frakt)$ and $\Vhatp\defined\Vhat^\frakp(\thetahat^\frakp)$ with $\Vhat(\theta)$ as in Lemma~\ref{lemma-variance-estimator}. We begin the proof with part (i) for the target likelihood. By Lemma~\ref{lemma-scaled-indicator}\textup{(i)}, 
$\sqrt{n}D^\frakt(\thetahat^\frakt)\dto\No(0,V)$.  Moreover, by Lemma~\ref{lemma-variance-estimator}, $\Vhatstar=V+\op(1)$, and therefore $\Vhatstar\to_p V$. Since $V$ is nonsingular by Assumption~\ref{ass-variance}, the continuous mapping theorem yields $(\Vhatstar)^{-1}\to_p V^{-1}$ (in particular, $\Vhatstar$ is invertible with probability approaching one). Writing $Z_n^\frakt\defined\sqrt{n}D^\frakt(\thetahat^\frakt)$ and $A_n^\frakt\defined(\Vhatstar)^{-1}$, we thus have $Z_n^\frakt\dto Z\sim\No(0,V)$ and $A_n^\frakt\to_p V^{-1}$. Continuous mapping for the quadratic form therefore gives
\[
\Whattarget=Z_n^{\frakt\prime}A_n^\frakt Z_n^\frakt\dto Z'V^{-1}Z\sim\chi^2_q,
\]
which finishes part~(i). Next, we continue with part (ii) for the profile likelihood. Assume that $n/T^2\to\rho\in[0,\infty)$, i.e.\ Assumption~\ref{ass-n-T-rate}\textup{(ii)} holds. By Lemma~\ref{lemma-scaled-indicator}\textup{(ii)},
\[
\sqrt{n}D^\frakp(\thetahat^\frakp)\dto\No\bigl(\sqrt{\rho}\,\kappa(\theta_0),V\bigr).
\]
By Lemma~\ref{lemma-variance-estimator}, $\Vhatp=V+\op(1)$, so $\Vhatp\to_p V$ as $n,T\to\infty$, and the same continuous-mapping argument as above yields $(\Vhatp)^{-1}\to_p V^{-1}$. Writing $Z_n^\frakp\defined\sqrt{n}D^\frakp(\thetahat^\frakp)$ and $A_n^\frakp\defined(\Vhatp)^{-1}$, continuous mapping for the quadratic form gives
\[
\Whatprofile=Z_n^{\frakp\prime}A_n^{\frakp}Z_n^{\frakp}
\dto (Z+\mu)'V^{-1}(Z+\mu),
\]
where $Z\sim\No(0,V)$ and $\mu\defined\sqrt{\rho}\,\kappa(\theta_0)$. The right-hand side is $\chi^2_q(\lambda_\rho)$ with noncentrality
\[
\lambda_\rho=\mu'V^{-1}\mu=\rho\,\kappa(\theta_0)'V^{-1}\kappa(\theta_0),
\]
as claimed. When $\rho=0$, one has $\mu=0$ and $\lambda_\rho=0$, so the profile limit coincides with the target limit. This finishes part~(ii).
\end{proof}

In the following, we state some intermediate results for the proof of Theorem~\ref{thm:bootstrap}.
\begin{lem}\label{lemma-bootstrap-clt}
Suppose that Assumptions~\ref{ass-bootstrap-uniform},~\ref{ass-bootstrap-indicator}, and~\ref{ass-variance} hold. Then $\E^*\bigl[|D_i^{\frakp*}(\thetahat^\frakp)|^4\bigr]=\Opstar(1)$ in probability under $P_0$,
\[
\widehat{V}_n\defined\frac{1}{n}\sum_{i=1}^n\var^*\!\bigl(D_i^{\frakp*}(\thetahat^\frakp)\bigr)\to_p V,
\]
and
\[
\sqrt{n}\bigl(D^{\frakp*}(\thetahat^\frakp)-\E^*[D^{\frakp*}(\thetahat^\frakp)]\bigr)
\dtostar\No(0,\widehat{V}_n)
\quad\text{in probability under }P_0.
\]
\end{lem}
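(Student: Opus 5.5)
The plan is to reproduce, under the bootstrap law $P^*=\E_{\hat\tau}$, the three ingredients of the target-indicator argument in the proof of Lemma~\ref{lemma-scaled-indicator}(i): a fourth-moment bound, convergence of the variance, and a Lyapunov CLT for an independent triangular array. The key structural fact is that, conditionally on the sample, the bootstrap panels are independent across $i$. The $Y^*_{it}$ are drawn recursively from the fitted transition densities at $(\thetahat^\frakp,\alphahat_i)$, holding $\calXiT$ and $\calY_{i0}$ fixed. Hence $\xi^*_{n,i}\defined n^{-1/2}\bigl(D_i^{\frakp*}(\thetahat^\frakp)-\E^*[D_i^{\frakp*}(\thetahat^\frakp)]\bigr)$ are independent, but not identically distributed, with mean zero under $P^*$. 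All statements are made on the event $\{\hat\tau\in\calU_1\}$. By \citet[Theorem~1]{kim-sun2016}, the joint identification condition \eqref{eq:joint-id}, and consistency of $\thetahat^\frakp$ (as argued in Remark~\ref{rem:bootstrap-primitives}), this event has $P_0$-probability tending to one, so conclusions that hold on it hold in probability under $P_0$.

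\emph{Step 1 (fourth moment).} On $\{\hat\tau\in\calU_1\}$, decompose $D_i^{\frakp*}(\thetahat^\frakp)=D_i^{\frakt*}(\thetahat^\frakp)+\Delta_i^*(\thetahat^\frakp)$. Here the target quantity is taken under $\E_{\hat\tau}$, so that $\thetahat^\frakp$ plays the role of the true parameter and its bootstrap score has mean zero. For the target part, repeat the expansion of $D_i^4$ from the proof of Lemma~\ref{lemma-scaled-indicator}(i). Lemmas~\ref{lemma-cumulants}--\ref{lemma-mixing} hold under $\E_\tau$ with constants uniform on $\calU_1$ by Assumption~\ref{ass-bootstrap-uniform}(i)--(ii), and the bound is controlled by $(1+d_i)^4$ and $\lambda_{i,2}$, with the envelopes made uniform in $\tau$. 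For $\Delta_i^*$, use Assumption~\ref{ass-bootstrap-indicator}, which transfers Lemma~\ref{lem:indicators}, together with the $L^{16}$ envelopes to get $\E^*[|\Delta_i^*|^4]=O(T^{-2})$ uniformly. Then Minkowski gives $\max_i\E^*|D_i^{\frakp*}|^4\le C(1+\tilde d_i)^4$, where $\tilde d_i$ is a uniform envelope. Averaging over $i$ and applying Markov's inequality with $\E[d_i^4]<\infty$ yields $n^{-1}\sum_i\E^*|D_i^{\frakp*}|^4=\Op(1)$, which is the stated $\Opstar(1)$ bound in $P_0$-probability.

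\emph{Step 2 (variance).} Write $v_i(\tau)\defined\var_\tau(D_i^\frakp)$, evaluated at the $\theta$-component of $\tau$. Then $\widehat V_n-V=n^{-1}\sum_i[v_i(\hat\tau)-v_i(\tau_0)]+\bigl(n^{-1}\sum_i v_i(\tau_0)-V\bigr)$. The second bracket is $\op(1)$ by the i.i.d.\ law of large numbers, the profile-to-target step in the proof of Lemma~\ref{lemma-variance-estimator} (which shows $\var_0(D_i^\frakp)-\var_0(D_i^\frakt)=O(T^{-1/2})$ by Cauchy--Schwarz), and Assumption~\ref{ass-variance}. The first bracket is where I expect the main difficulty. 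The fixed-effect estimates $\alphahat_i$ are only consistent at rate $T^{-1/2}$ individually, not at rate $(nT)^{-1/2}$. So I would show that $\tau\mapsto v_i(\tau)$ is Lipschitz on $\calU_1$, with a $\calG_i$-measurable constant $M_i$ satisfying $\E M_i<\infty$. This should follow by differentiating the integral against $\prod_t f_{it}$ under the integral sign, with dominated differentiability supplied by the envelopes in Assumption~\ref{ass-bootstrap-uniform}(ii). Because the score of a product density involves $T$ factors, I would normalize using the within-individual averaging, $\partial_\alpha\var_\tau = \cov_\tau(D_i^2,T\ell_{i\alpha})$-type terms, and bound them via Lemma~\ref{lemma-mixing} by $O(\sqrt T)$. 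The result is $|v_i(\hat\tau)-v_i(\tau_0)|\lesssim M_i\sqrt T\,(|\thetahat^\frakp-\theta_0|+|\alphahat_i-\alphaio|)$. If this $\sqrt T$ factor is too crude, I would instead use that $\E_\tau[D_i^{\frakt}]$ vanishes identically in $\tau$ (information unbiasedness), which cancels the leading term. The bound then becomes $n^{-1}\sum_iM_i\,\Op(T^{-1/2})\cdot\sqrt T\cdot T^{-1/2}$, which is $\op(1)$ because $\E\max$-type control is available from Kim--Sun's uniform rate.

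\emph{Step 3 (CLT).} Conditionally on the sample, apply the Lyapunov CLT to the independent array $\xi^*_{n,i}$. By Steps~1 and~2,
\[
\frac{\sum_i\E^*|\xi^*_{n,i}|^4}{(\widehat V_n)^2}=\frac{n^{-1}\cdot n^{-1}\sum_i\E^*|D_i^{\frakp*}|^4}{(\widehat V_n)^2}=\Op(n^{-1})\to_p0,
\]
using $\widehat V_n\to_pV>0$. For $q>1$, apply the Cramér--Wold device with $c'\widehat V_nc\to_pc'Vc>0$. A subsequence argument converts "in probability" statements into almost sure ones along sub-subsequences. This gives $\sqrt n(D^{\frakp*}-\E^*D^{\frakp*})\dtostar\No(0,\widehat V_n)$, equivalently $\No(0,V)$, in $P_0$-probability.
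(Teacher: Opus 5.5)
Your Steps~1 and~3 follow the paper's argument: the split $D_i^{\frakp*}=D_i^{\frakt*}+\Delta_i^*$, transferred product-moment rates for the fourth moment, and a conditional Lyapunov CLT. On one point you are more careful than the paper. Under $P^*$ the $\xi^*_{n,i}$ are independent but not identically distributed, so the Lyapunov ratio needs $n^{-1}\sum_i\E^*|D_i^{\frakp*}|^4=\Op(1)$ from an i.i.d.\ envelope, not a single representative $\E^*|D_1^{\frakp*}|^4$.

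Step~2 departs from the paper, which passes from profile to target under $P^*$ and then simply asserts that the target variance at $\hat\tau$ tends to $V$ ``because $\hat\tau\to_p\tau_0$''. You try to prove that continuity, and you do it in the convenient order: the profile indicator does not depend on the law, whereas the target indicator moves with $\tau$ through $\alpha_i^\frakt$. You also correctly isolate the obstacle, namely that each $\hat\alpha_i$ is only $T^{-1/2}$-consistent. As written, however, the step does not close. A Lipschitz constant $M_i\sqrt T$ times $|\hat\alpha_i-\alphaio|=\Op(T^{-1/2})$ leaves every summand $\Op(1)$.

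Your patch does not remove the $\sqrt T$. Let $S_i=T\ell_{i,0,1}$ be the $\alpha_i$-score of the joint density. Then $\partial_{\alpha_i}v_i(\tau)=\cov_\tau(D_i^2,S_i)-2\E_\tau[D_i]\cov_\tau(D_i,S_i)$. Information unbiasedness only affects the second term, which is already $O(T^{-1/2})$ for the profile indicator. The $\sqrt T$ comes from bounding the first term by Cauchy--Schwarz with $\var_\tau(S_i)=O(T)$, and $\E_\tau[D_i^\frakt]=0$ says nothing about that term. So the bound $n^{-1}\sum_iM_i\Op(T^{-1/2})\sqrt T\,T^{-1/2}$ is unsupported. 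Moreover, a uniform $\Op(T^{-1/2})$ rate for $\max_i|\hat\alpha_i-\alphaio|$ is more than \citet{kim-sun2016} deliver; they give uniform consistency only.

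The repair is to use Lemma~\ref{lemma-mixing} at its sharp odd-order rate instead of Cauchy--Schwarz. Expanding $D_i$ in centered averages, $\cov_\tau(D_i^2,S_i)$ consists of terms such as $T^3\E_\tau[l_{i,1}^4l_{i,0,1}]$, $T^2\E_\tau[l_{i,1}^2l_{i,2}l_{i,0,1}]$, $T^2\lambda_{i,2}\E_\tau[l_{i,1}^2l_{i,0,1}]$ and $T\lambda_{i,2}\E_\tau[l_{i,2}l_{i,0,1}]$. These are products of $5$, $4$, $3$ and $2$ centered averages, so each is $O(1)$ after the powers of $T$. The $\Delta_i$-contributions are $O(1)$ by H\"older with your Step~1 bound on $\E_\tau|\Delta_i|^4$.

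Holding uniformly on $\calU_1$ (Assumption~\ref{ass-bootstrap-uniform}), this gives an $O(1)$ Lipschitz constant with an integrable $\calG_i$-measurable envelope $M_i$. Hence $n^{-1}\sum_iM_i|\hat\alpha_i-\alphaio|\le\max_i|\hat\alpha_i-\alphaio|\cdot n^{-1}\sum_iM_i=\op(1)$ by uniform consistency alone. The $\theta$-direction is harmless because $|\thetahat^\frakp-\theta_0|=\Op(T^{-1})+\Op((nT)^{-1/2})$. With this substitution your Step~2 becomes an actual proof of the continuity step that the paper leaves as an assertion.
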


\begin{proof}[\textupandbold{Proof of Lemma~\ref{lemma-bootstrap-clt}}]
Write $\xi_{n,i}\defined n^{-1/2}\bigl(D_i^{\frakp*}(\thetahat^\frakp)-\E^*[D_i^{\frakp*}(\thetahat^\frakp)]\bigr)$. Conditionally on the original sample, $\{\xi_{n,i}\}_{i=1}^n$ are independent with mean zero and $\sigma_n^{*2}\defined\sum_{i=1}^n\var^*(\xi_{n,i})=\widehat{V}_n$. For the fourth-moment bound, Assumption~\ref{ass-bootstrap-indicator} transfers $D_i^{\frakp*}(\thetahat^\frakp)=D_i^{\frakt*}(\thetahat^\frakp)+\Delta_i^*$ from Lemma~\ref{lem:indicators}, with $\E^*[(\Delta_i^*)^2]=\Opstar(T^{-1})$. Under $P^*$ the target likelihood is information unbiased at $\thetahat^\frakp$, so the product-moment argument in the proof of Lemma~\ref{lemma-scaled-indicator}\textup{(i)}, with $\E_0$ replaced by $\E^*$ and with the envelopes of Assumption~\ref{ass-bootstrap-uniform}, yields $\E^*\bigl[|D_i^{\frakt*}(\thetahat^\frakp)|^4\bigr]=\Opstar(1)$. The cross terms in $\E^*\bigl[|D_i^{\frakt*}(\thetahat^\frakp)+\Delta_i^*|^4\bigr]$ are $\opstar(1)$ by the Cauchy--Schwarz inequality and the moment envelopes in Assumption~\ref{ass-bootstrap-uniform}, so $\E^*\bigl[|D_i^{\frakp*}(\thetahat^\frakp)|^4\bigr]=\Opstar(1)$. The same decomposition yields $\var^*\bigl(D_i^{\frakp*}(\thetahat^\frakp)\bigr)=\var^*\bigl(D_i^{\frakt*}(\thetahat^\frakp)\bigr)+\opstar(1)$. The target conditional variance converges to $V$ by Assumption~\ref{ass-variance}, transferred under Assumption~\ref{ass-bootstrap-uniform} because $\hat\tau\to_p\tau_0$. Therefore $\widehat{V}_n\to_p V$. By the Lyapunov criterion with $\delta=2$,
\[
\frac{1}{(\sigma_n^{*2})^{2}}\sum_{i=1}^n\E^*|\xi_{n,i}|^{4}
=
\frac{\E^*\bigl[|D_1^{\frakp*}(\thetahat^\frakp)|^4\bigr]}{n\,\sigma_n^{*4}}
=\Opstar(n^{-1})\to_p 0,
\]
using $\widehat{V}_n\to_p V>0$. Hence the Lindeberg condition holds under $P^*$, and the claimed conditional CLT follows.
\end{proof}

\begin{lem}\label{lemma-bootstrap-variance}
Suppose that Assumptions~\ref{ass-bootstrap-uniform},~\ref{ass-bootstrap-indicator},~\ref{ass-bootstrap-estimators}, and~\ref{ass-variance} hold, and that $n/T^2\to\rho\in[0,\infty)$. Then $\widehat{V}^{\frakp*}=V+\opstar(1)$ in probability under $P_0$.
\end{lem}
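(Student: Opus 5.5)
The plan is to mirror the proof of Lemma~\ref{lemma-variance-estimator} under the bootstrap measure $P^*=\E_{\hat\tau}$. The roles change as follows: $\theta_0$ becomes $\thetahat^\frakp$, $\thetahatp$ becomes $\thetahat^{\frakp*}$, and $\E_0$ becomes $\E^*$. Write $\widehat{V}^{\frakp*}=\Vhat^{\frakp*}(\thetahat^{\frakp*})$ with $\Vhat^{\frakp*}(\theta)\defined n^{-1}\sumi D_i^{\frakp*}(\theta)D_i^{\frakp*}(\theta)'$. The argument has two steps:
\[
\Vhat^{\frakp*}(\thetahat^{\frakp*})=\Vhat^{\frakp*}(\thetahat^\frakp)+\opstar(1),
\qquad
\Vhat^{\frakp*}(\thetahat^\frakp)=V+\opstar(1).
\]
Throughout we work on the event $\hat\tau\in\calU_1$. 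By the discussion in Remark~\ref{rem:bootstrap-primitives} (\citealp{kim-sun2016} together with consistency of $\thetahatp$), this event has $P_0$-probability approaching one, so all $\E_\tau$-statements hold at $\tau=\hat\tau$ with uniform constants.

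For the first step, take $\dim(\theta)=1$ and expand $\Vhat^{\frakp*}(\thetahat^{\frakp*})$ to fourth order around $\thetahat^\frakp$, as in \eqref{eq:fourth-order-taylor}. Assumption~\ref{ass-bootstrap-indicator} transfers Lemma~\ref{lemma-rates} to $\E_{\hat\tau}$, which gives the following bootstrap orders:
\begin{itemize}
\item[] $D_i^{\frakp*}(\thetahat^\frakp)=\Opstar(1)$ and $D_{i,1}^{\frakp*}(\thetahat^\frakp)=\Opstar(\sqrt{T})$;
\item[] $D_{i,k}^{\frakp*}(\thetahat^\frakp)=\Opstar(T)$ for $k=2,3$;
\item[] $\sup_\theta|D_{i,k}^{\frakp*}(\theta)|=\Opstar(T)$, with uniform envelopes from Assumption~\ref{ass-bootstrap-uniform}(ii).
\end{itemize}
It follows that $\Vhat_k^{\frakp*}(\thetahat^\frakp)=\Opstar(T^{k/2})$ for $k\le3$ and $\Vhat_4^{\frakp*}(\bar\theta)=\Opstar(T^2)$. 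Next, Assumption~\ref{ass-bootstrap-estimators} gives $\thetahat^{\frakp*}-\thetahat^\frakp=\Opstar((nT)^{-1/2})+\Opstar(T^{-1})$. Multiplying term by term reproduces exactly the profile-case rate table in the proof of Lemma~\ref{lemma-variance-estimator}, for example $\Opstar(n^{-1/2})+\Opstar(T^{-1/2})$ for the linear term. Every entry is $\opstar(1)$ as $n,T\to\infty$, and $n/T^2\to\rho$ keeps this bookkeeping consistent with Assumption~\ref{ass-n-T-rate}.

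For the second step, use the decomposition transferred from Lemma~\ref{lem:indicators}, namely $D_i^{\frakp*}(\thetahat^\frakp)=D_i^{\frakt*}(\thetahat^\frakp)+\Delta_i^*$ with $\E^*[(\Delta_i^*)^2]=\Opstar(T^{-1})$. Then
\[
\Vhat^{\frakp*}(\thetahat^\frakp)=\frac1n\sumi D_i^{\frakt*}(\thetahat^\frakp)^2+\frac1n\sumi\bigl(2D_i^{\frakt*}\Delta_i^*+(\Delta_i^*)^2\bigr).
\]
The second average has $\E^*$-mean $\Opstar(T^{-1/2})$, by Cauchy--Schwarz and the fourth-moment bound of Lemma~\ref{lemma-bootstrap-clt}. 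By Markov's inequality it is therefore $\opstar(1)$. For the first average, information unbiasedness of the target under $P^*$ gives $\E^*D_i^{\frakt*}(\thetahat^\frakp)=0$, so its $\E^*$-mean equals $n^{-1}\sumi\var^*(D_i^{\frakt*})$. Lemma~\ref{lemma-bootstrap-clt} shows this is $V+\op(1)$. The conditional independence of the bootstrap panels across $i$ and the bound $\E^*|D_i^{\frakt*}|^4=\Opstar(1)$ then give a conditional weak law, so the fluctuation around the mean is $\Opstar(n^{-1/2})$. Combining the two steps yields $\widehat{V}^{\frakp*}=V+\opstar(1)$ in probability under $P_0$.

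The main obstacle is the uniformity in the first step. The bounds of Lemma~\ref{lemma-rates} must hold at the random expansion point $\thetahat^\frakp$ and at intermediate values $\bar\theta$ between $\thetahat^{\frakp*}$ and $\thetahat^\frakp$, with constants that do not depend on the realized $\hat\tau$. Moreover, the sharp order $\Opstar(\sqrt T)$ for $D_{i,1}^{\frakp*}$, as opposed to the crude $\Opstar(T)$, requires that the bootstrap score bias at $\thetahat^\frakp$ be $\Opstar(T^{-1})$. My first attempt would be to derive this from Assumption~\ref{ass-bootstrap-indicator} applied to the $\Delta$-expansion, using the neighborhood-uniform envelopes to dominate the intermediate-value remainders. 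If that route fails, I would fall back on the crude bound $\Vhat_1^{\frakp*}=\Opstar(T)$ and check whether the resulting term $\Opstar(\sqrt{T/n})+\Opstar(1)$ can instead be handled by centering its mean.
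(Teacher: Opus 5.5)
Your proposal is correct and follows the paper's proof. The paper also uses a fourth-order expansion of $\Vhat^{\frakp*}$ around $\thetahat^\frakp$, with the transferred derivative rates of Lemma~\ref{lemma-rates} and the bootstrap rate of Assumption~\ref{ass-bootstrap-estimators}, and then shows consistency of $\Vhat^{\frakp*}(\thetahat^\frakp)$ from the decomposition and fourth-moment bound behind Lemma~\ref{lemma-bootstrap-clt}. The uniformity you flag as the main obstacle, the sharp $\Opstar(\sqrt{T})$ order of $D_{i,1}^{\frakp*}$ at $\thetahat^\frakp$, is something the paper simply takes from Assumption~\ref{ass-bootstrap-uniform}, so your fallback is not needed.
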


\begin{proof}[\textupandbold{Proof of Lemma~\ref{lemma-bootstrap-variance}}]
The argument parallels the proof of Lemma~\ref{lemma-variance-estimator}, with the fourth-order expansion taken around the fitted profile estimator $\thetahat^\frakp$. By Assumption~\ref{ass-bootstrap-uniform}, the derivative-moment bounds used there hold under $P^*$ at $\thetahat^\frakp$. In particular, $\widehat{V}_1^{\frakp*}(\thetahat^\frakp)=\Opstar(\sqrt{T})$. Assumption~\ref{ass-bootstrap-estimators} gives
\[
\thetahat^{\frakp*}-\thetahat^\frakp=\Opstar\bigl((nT)^{-1/2}\bigr)+\Opstar(T^{-1}),
\]
and therefore
\[
\widehat{V}_1^{\frakp*}(\thetahat^\frakp)(\thetahat^{\frakp*}-\thetahat^\frakp)
=\Opstar(n^{-1/2})+\Opstar(T^{-1/2})=\opstar(1).
\]
The quadratic and higher products are $\opstar(1)$ by the same calculation as for $\ell=\ell^\frakp$ in that proof. Hence $\widehat{V}^{\frakp*}(\thetahat^{\frakp*})=\widehat{V}^{\frakp*}(\thetahat^\frakp)+\opstar(1)$. Consistency of $\widehat{V}^{\frakp*}(\thetahat^\frakp)$ for $V$ then follows from $\widehat{V}_n\to_p V$ and the fourth-moment bound of Lemma~\ref{lemma-bootstrap-clt}, as in the proof of Lemma~\ref{lemma-variance-estimator}.
\end{proof}

\begin{proof}[\textupandbold{Proof of Theorem~\ref{thm:bootstrap}}]
The argument is the bootstrap analogue of the proof of Theorem~\ref{main-theorem}. Lemmas~\ref{lem:indicators} and~\ref{lemma-scaled-indicator} are transferred under Assumptions~\ref{ass-bootstrap-uniform} and~\ref{ass-bootstrap-indicator} (Remark~\ref{rem:bootstrap-primitives}), and Lemmas~\ref{lemma-bootstrap-clt} and~\ref{lemma-bootstrap-variance} supply the Lyapunov and variance steps under~$P^*$. Under $n/T^2\to\rho\in[0,\infty)$, Assumption~\ref{ass-bootstrap-indicator} transfers the expansion of Lemma~\ref{lemma-scaled-indicator}\textup{(ii)} under $P^*$, with $\thetahat^\frakp$ in the role of $\theta_0$ and $\thetahat^{\frakp*}$ in the role of $\thetahat^\frakp$. The information bias is the mean of the profile indicator at the common parameter that generates the outcomes, and the score bias arises from replacing that parameter by the maximizer. Thus,
\begin{align}
\sqrt{n}D^{\frakp*}(\thetahat^{\frakp*})
&=\sqrt{n}D^{\frakp*}(\thetahat^\frakp)
+\frac{\sqrt{n}}{T}B_T^{\mathrm{score}*}(\thetahat^\frakp)
+\opstar(1),\label{eq:bootstrap-indicator-expansion}\\
\sqrt{n}\,\E^*[D^{\frakp*}(\thetahat^\frakp)]
&=\frac{\sqrt{n}}{T}B_T^{\mathrm{info}*}(\thetahat^\frakp)
+\opstar(1).\nonumber
\end{align}
Writing $\kappa_T^*(\thetahat^\frakp)\defined B_T^{\mathrm{score}*}(\thetahat^\frakp)+B_T^{\mathrm{info}*}(\thetahat^\frakp)$ and centering the first display,
\[
\sqrt{n}D^{\frakp*}(\thetahat^{\frakp*})
=\sqrt{n}\bigl(D^{\frakp*}(\thetahat^\frakp)-\E^*[D^{\frakp*}(\thetahat^\frakp)]\bigr)
+\frac{\sqrt{n}}{T}\kappa_T^*(\thetahat^\frakp)
+\opstar(1).
\]
Lemma~\ref{lemma-bootstrap-clt} yields
\[
\sqrt{n}\bigl(D^{\frakp*}(\thetahat^\frakp)-\E^*[D^{\frakp*}(\thetahat^\frakp)]\bigr)
\dtostar\No(0,\widehat{V}_n)
\quad\text{in probability},
\]
with $\widehat{V}_n\to_p V$. Assumption~\ref{ass-bootstrap-bias} gives $\kappa_T^*(\thetahat^\frakp)-\kappa_T(\theta_0)=\op(1)$. Combined with $(\sqrt{n}/T)\kappa_T(\theta_0)\to_p\sqrt{\rho}\,\kappa(\theta_0)$ from the proof of Lemma~\ref{lemma-scaled-indicator}\textup{(ii)},
\[
\frac{\sqrt{n}}{T}\kappa_T^*(\thetahat^\frakp)\to_p\sqrt{\rho}\,\kappa(\theta_0).
\]
Lemma~\ref{lemma-bootstrap-variance} gives $\widehat{V}^{\frakp*}=V+\opstar(1)$, so $(\widehat{V}^{\frakp*})^{-1}\to_p V^{-1}$ by continuous mapping and nonsingularity of $V$ in Assumption~\ref{ass-variance}. Applying Slutsky's theorem under $P^*$,
\[
\sqrt{n}D^{\frakp*}(\thetahat^{\frakp*})\dtostar\No\bigl(\sqrt{\rho}\,\kappa(\theta_0),V\bigr)
\quad\text{in probability}.
\]
Writing $Z_n^{\frakp*}\defined\sqrt{n}D^{\frakp*}(\thetahat^{\frakp*})$ and $A_n^{\frakp*}\defined(\widehat{V}^{\frakp*})^{-1}$, continuous mapping for the quadratic form yields
\[
\calW^{\frakp*}=Z_n^{\frakp*\prime}A_n^{\frakp*}Z_n^{\frakp*}
\dtostar\chi^2_q(\lambda_\rho)
\quad\text{in probability},
\]
with $\lambda_\rho=\rho\,\kappa(\theta_0)'V^{-1}\kappa(\theta_0)$. The cdf of $\chi^2_q(\lambda_\rho)$ is continuous. Since $\Whatprofile\dto\chi^2_q(\lambda_\rho)$ by Theorem~\ref{main-theorem}, Polya's theorem yields
\[
\sup_{x\in\R}\bigl|P^*(\calW^{\frakp*}\le x)-P_0(\Whatprofile\le x)\bigr|\to_p 0.
\]
When $\rho=0$, the mean shift vanishes and $\lambda_\rho=0$, so the profile bootstrap limit is central $\chi^2_q$. This finishes the proof of Theorem~\ref{thm:bootstrap}.
\end{proof}

\section{Monte Carlo implementation details}\label{sec:mc-implementation}
This appendix records the parametric bootstrap draws and related Monte Carlo conventions used in Section~\ref{sec:simulations} and Appendix~\ref{sec:add-sim-results}.

\paragraph*{Parametric bootstrap draws.}
Bootstrap panels are generated recursively from the fitted transition density of the pseudo-likelihood under consideration. For the profile likelihood, with $\alphahat_i\defined\alphahatiT(\thetahat^\frakp)$,
\begin{equation}\label{eq:bootstrap-density}
Y_{it}^*\;\sim\; f_{it}\bigl(\,\cdot\,\bigm|Y_{i,t-1}^*,\ldots,Y_{i,t-L}^*,\calXiT;\,\thetahat^\frakp,\alphahat_i\bigr),
\end{equation}
independently across $i$ and recursively in $t$, with $Y_{i,s}^*=Y_{is}$ for $s\le 0$ (covariates and initial conditions held fixed). Equivalently, the bootstrap joint density of $Y_i^*$ given $(\calXiT,\calY_{i0})$ is
\[
\prod_{t=1}^T f_{it}\bigl(Y_{it}^*\bigm|Y_{i,t-1}^*,\ldots,Y_{i,t-L}^*,\calXiT;\,\thetahat^\frakp,\alphahat_i\bigr).
\]
The target bootstrap replaces $(\thetahat^\frakp,\alphahat_i)$ by $(\thetahat^\frakt,\alphai^\frakt(\thetahat^\frakt))$ (or the oracle fixed effect in designs where the target is implemented as an oracle likelihood).

In the binary panel designs (static logit and probit, dynamic probit, and the heteroskedastic design below), an individual is uninformative for the profile likelihood if $\sum_t y_{it}\in\{0,T\}$.
The profile IM test is computed on the informative units only. Its bootstrap critical values are obtained from a parametric bootstrap that redraws paths for units that become uninformative until the original informative set is restored, so that each bootstrap draw retains the same number of individuals as the original informative panel.
By contrast, the target IM test always uses all $n$ individuals---units with $\sum_t y_{it}\in\{0,T\}$ remain informative for the target incidental-parameter moment---and its bootstrap is a plain parametric draw on the full panel (no redraw and no reduction to the informative subset).
This distinction does not arise in the continuous Neyman--Scott design of Subsection~\ref{sec:add-sim-ns}.

\section{Additional Simulation Results}\label{sec:add-sim-results}
\noindent
This appendix reports supplementary Monte Carlo results: the unit-scale heteroskedastic designs for static logit, static probit, and dynamic probit (Subsection~\ref{sec:add-sim-hetero}), the omitted-interaction design (Subsection~\ref{sec:add-sim-omitted}), and a wrong-link static logit design (Subsection~\ref{sec:add-sim-link}). Neyman--Scott Monte Carlo evidence is reported with the corresponding closed-form illustration in Subsection~\ref{sec:add-sim-ns}.

\subsection{Heteroskedasticity}\label{sec:add-sim-hetero}
\noindent
Tables~\ref{tab:logitheteroskedasticrejectionrates}--\ref{tab:dynamicprobitheteroskedasticrejectionrates} report rejection rates for the unit-scale heteroskedastic design of Subsection~\ref{sec:sim-setup}. For each grid point $(n,T)$ with $n\in\{125,250,500,1000\}$ and $T\in\{5,10,15,20,25,30,40,50\}$, we draw a balanced panel with covariates $X_{it,1}$ and $X_{it,2}$ that are standard normal and satisfy $\mathrm{Corr}(X_{it,1},X_{it,2})=0.5$, and fixed effects $\alpha_i\sim\No(\bar X_{i,1},1)$ with $\bar X_{i,1}=T^{-1}\sum_t X_{it,1}$. In the static designs, let $m_{it}=\theta_1 X_{it,1}+\theta_2 X_{it,2}$ with $(\theta_1,\theta_2)=(1,-1)$. In the dynamic probit design, the index is $m_{it}^{\mathrm{dyn}}=\theta_1 X_{it,1}+\theta_2 X_{it,2}+\gamma Y_{i,t-1}$ with $(\theta_1,\theta_2,\gamma)=(1,-1,0.5)$, as under the correctly specified dynamic DGP. Each unit draws a latent scale $\sigma_i=\exp(0.75\,\xi_i)$ with $\xi_i\stackrel{\mathrm{iid}}{\sim}\No(0,1)$, and outcomes are generated from
\[
  \Pr(\Yit=1\mid X_{it,1},X_{it,2},\alpha_i,\sigma_i)
=F\!\left(\frac{m_{it}+\alpha_i}{\sigma_i}\right),
\]
where $F$ is the standard logistic CDF in the logit design and the standard normal CDF in the static and dynamic probit designs, while estimation still uses the additive homoskedastic index in the fitted covariates (including the lag in the dynamic case).

In the static designs (Tables~\ref{tab:logitheteroskedasticrejectionrates} and~\ref{tab:probitheteroskedasticsigmarejectionrates}), unit-scale heteroskedasticity is readily detected once $T$ is moderate. Bootstrap rejection for the target (and, in logit, the conditional) PIM rises quickly in both $n$ and $T$: at $T=10$ and $n=500$, target Boot is already $98.2\%$ in logit and $97.9\%$ in probit, and essentially $100\%$ at $n=1000$. Even at $T=5$ and $n=1000$, target Boot exceeds $70\%$ in both designs. The profile bootstrap follows the same pattern once $T$ is no longer very short, while asymptotic profile critical values remain severely oversized for large $n$ and small $T$. The contrast with Wald tests for $\theta$ is sharp at moderate $T$ (e.g.\ at $T=20$ and $n=500$ in logit, profile and target Boot both equal $100\%$, against Wald rejection below $25\%$). In logit, the conditional likelihood tracks the target throughout.

Dynamic probit (Table~\ref{tab:dynamicprobitheteroskedasticrejectionrates}) follows the same qualitative pattern, but power builds more slowly at the shortest panels. At $T=5$ and $n=1000$, target Boot is only $35.9\%$ and profile Boot $20.0\%$. Once $T$ is moderate, bootstrap rejection rises sharply (profile Boot $92.7\%$ and target Boot $80.2\%$ at $T=10$ and $n=500$; both above $99\%$ at $n=1000$). As under the null, asymptotic profile critical values remain unreliable when $n$ is large and $T$ is small, so the bootstrap is the relevant sized check. Wald rejection is high throughout much of this grid, including at short $T$ where the Wald test is oversized under correct specification, so it does not provide a clean comparison.

\input{simulations/panel_im_test_logit_hetero_review_table.tex}
\input{simulations/panel_im_test_probit_hetero_sigma_review_table.tex}
\input{simulations/panel_im_test_dyn_hetero_review_table.tex}

\subsection{Omitted interaction}\label{sec:add-sim-omitted}
\noindent
A second misspecified design follows the incorrect-mean alternative of \citet[Section~3.2, eq.~(7)]{horowitz1994bootstrap}, i.e.,  the true index includes the additive interaction $1.5\,X_{it,1}X_{it,2}$ while estimation continues to use the additive index in \eqref{eq:assumed-model-simulation}. Covariates and fixed effects are generated as in Subsection~\ref{sec:add-sim-hetero}, and outcomes follow
\[
\Pr(\Yit=1\mid X_{it,1},X_{it,2},\alpha_i)=F\!\left(\theta_1 X_{it,1}+\theta_2 X_{it,2}+1.5\,X_{it,1}X_{it,2}+\alpha_i\right),
\]
with $(\theta_1,\theta_2)=(1,-1)$ and $F$ the logistic or standard normal CDF according to the static logit or probit design. The dynamic probit omitted-interaction design uses the same interaction in the index together with the lagged dependent variable, as in the correctly specified dynamic design of Section~\ref{sec:sim-setup}. Thus both covariates enter linearly in the DGP, and misspecification arises from omitting the interaction term from the fitted model.
Tables~\ref{tab:logitomittedrejectionrates}--\ref{tab:dynamicprobitomittedrejectionrates} report rejection rates for the static logit, static probit, and dynamic probit designs under this omitted-interaction alternative.

Across Tables~\ref{tab:logitomittedrejectionrates}--\ref{tab:dynamicprobitomittedrejectionrates},
bootstrap power against the omitted interaction rises in both $n$ and $T$, and the
profile asymptotic IM remains severely oversized for large $n$ and small $T$, so the
bootstrap is again the relevant sized check. Relative to the unit-scale
heteroskedastic design, this alternative is harder to detect at short panels:
in static logit at $T=10$ and $n=500$, target Boot is only $25.6\%$ (profile Boot
$31.4\%$), and even at $T=15$ and $n=1000$ it reaches $80.0\%$. Static probit is
noticeably more powerful at the same grid points (target Boot $72.4\%$ and profile
Boot $84.1\%$ at $T=10$ and $n=500$; both near $100\%$ at $n=1000$). In logit, the
conditional likelihood continues to track the target.
Dynamic probit is intermediate: short-$T$ power remains modest (target Boot
$38.5\%$ and profile Boot $32.2\%$ at $T=5$ and $n=1000$), but once $T$ is moderate
rejection rises quickly (profile Boot $91.8\%$ and target Boot $94.5\%$ at $T=10$
and $n=1000$). Wald rejection for $\theta$ is high throughout much of these grids
because the omitted interaction biases the common-parameter estimator, so it does
not provide a clean comparison with a correctly sized specification check.
\input{simulations/panel_im_test_logit_omitted_review_table.tex}
\input{simulations/panel_im_test_probit_omitted_review_table.tex}
\input{simulations/panel_im_test_dyn_omitted_review_table.tex}

\subsection{Wrong link function}\label{sec:add-sim-link}
\noindent
A third misspecified design keeps the additive index and covariates of Subsection~\ref{sec:add-sim-hetero} but generates binary outcomes from the \emph{probit} link while estimation continues to use fixed-effects \emph{logit}. Specifically, letting $m_{it}=\theta_1 X_{it,1}+\theta_2 X_{it,2}$ with $(\theta_1,\theta_2)=(1,-1)$,
\[
\Pr(\Yit=1\mid X_{it,1},X_{it,2},\alpha_i)=\Phi(m_{it}+\alpha_i),
\]
where $\Phi$ is the standard normal CDF, whereas the fitted model remains logistic as in \eqref{eq:assumed-model-simulation}. Table~\ref{tab:logitlinkrejectionrates} reports rejection rates for $n\in\{125,250,500,1000\}$ and $T\in\{8,10,15,20,25,30,40,50\}$ (M=5000, B=99; cells with at least 4990 good replications). We omit $T=5$ because, under this wrong-link design, estimation is numerically unstable at that panel length since the profile fixed-effect fits and profile bootstrap redraws fail at high rates, so reliable Monte Carlo rejection frequencies are not available. For the retained grid, power for the target and conditional PIM is moderate (roughly 10--16\% rejection). This is consistent with the well-known fact that average partial effects under logit and probit are typically close \citep[Ch.~15]{wooldridge2010} since the link misspecification is subtle. The profile asymptotic IM can be severely oversized when $n$ is large and $T$ is moderate unless bootstrap critical values are used. This is consistent with the incidental-parameter bias of the profile PIM under parabolic asymptotics.
\input{simulations/panel_im_test_logit_link_review_table.tex}

\section{Example: Neyman--Scott}\label{sec:neyman-scott-short}
\noindent
This appendix illustrates Theorem~\ref{main-theorem} in the heterogeneous means model of \citet{neyman-scott}. Subsection~\ref{sec:neyman-scott-theory} derives the limiting distributions in closed form. Subsection~\ref{sec:add-sim-ns} reports the corresponding Monte Carlo size evidence.

\subsection{Limiting distributions}\label{sec:neyman-scott-theory}
\noindent
Symbolic checks were verified using the SymPy library in Python \citep{sympy}. Let $\Yit=\alphaio+\epsilon_{it}$ with $\epsilon_{it}\stackrel{\mathrm{iid}}{\sim}\No(0,\theta_0)$ across $i$ and $t$. To simplify the algebra, assume that $\theta_0\in\Theta=[\underline{\theta},\overline{\theta}]$, where $\underline{\theta}>0$ and $\overline{\theta}<\infty$. Ignoring constants, the log-likelihood is
\begin{equation*}
\elliT(\theta,\alpha_i)=-\frac{1}{2}\log(\theta)-\frac{1}{2T\theta}\sumt(\Yit-\alpha_i)^2.
\end{equation*}
As can easily be verified by the first-order condition, $\alphai^\frakt(\theta)=\alphaio$ for every $\theta$, so the target likelihood
\begin{equation}\label{eq:ns-short-target}
\ell_i^\frakt(\theta)=-\frac{1}{2}\log(\theta)-\frac{1}{2T\theta}\sumt(\Yit-\alphaio)^2
\end{equation}
coincides with the oracle (known-$\alphaio$) likelihood. Similarly,  $\alphahatiT(\theta)=\Yibar$, yielding the profile likelihood
\begin{equation*}
\ell_i^\frakp(\theta)=-\frac{1}{2}\log(\theta)-\frac{1}{2T\theta}\sumt(\Yit-\Yibar)^2.
\end{equation*}
Write $Q_i(\theta)=\theta^{-1}\sumt(\Yit-\alphaio)^2$ (so $Q_i(\theta_0)\sim\chi^2_T$) and
\[
\eta_i=\theta_0^{-1}\sumt(\Yit-\Yibar)^2\sim\chi^2_{T-1},\qquad
\xi_i=T\theta_0^{-1}(\Yibar-\alphaio)^2\sim\chi^2_1,
\]
where $\eta_i$ and $\xi_i$ are independent by Basu's theorem. We begin by analyzing the properties of the target PIM test. Differentiating \eqref{eq:ns-short-target} yields
\begin{equation*}
\begin{aligned}
\ell_{i,1}^\frakt(\theta)
&=-\frac{1}{2\theta}+\frac{Q_i(\theta)}{2T\theta}
=\frac{1}{2T\theta}\bigl(Q_i(\theta)-T\bigr)
=\frac{\theta_0}{2T\theta^2}\bigl(Q_i(\theta_0)-T\bigr)
+\frac{1}{2\theta}\Bigl(\frac{\theta_0}{\theta}-1\Bigr),\\
\ell_{i,2}^\frakt(\theta)
&=-\frac{2\ell_{i,1}^\frakt(\theta)}{\theta}-\frac{1}{2\theta^2}.
\end{aligned}
\end{equation*}
Hence,
\begin{equation*}
D_i^\frakt(\theta)=T\bigl(\ell_{i,1}^\frakt(\theta)\bigr)^2-\frac{2\ell_{i,1}^\frakt(\theta)}{\theta}-\frac{1}{2\theta^2}.
\end{equation*}
Evaluating moments with $Q_i(\theta_0)\sim\chi^2_T$ yields $\E_0\bigl[D_i^\frakt(\theta_0)\bigr]=0$, confirming the information unbiasedness of the target likelihood. Similarly,
\begin{equation*}
\var_0\bigl[D_i^\frakt(\theta_0)\bigr]=\frac{1}{2\theta_0^4}+\frac{1}{T\theta_0^4}=O(1),
\end{equation*}
confirming Lemma \ref{lemma-variance}. Independence across $i$ therefore implies $\var_0\bigl(\sqrt{n}\,D^\frakt(\theta_0)\bigr)\to\frac{1}{2\theta_0^4}$ as  $n,T\to\infty$, meaning that the indicator must be scaled by $\sqrt{n}$ and not by the square root of the sample size. Next, a second-order expansion yields
\[
\sqrt{n}\,D^\frakt(\thetahat^\frakt)=\sqrt{n}\,D^\frakt(\theta_0)+D_1^\frakt(\theta_0)\sqrt{n}\,(\thetahat^\frakt-\theta_0)+R_n,
\]
where $R_n=\frac12\sqrt{n}\,D_2^\frakt(\thetabar)(\thetahat^\frakt-\theta_0)^2$ for an intermediate value $\thetabar$ between $\thetahat^\frakt$ and $\theta_0$.
The oracle MLE admits the exact representation
\begin{equation*}
\thetahat^\frakt-\theta_0=\frac{2\theta_0^2}{n}\sumi\ell_{i,1}^\frakt(\theta_0),
\end{equation*}
so $\E_0\thetahat^\frakt=\theta_0$ and $\sqrt{nT}(\thetahat^\frakt-\theta_0)=\Op(1)$. Differentiating $D_i^\frakt$ gives
\begin{equation*}
\begin{aligned}
D_{i,1}^\frakt(\theta)
&=2T\ell_{i,1}^\frakt(\theta)\ell_{i,2}^\frakt(\theta)+\ell_{i,3}^\frakt(\theta)
=-\frac{4T\bigl(\ell_{i,1}^\frakt(\theta)\bigr)^2}{\theta}+\frac{(6-T)\ell_{i,1}^\frakt(\theta)}{\theta^2}+\frac{2}{\theta^3},\\
D_{i,2}^\frakt(\theta)
&=2T\bigl(\ell_{i,2}^\frakt(\theta)\bigr)^2+2T\ell_{i,1}^\frakt(\theta)\ell_{i,3}^\frakt(\theta)+\ell_{i,4}^\frakt(\theta)
=\frac{20T}{\theta^2}\bigl(\ell_{i,1}^\frakt(\theta)\bigr)^2+\frac{8T-24}{\theta^3}\ell_{i,1}^\frakt(\theta)+\frac{T-18}{2\theta^4}.
\end{aligned}
\end{equation*}
For the first derivative, score moments  yield $\E_0[D_{i,1}^\frakt(\theta_0)]=0$ and $\var_0(D_{i,1}^\frakt(\theta_0))=O(T)$.\footnote{The mean-zero property of $D_{i,1}^\frakt(\theta_0)$ is special to the Neyman--Scott target likelihood; in general $\E_0[D_{i,1}(\theta_0)]$ need not vanish (Lemma~\ref{lemma-rates}). However, since the leading contribution to $D_1^\frakt(\theta_0)$ is the $O(T)$ variance term, the asymptotic properties are not affected.} Consequently, the sample average satisfies $D_1^\frakt(\theta_0)=\Op(\sqrt{T/n})$. Combined with $\thetahat^\frakt-\theta_0=\Op((nT)^{-1/2})$, we get that the first remainder term is negligible, i.e.,
$D_1^\frakt(\theta_0)\sqrt{n}\,(\thetahat^\frakt-\theta_0)=\Op(n^{-1/2})=o_p(1)$.
Since $\Theta=[\underline{\theta},\overline{\theta}]$ with $\underline{\theta}>0$, the score display implies $\sup_{\theta\in\Theta}|\ell_{i,1}^\frakt(\theta)|=\Op(1)$ since the deterministic term is $O(1)$ uniformly on $\Theta$, while the stochastic term is uniformly $\Op(T^{-1/2})$. Therefore,  $\sup_{\theta\in\Theta}|D_{i,2}^\frakt(\theta)|=\Op(T)$, and hence $\sup_{\theta\in\Theta}|D_2^\frakt(\theta)|=\Op(T)$. Similarly, $D_{i,3}^\frakt(\theta)$ and $D_{i,4}^\frakt(\theta)$ consist of a stochastic part that is $\Op(\sqrt{T})$ at $\theta_0$ and $\Op(T)$ uniformly on $\Theta$, and a deterministic part that is $O(T)$ uniformly on $\Theta$, which confirms Lemma \ref{lemma-rates} in the Neyman-Scott model. Together with $\thetahat^\frakt-\theta_0=\Op((nT)^{-1/2})$, the remainder is $R_n=\Op\!\bigl(\sqrt{n}\cdot T\cdot (nT)^{-1}\bigr)=\Op(n^{-1/2})=o_p(1)$. Consequently, $\sqrt{n}\,D^\frakt(\thetahat^\frakt)=\sqrt{n}\,D^\frakt(\theta_0)+o_p(1),$ 
which delivers the plug-in equivalence of Lemma~\ref{lemma-scaled-indicator}\textup{(i)}.
The consistency of the variance estimator $\widehat{V}^\frakt(\thetahatstar)=n^{-1}\sumi\bigl(D_i^\frakt(\thetahatstar)\bigr)^2$ can be verified by a fourth-order Taylor expansion as in the proof of Lemma \ref{lemma-variance-estimator}, which yields  $\widehat{V}^\frakt(\thetahat^\frakt)=\widehat{V}^\frakt(\theta_0)+\op(1)$. At $\theta_0$, information unbiasedness yields $\E_0[\widehat{V}^\frakt(\theta_0)]=\var_0\bigl(D_i^\frakt(\theta_0)\bigr)=1/(2\theta_0^4)+1/(T\theta_0^4)$, and $\E_0\bigl[(D_i^\frakt(\theta_0))^4\bigr]=\Op(1)$, so independence across $i$ gives $\widehat{V}^\frakt(\theta_0)=1/(2\theta_0^4)+\op(1)$. Therefore, $\widehat{V}^\frakt(\thetahat^\frakt)\pto1/(2\theta_0^4)$, and an application of the continuous mapping theorem gives $\calW^\frakt\dto\chi^2_1$ as $n,T\to\infty$, as in Theorem~\ref{main-theorem}\textup{(i)}.

Next, we consider the profile PIM. 
Relative to the target score, the profile score differs only through the incidental-parameter term $b_{i,1}(\theta)=-\theta_0\xi_i/(2T\theta^2)$:
\begin{equation*}
\ell_{i,1}^\frakp(\theta)=\ell_{i,1}^\frakt(\theta)+b_{i,1}(\theta),
\end{equation*}
with $b_{i,2}(\theta)=\theta_0\xi_i/(T\theta^3)$ likewise entering the Hessian, $\ell_{i,2}^\frakp(\theta)=\ell_{i,2}^\frakt(\theta)+b_{i,2}(\theta)$. Substituting into the indicator and expanding the square gives
\begin{equation*}
D_i^\frakp(\theta)=D_i^\frakt(\theta)+\IB_i^\frakp(\theta),
\qquad
\IB_i^\frakp(\theta)=Tb_{i,1}(\theta)^2+2T\ell_{i,1}^\frakt(\theta)b_{i,1}(\theta)+b_{i,2}(\theta),
\end{equation*}
so the profile indicator is the target indicator plus an information-bias term driven by fixed-effect estimation. Using the independence between $\eta_i$ and $\xi_i$ and $\chi^2$ moments,
\begin{equation*}
\E_0\bigl[\IB_i^\frakp(\theta_0)\bigr]=\frac{3}{4T\theta_0^2}=O(T^{-1}),
\qquad
\E_0\bigl[\ell_{i,1}^\frakp(\theta_0)\bigr]=-\frac{1}{2T\theta_0}=O(T^{-1}),
\end{equation*}
confirming that both the information- and the score bias of the profile likelihood are of order $O(T^{-1})$. The fixed-effects MLE satisfies the same score representation as the target case,
\begin{equation*}
\thetahat^\frakp-\theta_0=\frac{2\theta_0^2}{n}\sumi\ell_{i,1}^\frakp(\theta_0)
=\frac{2\theta_0^2}{n}\sumi\bigl(\ell_{i,1}^\frakt(\theta_0)+b_{i,1}(\theta_0)\bigr),
\end{equation*}
so $\E_0[\thetahat^\frakp-\theta_0]=-\theta_0/T+O((nT)^{-1})$ and $\sqrt{n}(\thetahat^\frakp-\theta_0)=\Op(\sqrt{n}/T)+\Op(T^{-1/2})$. A second-order expansion yields
\[
\sqrt{n}\,D^\frakp(\thetahat^\frakp)=\sqrt{n}\,D^\frakp(\theta_0)+D_1^\frakp(\theta_0)\sqrt{n}\,(\thetahat^\frakp-\theta_0)+R_n^\frakp,
\]
where $R_n^\frakp=\tfrac12\sqrt{n}\,D_2^\frakp(\thetabar)(\thetahat^\frakp-\theta_0)^2$ for an intermediate value $\thetabar$. Decomposing the leading term,
\[
\sqrt{n}\,D^\frakp(\theta_0)=\sqrt{n}\,D^\frakt(\theta_0)+\frac{1}{\sqrt{n}}\sumi\IB_i^\frakp(\theta_0),
\]
so that the information bias in the asymptotic distribution of the profile indicator is
\begin{equation*}
\sqrt{n}\,\E_0\bigl[D^\frakp(\theta_0)\bigr]=\sqrt{n}\,\E_0\bigl[\IB_i^\frakp(\theta_0)\bigr]=\frac{3\sqrt{n}}{4T\theta_0^2}=O\!\left(\frac{\sqrt{n}}{T}\right).
\end{equation*}
The first remainder term is $D_{i,1}^\frakp(\theta_0)=D_{i,1}^\frakt(\theta_0)+\IB_{i,1}^\frakp(\theta_0)$, where differentiating $\IB_i^\frakp$ introduces products of $b_{i,1}$, $b_{i,2}$, and target derivatives. Unlike the target case, $\E_0[D_{i,1}^\frakp(\theta_0)]=1/(2\theta_0^3)-2/(T\theta_0^3)=O(1)$ does not vanish, so score bias enters through
\[
\E_0\bigl[D_1^\frakp(\theta_0)\sqrt{n}\,(\thetahat^\frakp-\theta_0)\bigr]
=\E_0\bigl[D_{i,1}^\frakp(\theta_0)\bigr]\,\sqrt{n}\,\E_0\bigl[\thetahat^\frakp-\theta_0\bigr]
+\sqrt{n}\,\cov_0\bigl(D_1^\frakp(\theta_0),\thetahat^\frakp-\theta_0\bigr).
\]
The covariance term is negligible for the leading mean bias. Using $\thetahat^\frakp-\theta_0=(2\theta_0^2/n)\sumi\ell_{i,1}^\frakp(\theta_0)$ and independence across~$i$,
\[
\cov_0\bigl(D_1^\frakp(\theta_0),\thetahat^\frakp-\theta_0\bigr)
=\frac{2\theta_0^2}{n}\,\cov_0\bigl(D_{i,1}^\frakp(\theta_0),\ell_{i,1}^\frakp(\theta_0)\bigr)=O(n^{-1}),
\]
because cross-index terms factor and only the same-$i$ overlap remains. Hence,
\[
\E_0\bigl[D_1^\frakp(\theta_0)\sqrt{n}\,(\thetahat^\frakp-\theta_0)\bigr]
=-\frac{\sqrt{n}}{2T\theta_0^2}+O\!\left(\frac{\sqrt{n}}{T^2}\right),
\]
using $\E_0[D_{i,1}^\frakt(\theta_0)]=0$. The target contribution $D_1^\frakt(\theta_0)\sqrt{n}(\thetahat^\frakp-\theta_0)$ and the centered fluctuation of the linear term are $o_p(1)$, as in the target proof. The quadratic remainder is of the same order as these biases. Differentiating the profile indicator and taking the $\chi^2_{T-1}$ moment gives
\[
\E_0\bigl[D_{i,2}^\frakp(\theta_0)\bigr]=\frac{T^2-6T+14}{2T\theta_0^4}=\frac{T}{2\theta_0^4}+O(1),
\]
and therefore $\E_0[D_2^\frakp(\theta_0)]=T/(2\theta_0^4)+O(1)$. Since $\thetabar$ lies between $\thetahat^\frakp$ and $\theta_0$, replacing $D_2^\frakp(\thetabar)$ by $D_2^\frakp(\theta_0)$ changes the expectation of $R_n^\frakp$ only at order $O(\sqrt{n}/T^2)$. The contribution of $\var_0(\thetahat^\frakp)$ and of the covariance between $D_2^\frakp(\theta_0)$ and $\thetahat^\frakp-\theta_0$ is $O(n^{-1/2})$, so
\[
\E_0[R_n^\frakp]
=\frac12\sqrt{n}\,\E_0[D_2^\frakp(\theta_0)]\bigl(\E_0[\thetahat^\frakp-\theta_0]\bigr)^2+O(n^{-1/2})+O\!\left(\frac{\sqrt{n}}{T^2}\right)
=\frac{\sqrt{n}}{4T\theta_0^2}+O(n^{-1/2})+O\!\left(\frac{\sqrt{n}}{T^2}\right).
\]
The centered part of $R_n^\frakp$ is $o_p(1)$. Combining the information bias, the linear score bias, and this quadratic score bias,
\begin{equation*}
\sqrt{n}\,D^\frakp(\thetahat^\frakp)
=\sqrt{n}\,D^\frakt(\theta_0)+\frac{\sqrt{n}}{T}\left(\frac{3}{4\theta_0^2}-\frac{1}{2\theta_0^2}+\frac{1}{4\theta_0^2}\right)+o_p(1)
=\sqrt{n}\,D^\frakt(\theta_0)+\frac{\sqrt{n}}{2T\theta_0^2}+o_p(1).
\end{equation*}
Under Assumption~\ref{ass-n-T-rate}\textup{(ii)}, $n/T^2\to\rho\in[0,\infty)$, the mean shift converges to $\sqrt{\rho}/(2\theta_0^2)$ and vanishes when $\rho=0$. Overall,
\begin{equation*}
\sqrt{n}\,D^\frakp(\thetahat^\frakp)=\sqrt{n}\,D^\frakt(\theta_0)+O\!\left(\frac{\sqrt{n}}{T}\right)+o_p(1),
\end{equation*}
confirming Lemma~\ref{lemma-scaled-indicator}\textup{(ii)}. For the variance, note that $\var_0(D_i^\frakp(\theta_0))=1/(2\theta_0^4)+O(T^{-1})$, since the profile information-bias pieces change the individual variance only at order $T^{-1}$, while the $O(\sqrt{n}/T)$ mean shift above is deterministic at first order. Hence $\var_0(\sqrt{n}\,D^\frakp(\thetahat^\frakp))\to V\defined1/(2\theta_0^4)$, the same limit as for the target indicator. Slutsky's theorem therefore gives
\[
\sqrt{n}\,D^\frakp(\thetahat^\frakp)\dto \No\!\Bigl(\sqrt{\rho}\,\kappa(\theta_0),\,V\Bigr),
\qquad
\kappa(\theta_0)=\plim_{T\to\infty}\left(\frac{3}{4\theta_0^2}-\frac{1}{2\theta_0^2}+\frac{1}{4\theta_0^2}\right)=\frac{1}{2\theta_0^2},
\]
where $\kappa(\theta_0)$ collects the information- and score-bias limits derived above. With a consistent variance estimator $\widehat{V}^\frakp(\thetahat^\frakp)\pto V$ (as for $\widehat{V}^\frakt$ above), the continuous mapping theorem yields $\calW^\frakp=(\sqrt{n}\,D^\frakp(\thetahat^\frakp))^2/\widehat{V}^\frakp(\thetahat^\frakp)\dto\chi^2_1(\lambda_\rho)$ when $\rho\in(0,\infty)$ and $\calW^\frakp\dto\chi^2_1$ when $\rho=0$. When $\rho\in(0,\infty)$, 
\begin{equation*}
\sqrt{n}\,D^\frakp(\thetahat^\frakp)\dto \No\!\Bigl(\tfrac{\sqrt{\rho}}{2\theta_0^2},\,\tfrac{1}{2\theta_0^4}\Bigr),
\qquad
\calW^\frakp\dto \chi^2_1(\lambda_\rho)=\chi^2_1(\rho/2),
\end{equation*}
where $\mu\defined\sqrt{\rho}\,\kappa(\theta_0)$ and 
\[
\lambda_\rho=\mu'V^{-1}\mu=\rho\,\kappa(\theta_0)^2 V^{-1}
=\rho\left(\frac{1}{2\theta_0^2}\right)^2(2\theta_0^4)=\frac{\rho}{2}.
\]

\subsection{Monte Carlo}\label{sec:add-sim-ns}
\noindent
We simulate the same heterogeneous-means model. For each grid point $(n,T)$ with $n\in\{125,250,500,1000\}$ and $T\in\{5,10,15,20,25,30,40,50\}$, $n$ individuals are observed over $T$ periods. Individual effects satisfy $\alpha_i\sim\No(0,1)$, idiosyncratic errors $\varepsilon_{it}$ are independent $\No(0,\theta_0)$, and outcomes are generated as
\[
\Yit=\alpha_i+\varepsilon_{it},
\]
with $\theta_0=1$. The common variance $\theta$ is estimated from the panel using the target (oracle) and profile likelihoods. Results are based on $5{,}000$ Monte Carlo replications; bootstrap critical values use $99$ parametric bootstrap draws.
Table~\ref{tab:ns_rejection_rates} reports rejection rates for the asymptotic IM test (IM), the bootstrap IM test (Boot), and the Wald test (Wald) based on the profile and target likelihoods.

Table~\ref{tab:ns_rejection_rates} matches the closed-form predictions above. The target PIM is correctly sized throughout the grid under both asymptotic and bootstrap critical values, as is Wald inference based on the oracle estimator. For the profile PIM, asymptotic rejection moves with $\rho$ as predicted by the noncentrality $\lambda_\rho=\rho/2$. The sharpest check is $T=5$ and $n=1000$, where $\rho=40$: $\chi^2_1(20)$ rejects with probability about $99\%$, and the table gives $98.3\%$. At larger $T$ the empirical rates lie below that limiting probability, because the variance of the indicator has not yet settled at $V$, but they fall toward the nominal level as $\rho$ shrinks ($6.1\%$ at $T=50$ and $n=1000$). The parametric bootstrap restores size near the nominal $5\%$ level throughout the grid, since it matches $\chi^2_1(\lambda_\rho)$ rather than a central $\chi^2_1$. By contrast, the profile Wald test remains severely oversized even at $T=50$ and $n=1000$ ($89.2\%$), because $\thetahat^\frakp$ itself carries the incidental-parameter bias. In short, the PIM indicator is first-order robust under rectangular asymptotics, and the residual parabolic bias of the profile statistic is removable by the bootstrap.
\input{simulations/panel_im_test_neyman_scott_review_table.tex}

\section{Illustrative examples}\label{sec:illustrative-example}
\noindent
This section illustrates the \emph{exploratory} use of the profile PIM test developed in this paper. 
In each application we follow the same workflow: (i) detect misspecification with the omnibus profile IM statistic, (ii) diagnose directions of failure with blockwise IM components and targeted LR/LM tests, (iii) propose a parsimonious repair, and (iv) re-test the repaired model with both asymptotic and bootstrap inference. The first two examples use the full estimation sample for exploration. The third example adds a random train/hold-out split by plant to separate exploratory specification search from out-of-sample validation.

\subsection{Dynamic LFP probit (PSID)}
Using the dynamic FE probit specification of \citet{fernandez-val-probit} on the PSID panel of informative women ($n=664$, $T=9$), we start from

\begin{equation}\label{eq:fv-lfp-spec}
\Pr(P_{it}=1\mid P_{i,t-1},X_{it},\alpha_i)
=\Phi\bigl(\theta_P P_{i,t-1}+X_{it}'\theta_X+\alpha_i\bigr).
\end{equation}

The omnibus profile PIM rejects strongly under both asymptotic and bootstrap inference (Table~\ref{tab:fv-lfp-pim}). Diagnostics then localize the failure mainly to fertility, age, and time-related moments (Table~\ref{tab:fv-lfp-diagnostics}), while targeted tests rank the kids0--2$\times$lag interaction first (LR $p=0.006$), ahead of a weaker AR(2) signal and with heteroskedasticity also active. We therefore take that interaction as the first targeted repair because it is both economically interpretable and statistically most supported. The repaired model improves fit (higher profile likelihood), but the omnibus PIM still rejects (Table~\ref{tab:fv-lfp-repair}), indicating that the interaction addresses only one component of misspecification and that additional structure (for example richer time/age effects or scale heterogeneity) is still needed.
\input{applied_example/FV_lfp_pim_results.tex}

\input{applied_example/FV_lfp_diagnostics.tex}

\input{applied_example/FV_lfp_repair_pim.tex}

\subsection{Static union probit (NLSY)}
In the NLSY union panel of informative men ($n=246$, $T=8$), the baseline FE probit with marital status and year effects is also rejected by the omnibus PIM under both asymptotic and bootstrap inference (Table~\ref{tab:union-pim}). Diagnostics and targeted tests indicate omitted industry structure as the main source of misspecification (Table~\ref{tab:union-diagnostics}), as manufacturing and wholesale/retail have very strong LR evidence under both asymptotic and bootstrap critical values. Heteroskedasticity is weaker and mixed across benchmarks (asymptotic LM $p=0.078$; bootstrap $p=0.030$). Guided by the industry ranking, we run a short exploratory sequence and retain the smallest specification that eliminates rejection: hours worked plus manufacturing and construction indicators. Mean-fit LR tests in the same table support that enrichment (manufacturing LR $=18.74$; hours plus the three-industry bundle LR $=31.42$), and the repaired model passes the omnibus PIM under both asymptotic and bootstrap inference (Table~\ref{tab:union-repair}). Here the IM test therefore plays both an exploratory and constructive role.

\input{applied_example/union_pim_results.tex}

\input{applied_example/union_diagnostics.tex}

\input{applied_example/union_repair_pim.tex}

\subsection{Dynamic export probit (Chile ENIA)}
We apply the same workflow to export participation in the Chilean manufacturing panel underlying \citet{pavcnik-2002} and \citet{deloecker-2007}. Plants with time-varying export status are retained from the balanced 1995--2005 ENIA extract ($n=399$ movers; $T=10$ after lagging). A random 70\% subset of plants forms the \emph{exploration} sample ($n=279$). The remaining 30\% are held out for validation. For a conventional dynamic export specification---lagged export status, log employment, and year effects---the central-$\chi^2$ benchmark rejects, but the parametric bootstrap does not ($p=0.15$; Table~\ref{tab:enia-pim}), suggesting that asymptotic profile-PIM inference is anti-conservative in this short panel. Blockwise diagnostics show the same pattern: lag, size, and time reject under the asymptotic benchmark, while bootstrap $p$-values for these blocks are larger (Table~\ref{tab:enia-diagnostics}). Targeted diagnostics nevertheless indicate skill intensity as a plausible enrichment (LR $p=0.029$). A pre-announced specification search then adds foreign ownership and skill intensity; the smallest non-rejecting candidate (S5) passes both asymptotic and bootstrap omnibus tests on the exploration sample (Table~\ref{tab:enia-repair}), and the repaired specification also fails to reject on the hold-out plants under either benchmark (Table~\ref{tab:enia-holdout}), although only 32 of 120 hold-out plants contribute finite IM indicators. Time-invariant location controls are excluded from the search because they are collinear with plant fixed effects.

\input{applied_example/enia_pim_results.tex}

\input{applied_example/enia_diagnostics.tex}

\input{applied_example/enia_repair_pim.tex}

\input{applied_example/enia_holdout_pim.tex}

\bibliographystyle{agsm}
\bibliography{mybib}

\end{document}

%% file: simulations/panel_im_test_dyn_correct_review_table.tex
\begin{landscape}
\begin{table}[p]
\centering
\caption{Rejection Rates for the Dynamic Probit Model}
\label{tab:dynamicprobitrejectionrates}
\small
\setlength{\tabcolsep}{4.5pt}
\begin{tabular}{llcccccccccccccccc}
\toprule
 & & \multicolumn{3}{c}{$n=125$} & & \multicolumn{3}{c}{$n=250$} & & \multicolumn{3}{c}{$n=500$} & & \multicolumn{3}{c}{$n=1000$} \\
\cmidrule(lr){3-5} \cmidrule(lr){7-9} \cmidrule(lr){11-13} \cmidrule(lr){15-17}
$T$ & Method & IM & Boot & Wald & & IM & Boot & Wald & & IM & Boot & Wald & & IM & Boot & Wald \\
\midrule
5 & Profile & 0.0472 & 0.0470 & 0.9794 & & 0.0468 & 0.0848 & 1.0000 & & 0.4094 & 0.1092 & 1.0000 & & 0.9940 & 0.1182 & 1.0000 \\
 & Target & 0.2864 & 0.0662 & 0.0478 & & 0.1924 & 0.0552 & 0.0490 & & 0.1256 & 0.0526 & 0.0466 & & 0.0812 & 0.0550 & 0.0430 \\
\addlinespace
10 & Profile & 0.0668 & 0.0680 & 0.8582 & & 0.0500 & 0.0706 & 0.9962 & & 0.1496 & 0.0638 & 1.0000 & & 0.7298 & 0.0446 & 1.0000 \\
 & Target & 0.2130 & 0.0602 & 0.0476 & & 0.1428 & 0.0620 & 0.0426 & & 0.0990 & 0.0582 & 0.0532 & & 0.0674 & 0.0566 & 0.0470 \\
\addlinespace
15 & Profile & 0.0760 & 0.0620 & 0.7000 & & 0.0566 & 0.0686 & 0.9538 & & 0.0844 & 0.0632 & 0.9994 & & 0.3276 & 0.0564 & 1.0000 \\
 & Target & 0.1802 & 0.0572 & 0.0436 & & 0.1312 & 0.0620 & 0.0476 & & 0.0856 & 0.0572 & 0.0528 & & 0.0620 & 0.0576 & 0.0470 \\
\addlinespace
20 & Profile & 0.0902 & 0.0626 & 0.5702 & & 0.0526 & 0.0620 & 0.8872 & & 0.0642 & 0.0598 & 0.9954 & & 0.1702 & 0.0590 & 1.0000 \\
 & Target & 0.1788 & 0.0608 & 0.0510 & & 0.1216 & 0.0574 & 0.0486 & & 0.0874 & 0.0570 & 0.0496 & & 0.0600 & 0.0586 & 0.0488 \\
\addlinespace
25 & Profile & 0.0884 & 0.0550 & 0.4708 & & 0.0588 & 0.0606 & 0.7924 & & 0.0536 & 0.0594 & 0.9838 & & 0.1124 & 0.0586 & 1.0000 \\
 & Target & 0.1688 & 0.0532 & 0.0490 & & 0.1190 & 0.0628 & 0.0486 & & 0.0842 & 0.0572 & 0.0550 & & 0.0710 & 0.0666 & 0.0498 \\
\addlinespace
30 & Profile & 0.0970 & 0.0532 & 0.4152 & & 0.0712 & 0.0654 & 0.7038 & & 0.0514 & 0.0580 & 0.9558 & & 0.0890 & 0.0612 & 0.9998 \\
 & Target & 0.1600 & 0.0526 & 0.0548 & & 0.1260 & 0.0686 & 0.0546 & & 0.0822 & 0.0600 & 0.0488 & & 0.0648 & 0.0596 & 0.0442 \\
\addlinespace
40 & Profile & 0.1044 & 0.0546 & 0.3158 & & 0.0668 & 0.0558 & 0.5666 & & 0.0520 & 0.0596 & 0.8778 & & 0.0664 & 0.0628 & 0.9956 \\
 & Target & 0.1558 & 0.0538 & 0.0492 & & 0.1064 & 0.0586 & 0.0546 & & 0.0832 & 0.0612 & 0.0488 & & 0.0704 & 0.0678 & 0.0418 \\
\addlinespace
50 & Profile & 0.1176 & 0.0598 & 0.2494 & & 0.0756 & 0.0598 & 0.4728 & & 0.0462 & 0.0528 & 0.7996 & & 0.0594 & 0.0578 & 0.9810 \\
 & Target & 0.1500 & 0.0576 & 0.0500 & & 0.1152 & 0.0610 & 0.0576 & & 0.0774 & 0.0552 & 0.0504 & & 0.0604 & 0.0556 & 0.0496 \\
\bottomrule
\end{tabular}
\end{table}
\end{landscape}

%% file: simulations/panel_im_test_logit_hetero_review_table.tex
\begin{landscape}
\begin{table}[p]
\centering
\caption{Rejection Rates for Static Logit Model under Unit-Scale Heteroskedasticity}
\label{tab:logitheteroskedasticrejectionrates}
\small
\setlength{\tabcolsep}{4.5pt}
\begin{tabular}{llcccccccccccccccc}
\toprule
 & & \multicolumn{3}{c}{$n=125$} & & \multicolumn{3}{c}{$n=250$} & & \multicolumn{3}{c}{$n=500$} & & \multicolumn{3}{c}{$n=1000$} \\
\cmidrule(lr){3-5} \cmidrule(lr){7-9} \cmidrule(lr){11-13} \cmidrule(lr){15-17}
$T$ & Method & IM & Boot & Wald & & IM & Boot & Wald & & IM & Boot & Wald & & IM & Boot & Wald \\
\midrule
5 & Profile & 0.1830 & 0.1484 & 0.2988 & & 0.8450 & 0.2200 & 0.4861 & & 1.0000 & 0.2900 & 0.7402 & & 1.0000 & 0.4346 & 0.9512 \\
 & Conditional & 0.0636 & 0.0390 & 0.0914 & & 0.0618 & 0.0608 & 0.1122 & & 0.0834 & 0.1106 & 0.1472 & & 0.2496 & 0.3742 & 0.2276 \\
 & Target & 0.0576 & 0.0292 & 0.0844 & & 0.0770 & 0.0614 & 0.0906 & & 0.1858 & 0.2036 & 0.1154 & & 0.6352 & 0.7006 & 0.1620 \\
\addlinespace
10 & Profile & 0.4270 & 0.5164 & 0.1772 & & 0.9510 & 0.8758 & 0.2518 & & 1.0000 & 0.9902 & 0.3870 & & 1.0000 & 0.9998 & 0.6062 \\
 & Conditional & 0.1744 & 0.0884 & 0.1126 & & 0.5406 & 0.4468 & 0.1336 & & 0.9610 & 0.9514 & 0.1852 & & 1.0000 & 1.0000 & 0.2606 \\
 & Target & 0.2300 & 0.1092 & 0.1128 & & 0.6722 & 0.5574 & 0.1200 & & 0.9876 & 0.9818 & 0.1624 & & 1.0000 & 1.0000 & 0.2130 \\
\addlinespace
15 & Profile & 0.7858 & 0.8164 & 0.1562 & & 0.9966 & 0.9966 & 0.1858 & & 1.0000 & 1.0000 & 0.2438 & & 1.0000 & 1.0000 & 0.3600 \\
 & Conditional & 0.5838 & 0.3450 & 0.1292 & & 0.9756 & 0.9394 & 0.1672 & & 0.9998 & 1.0000 & 0.2132 & & 1.0000 & 1.0000 & 0.3230 \\
 & Target & 0.6436 & 0.3954 & 0.1248 & & 0.9836 & 0.9606 & 0.1598 & & 1.0000 & 0.9998 & 0.1932 & & 1.0000 & 1.0000 & 0.2838 \\
\addlinespace
20 & Profile & 0.9502 & 0.9460 & 0.1542 & & 0.9998 & 0.9998 & 0.1570 & & 1.0000 & 1.0000 & 0.1992 & & 1.0000 & 1.0000 & 0.2484 \\
 & Conditional & 0.9012 & 0.7008 & 0.1678 & & 0.9988 & 0.9966 & 0.1844 & & 1.0000 & 1.0000 & 0.2490 & & 1.0000 & 1.0000 & 0.3538 \\
 & Target & 0.9208 & 0.7432 & 0.1652 & & 0.9994 & 0.9982 & 0.1718 & & 1.0000 & 1.0000 & 0.2348 & & 1.0000 & 1.0000 & 0.3244 \\
\addlinespace
25 & Profile & 0.9904 & 0.9836 & 0.1640 & & 1.0000 & 1.0000 & 0.1804 & & 1.0000 & 1.0000 & 0.1760 & & 1.0000 & 1.0000 & 0.2000 \\
 & Conditional & 0.9818 & 0.8924 & 0.1768 & & 1.0000 & 1.0000 & 0.2084 & & 1.0000 & 1.0000 & 0.2794 & & 1.0000 & 1.0000 & 0.3794 \\
 & Target & 0.9856 & 0.9080 & 0.1760 & & 1.0000 & 1.0000 & 0.2046 & & 1.0000 & 1.0000 & 0.2634 & & 1.0000 & 1.0000 & 0.3508 \\
\addlinespace
30 & Profile & 0.9990 & 0.9948 & 0.1758 & & 1.0000 & 1.0000 & 0.1766 & & 1.0000 & 1.0000 & 0.1884 & & 1.0000 & 1.0000 & 0.1924 \\
 & Conditional & 0.9974 & 0.9708 & 0.1984 & & 1.0000 & 1.0000 & 0.2428 & & 1.0000 & 1.0000 & 0.3110 & & 1.0000 & 1.0000 & 0.4276 \\
 & Target & 0.9980 & 0.9750 & 0.1962 & & 1.0000 & 1.0000 & 0.2364 & & 1.0000 & 1.0000 & 0.3024 & & 1.0000 & 1.0000 & 0.4016 \\
\addlinespace
40 & Profile & 0.9998 & 0.9992 & 0.2086 & & 1.0000 & 1.0000 & 0.2046 & & 1.0000 & 1.0000 & 0.2160 & & 1.0000 & 1.0000 & 0.2106 \\
 & Conditional & 0.9998 & 0.9974 & 0.2376 & & 1.0000 & 1.0000 & 0.2814 & & 1.0000 & 1.0000 & 0.3526 & & 1.0000 & 1.0000 & 0.4868 \\
 & Target & 0.9998 & 0.9978 & 0.2354 & & 1.0000 & 1.0000 & 0.2762 & & 1.0000 & 1.0000 & 0.3434 & & 1.0000 & 1.0000 & 0.4674 \\
\addlinespace
50 & Profile & 1.0000 & 1.0000 & 0.2490 & & 1.0000 & 1.0000 & 0.2554 & & 1.0000 & 1.0000 & 0.2556 & & 1.0000 & 1.0000 & 0.2584 \\
 & Conditional & 1.0000 & 0.9998 & 0.2838 & & 1.0000 & 1.0000 & 0.3116 & & 1.0000 & 1.0000 & 0.3936 & & 1.0000 & 1.0000 & 0.5296 \\
 & Target & 1.0000 & 0.9998 & 0.2826 & & 1.0000 & 1.0000 & 0.3068 & & 1.0000 & 1.0000 & 0.3846 & & 1.0000 & 1.0000 & 0.5110 \\
\bottomrule
\end{tabular}
\end{table}
\end{landscape}

%% file: simulations/panel_im_test_probit_hetero_sigma_review_table.tex
\begin{landscape}
\begin{table}[p]
\centering
\caption{Rejection Rates for Static Probit Model under Unit-Scale Heteroskedasticity}
\label{tab:probitheteroskedasticsigmarejectionrates}
\small
\setlength{\tabcolsep}{4.5pt}
\begin{tabular}{llcccccccccccccccc}
\toprule
 & & \multicolumn{3}{c}{$n=125$} & & \multicolumn{3}{c}{$n=250$} & & \multicolumn{3}{c}{$n=500$} & & \multicolumn{3}{c}{$n=1000$} \\
\cmidrule(lr){3-5} \cmidrule(lr){7-9} \cmidrule(lr){11-13} \cmidrule(lr){15-17}
$T$ & Method & IM & Boot & Wald & & IM & Boot & Wald & & IM & Boot & Wald & & IM & Boot & Wald \\
\midrule
5 & Profile & 0.0356 & 0.1130 & 0.1920 & & 0.4544 & 0.1754 & 0.2376 & & 0.9766 & 0.2024 & 0.3114 & & 1.0000 & 0.2674 & 0.4246 \\
 & Target & 0.0974 & 0.0260 & 0.4180 & & 0.1008 & 0.0432 & 0.6516 & & 0.2346 & 0.1566 & 0.8888 & & 0.7710 & 0.7378 & 0.9940 \\
\addlinespace
10 & Profile & 0.3192 & 0.5138 & 0.2388 & & 0.8928 & 0.8472 & 0.3066 & & 0.9984 & 0.9614 & 0.4388 & & 1.0000 & 0.9910 & 0.6302 \\
 & Target & 0.2478 & 0.0778 & 0.6532 & & 0.7322 & 0.4954 & 0.8850 & & 0.9910 & 0.9792 & 0.9906 & & 1.0000 & 1.0000 & 0.9998 \\
\addlinespace
15 & Profile & 0.7376 & 0.8136 & 0.4400 & & 0.9866 & 0.9876 & 0.6292 & & 0.9998 & 0.9990 & 0.8504 & & 1.0000 & 1.0000 & 0.9820 \\
 & Target & 0.6936 & 0.3324 & 0.7800 & & 0.9826 & 0.9304 & 0.9514 & & 0.9994 & 0.9996 & 0.9986 & & 1.0000 & 1.0000 & 1.0000 \\
\addlinespace
20 & Profile & 0.9142 & 0.9218 & 0.6152 & & 0.9984 & 0.9990 & 0.8182 & & 1.0000 & 1.0000 & 0.9600 & & 1.0000 & 1.0000 & 0.9992 \\
 & Target & 0.9076 & 0.6634 & 0.8452 & & 0.9982 & 0.9922 & 0.9778 & & 1.0000 & 1.0000 & 1.0000 & & 1.0000 & 1.0000 & 1.0000 \\
\addlinespace
25 & Profile & 0.9698 & 0.9584 & 0.7068 & & 0.9996 & 0.9998 & 0.9056 & & 1.0000 & 1.0000 & 0.9900 & & 1.0000 & 1.0000 & 1.0000 \\
 & Target & 0.9726 & 0.8354 & 0.8864 & & 0.9996 & 0.9990 & 0.9888 & & 1.0000 & 1.0000 & 0.9996 & & 1.0000 & 1.0000 & 1.0000 \\
\addlinespace
30 & Profile & 0.9882 & 0.9812 & 0.7886 & & 1.0000 & 1.0000 & 0.9504 & & 1.0000 & 1.0000 & 0.9974 & & 1.0000 & 1.0000 & 1.0000 \\
 & Target & 0.9886 & 0.9254 & 0.9112 & & 1.0000 & 1.0000 & 0.9932 & & 1.0000 & 1.0000 & 1.0000 & & 1.0000 & 1.0000 & 1.0000 \\
\addlinespace
40 & Profile & 0.9980 & 0.9938 & 0.8850 & & 1.0000 & 1.0000 & 0.9814 & & 1.0000 & 1.0000 & 0.9998 & & 1.0000 & 1.0000 & 1.0000 \\
 & Target & 0.9986 & 0.9834 & 0.9490 & & 1.0000 & 1.0000 & 0.9974 & & 1.0000 & 1.0000 & 1.0000 & & 1.0000 & 1.0000 & 1.0000 \\
\addlinespace
50 & Profile & 0.9996 & 0.9986 & 0.9172 & & 1.0000 & 1.0000 & 0.9920 & & 1.0000 & 1.0000 & 1.0000 & & 1.0000 & 1.0000 & 1.0000 \\
 & Target & 0.9994 & 0.9926 & 0.9582 & & 1.0000 & 1.0000 & 0.9990 & & 1.0000 & 1.0000 & 1.0000 & & 1.0000 & 1.0000 & 1.0000 \\
\bottomrule
\end{tabular}
\end{table}
\end{landscape}

%% file: simulations/panel_im_test_dyn_hetero_review_table.tex
\begin{landscape}
\begin{table}[p]
\centering
\caption{Rejection Rates for Dynamic Probit Model under Unit-Scale Heteroskedasticity}
\label{tab:dynamicprobitheteroskedasticrejectionrates}
\small
\setlength{\tabcolsep}{4.5pt}
\begin{tabular}{llcccccccccccccccc}
\toprule
 & & \multicolumn{3}{c}{$n=125$} & & \multicolumn{3}{c}{$n=250$} & & \multicolumn{3}{c}{$n=500$} & & \multicolumn{3}{c}{$n=1000$} \\
\cmidrule(lr){3-5} \cmidrule(lr){7-9} \cmidrule(lr){11-13} \cmidrule(lr){15-17}
$T$ & Method & IM & Boot & Wald & & IM & Boot & Wald & & IM & Boot & Wald & & IM & Boot & Wald \\
\midrule
5 & Profile & 0.0596 & 0.0786 & 0.9676 & & 0.1086 & 0.1446 & 1.0000 & & 0.7318 & 0.1860 & 1.0000 & & 0.9992 & 0.2000 & 1.0000 \\
 & Target & 0.1508 & 0.0290 & 0.4666 & & 0.1240 & 0.0346 & 0.7312 & & 0.1718 & 0.0880 & 0.9454 & & 0.4498 & 0.3592 & 0.9994 \\
\addlinespace
10 & Profile & 0.2022 & 0.1990 & 0.9022 & & 0.5868 & 0.6224 & 0.9948 & & 0.9858 & 0.9270 & 1.0000 & & 1.0000 & 0.9918 & 1.0000 \\
 & Target & 0.2424 & 0.0606 & 0.7114 & & 0.4498 & 0.1962 & 0.9216 & & 0.9236 & 0.8022 & 0.9966 & & 0.9992 & 0.9990 & 1.0000 \\
\addlinespace
15 & Profile & 0.4644 & 0.3490 & 0.8862 & & 0.9162 & 0.9128 & 0.9918 & & 0.9996 & 0.9988 & 1.0000 & & 1.0000 & 1.0000 & 1.0000 \\
 & Target & 0.4514 & 0.1300 & 0.8276 & & 0.8762 & 0.6342 & 0.9710 & & 0.9984 & 0.9950 & 1.0000 & & 1.0000 & 1.0000 & 1.0000 \\
\addlinespace
20 & Profile & 0.7338 & 0.5370 & 0.9046 & & 0.9868 & 0.9826 & 0.9958 & & 1.0000 & 1.0000 & 1.0000 & & 1.0000 & 1.0000 & 1.0000 \\
 & Target & 0.7196 & 0.2930 & 0.8912 & & 0.9818 & 0.9074 & 0.9906 & & 0.9998 & 0.9994 & 1.0000 & & 1.0000 & 1.0000 & 1.0000 \\
\addlinespace
25 & Profile & 0.8750 & 0.6882 & 0.9262 & & 0.9964 & 0.9948 & 0.9960 & & 1.0000 & 1.0000 & 1.0000 & & 1.0000 & 1.0000 & 1.0000 \\
 & Target & 0.8724 & 0.4654 & 0.9250 & & 0.9958 & 0.9788 & 0.9952 & & 1.0000 & 1.0000 & 1.0000 & & 1.0000 & 1.0000 & 1.0000 \\
\addlinespace
30 & Profile & 0.9392 & 0.7870 & 0.9386 & & 0.9978 & 0.9966 & 0.9964 & & 1.0000 & 1.0000 & 1.0000 & & 1.0000 & 1.0000 & 1.0000 \\
 & Target & 0.9430 & 0.6216 & 0.9442 & & 0.9984 & 0.9898 & 0.9946 & & 1.0000 & 1.0000 & 1.0000 & & 1.0000 & 1.0000 & 1.0000 \\
\addlinespace
40 & Profile & 0.9818 & 0.9006 & 0.9610 & & 1.0000 & 1.0000 & 0.9982 & & 1.0000 & 1.0000 & 1.0000 & & 1.0000 & 1.0000 & 1.0000 \\
 & Target & 0.9844 & 0.8254 & 0.9670 & & 1.0000 & 0.9992 & 0.9996 & & 1.0000 & 1.0000 & 1.0000 & & 1.0000 & 1.0000 & 1.0000 \\
\addlinespace
50 & Profile & 0.9934 & 0.9472 & 0.9770 & & 1.0000 & 1.0000 & 0.9992 & & 1.0000 & 1.0000 & 1.0000 & & 1.0000 & 1.0000 & 1.0000 \\
 & Target & 0.9938 & 0.9136 & 0.9806 & & 1.0000 & 1.0000 & 0.9998 & & 1.0000 & 1.0000 & 1.0000 & & 1.0000 & 1.0000 & 1.0000 \\
\bottomrule
\end{tabular}
\end{table}
\end{landscape}

%% file: simulations/panel_im_test_logit_omitted_review_table.tex
\begin{landscape}
\begin{table}[p]
\centering
\caption{Rejection Rates for Static Logit Model under Omitted Interaction}
\label{tab:logitomittedrejectionrates}
\small
\setlength{\tabcolsep}{4.5pt}
\begin{tabular}{llcccccccccccccccc}
\toprule
 & & \multicolumn{3}{c}{$n=125$} & & \multicolumn{3}{c}{$n=250$} & & \multicolumn{3}{c}{$n=500$} & & \multicolumn{3}{c}{$n=1000$} \\
\cmidrule(lr){3-5} \cmidrule(lr){7-9} \cmidrule(lr){11-13} \cmidrule(lr){15-17}
$T$ & Method & IM & Boot & Wald & & IM & Boot & Wald & & IM & Boot & Wald & & IM & Boot & Wald \\
\midrule
5 & Profile & 0.1784 & 0.1030 & 0.1404 & & 0.7496 & 0.1074 & 0.1724 & & 0.9976 & 0.1294 & 0.2576 & & 1.0000 & 0.1570 & 0.3842 \\
 & Conditional & 0.0660 & 0.0420 & 0.4150 & & 0.0464 & 0.0448 & 0.6756 & & 0.0512 & 0.0670 & 0.9264 & & 0.1000 & 0.1532 & 0.9992 \\
 & Target & 0.0480 & 0.0252 & 0.3230 & & 0.0382 & 0.0302 & 0.5418 & & 0.1060 & 0.1176 & 0.8320 & & 0.3836 & 0.4484 & 0.9848 \\
\addlinespace
10 & Profile & 0.0940 & 0.1188 & 0.2858 & & 0.3462 & 0.2116 & 0.4656 & & 0.8218 & 0.3140 & 0.7368 & & 0.9984 & 0.5190 & 0.9484 \\
 & Conditional & 0.0902 & 0.0546 & 0.6708 & & 0.1100 & 0.0908 & 0.9232 & & 0.1936 & 0.2010 & 0.9968 & & 0.4950 & 0.5318 & 1.0000 \\
 & Target & 0.0774 & 0.0448 & 0.6052 & & 0.1188 & 0.0914 & 0.8710 & & 0.2542 & 0.2564 & 0.9920 & & 0.6622 & 0.6782 & 1.0000 \\
\addlinespace
15 & Profile & 0.1040 & 0.1414 & 0.5270 & & 0.3092 & 0.3042 & 0.8120 & & 0.7310 & 0.5346 & 0.9790 & & 0.9866 & 0.8090 & 1.0000 \\
 & Conditional & 0.1092 & 0.0672 & 0.8334 & & 0.1602 & 0.1284 & 0.9858 & & 0.3364 & 0.3312 & 1.0000 & & 0.7200 & 0.7304 & 1.0000 \\
 & Target & 0.1056 & 0.0618 & 0.7934 & & 0.1652 & 0.1316 & 0.9750 & & 0.3894 & 0.3776 & 1.0000 & & 0.7960 & 0.8000 & 1.0000 \\
\addlinespace
20 & Profile & 0.1228 & 0.1350 & 0.7506 & & 0.3150 & 0.3484 & 0.9592 & & 0.7328 & 0.6580 & 0.9988 & & 0.9874 & 0.9238 & 1.0000 \\
 & Conditional & 0.1210 & 0.0702 & 0.9284 & & 0.2138 & 0.1666 & 0.9982 & & 0.4488 & 0.4312 & 1.0000 & & 0.8352 & 0.8302 & 1.0000 \\
 & Target & 0.1188 & 0.0670 & 0.9094 & & 0.2238 & 0.1776 & 0.9970 & & 0.4910 & 0.4696 & 1.0000 & & 0.8804 & 0.8752 & 1.0000 \\
\addlinespace
25 & Profile & 0.1434 & 0.1410 & 0.8626 & & 0.3370 & 0.3766 & 0.9890 & & 0.7622 & 0.7448 & 0.9998 & & 0.9856 & 0.9612 & 1.0000 \\
 & Conditional & 0.1382 & 0.0808 & 0.9630 & & 0.2482 & 0.1958 & 1.0000 & & 0.5630 & 0.5394 & 1.0000 & & 0.9098 & 0.9106 & 1.0000 \\
 & Target & 0.1370 & 0.0742 & 0.9526 & & 0.2560 & 0.2020 & 0.9994 & & 0.5960 & 0.5660 & 1.0000 & & 0.9298 & 0.9278 & 1.0000 \\
\addlinespace
30 & Profile & 0.1674 & 0.1574 & 0.9330 & & 0.3808 & 0.4036 & 0.9972 & & 0.8034 & 0.7996 & 1.0000 & & 0.9896 & 0.9808 & 1.0000 \\
 & Conditional & 0.1662 & 0.0956 & 0.9826 & & 0.2996 & 0.2324 & 0.9998 & & 0.6428 & 0.6094 & 1.0000 & & 0.9462 & 0.9410 & 1.0000 \\
 & Target & 0.1620 & 0.0944 & 0.9778 & & 0.3064 & 0.2454 & 0.9998 & & 0.6678 & 0.6360 & 1.0000 & & 0.9590 & 0.9558 & 1.0000 \\
\addlinespace
40 & Profile & 0.1972 & 0.1672 & 0.9862 & & 0.4762 & 0.4840 & 1.0000 & & 0.8706 & 0.8788 & 1.0000 & & 0.9992 & 0.9986 & 1.0000 \\
 & Conditional & 0.1896 & 0.1112 & 0.9968 & & 0.4036 & 0.3218 & 1.0000 & & 0.7768 & 0.7510 & 1.0000 & & 0.9906 & 0.9900 & 1.0000 \\
 & Target & 0.1878 & 0.1114 & 0.9956 & & 0.4114 & 0.3392 & 1.0000 & & 0.7936 & 0.7672 & 1.0000 & & 0.9932 & 0.9914 & 1.0000 \\
\addlinespace
50 & Profile & 0.2474 & 0.1868 & 0.9978 & & 0.5676 & 0.5582 & 1.0000 & & 0.9186 & 0.9198 & 1.0000 & & 0.9994 & 0.9994 & 1.0000 \\
 & Conditional & 0.2362 & 0.1290 & 1.0000 & & 0.5008 & 0.4214 & 1.0000 & & 0.8686 & 0.8442 & 1.0000 & & 0.9974 & 0.9974 & 1.0000 \\
 & Target & 0.2350 & 0.1342 & 0.9996 & & 0.5094 & 0.4238 & 1.0000 & & 0.8754 & 0.8564 & 1.0000 & & 0.9978 & 0.9976 & 1.0000 \\
\bottomrule
\end{tabular}
\end{table}
\end{landscape}

%% file: simulations/panel_im_test_probit_omitted_review_table.tex
\begin{landscape}
\begin{table}[p]
\centering
\caption{Rejection Rates for Static Probit Model under Omitted Interaction}
\label{tab:probitomittedrejectionrates}
\small
\setlength{\tabcolsep}{4.5pt}
\begin{tabular}{llcccccccccccccccc}
\toprule
 & & \multicolumn{3}{c}{$n=125$} & & \multicolumn{3}{c}{$n=250$} & & \multicolumn{3}{c}{$n=500$} & & \multicolumn{3}{c}{$n=1000$} \\
\cmidrule(lr){3-5} \cmidrule(lr){7-9} \cmidrule(lr){11-13} \cmidrule(lr){15-17}
$T$ & Method & IM & Boot & Wald & & IM & Boot & Wald & & IM & Boot & Wald & & IM & Boot & Wald \\
\midrule
5 & Profile & 0.1348 & 0.1800 & 0.5234 & & 0.7344 & 0.2632 & 0.7404 & & 0.9976 & 0.3658 & 0.9322 & & 1.0000 & 0.5240 & 0.9980 \\
 & Target & 0.0368 & 0.0070 & 0.9254 & & 0.0386 & 0.0098 & 0.9960 & & 0.2630 & 0.1786 & 1.0000 & & 0.8798 & 0.8482 & 1.0000 \\
\addlinespace
10 & Profile & 0.1474 & 0.2506 & 0.9788 & & 0.7192 & 0.5870 & 1.0000 & & 0.9952 & 0.8414 & 1.0000 & & 1.0000 & 0.9790 & 1.0000 \\
 & Target & 0.0594 & 0.0202 & 0.9994 & & 0.2476 & 0.1264 & 1.0000 & & 0.7966 & 0.7236 & 1.0000 & & 0.9984 & 0.9974 & 1.0000 \\
\addlinespace
15 & Profile & 0.2820 & 0.3648 & 0.9996 & & 0.8440 & 0.8496 & 1.0000 & & 0.9990 & 0.9912 & 1.0000 & & 1.0000 & 0.9998 & 1.0000 \\
 & Target & 0.1544 & 0.0552 & 1.0000 & & 0.5572 & 0.3966 & 1.0000 & & 0.9696 & 0.9496 & 1.0000 & & 1.0000 & 0.9998 & 1.0000 \\
\addlinespace
20 & Profile & 0.4164 & 0.4602 & 1.0000 & & 0.9258 & 0.9420 & 1.0000 & & 1.0000 & 0.9998 & 1.0000 & & 1.0000 & 1.0000 & 1.0000 \\
 & Target & 0.2784 & 0.1064 & 1.0000 & & 0.7798 & 0.6412 & 1.0000 & & 0.9974 & 0.9948 & 1.0000 & & 1.0000 & 1.0000 & 1.0000 \\
\addlinespace
25 & Profile & 0.5630 & 0.5424 & 1.0000 & & 0.9686 & 0.9726 & 1.0000 & & 0.9998 & 1.0000 & 1.0000 & & 1.0000 & 1.0000 & 1.0000 \\
 & Target & 0.4330 & 0.2046 & 1.0000 & & 0.9064 & 0.8206 & 1.0000 & & 0.9998 & 0.9994 & 1.0000 & & 1.0000 & 1.0000 & 1.0000 \\
\addlinespace
30 & Profile & 0.6850 & 0.6282 & 1.0000 & & 0.9884 & 0.9894 & 1.0000 & & 1.0000 & 1.0000 & 1.0000 & & 1.0000 & 1.0000 & 1.0000 \\
 & Target & 0.5682 & 0.3032 & 1.0000 & & 0.9628 & 0.9112 & 1.0000 & & 0.9998 & 0.9994 & 1.0000 & & 1.0000 & 1.0000 & 1.0000 \\
\addlinespace
40 & Profile & 0.8486 & 0.7658 & 1.0000 & & 0.9980 & 0.9980 & 1.0000 & & 1.0000 & 1.0000 & 1.0000 & & 1.0000 & 1.0000 & 1.0000 \\
 & Target & 0.7804 & 0.5238 & 1.0000 & & 0.9958 & 0.9856 & 1.0000 & & 1.0000 & 1.0000 & 1.0000 & & 1.0000 & 1.0000 & 1.0000 \\
\addlinespace
50 & Profile & 0.9328 & 0.8608 & 1.0000 & & 0.9998 & 0.9996 & 1.0000 & & 1.0000 & 1.0000 & 1.0000 & & 1.0000 & 1.0000 & 1.0000 \\
 & Target & 0.9026 & 0.7068 & 1.0000 & & 0.9998 & 0.9980 & 1.0000 & & 1.0000 & 1.0000 & 1.0000 & & 1.0000 & 1.0000 & 1.0000 \\
\bottomrule
\end{tabular}
\end{table}
\end{landscape}

%% file: simulations/panel_im_test_dyn_omitted_review_table.tex
\begin{landscape}
\begin{table}[p]
\centering
\caption{Rejection Rates for Dynamic Probit Model under Omitted Interaction}
\label{tab:dynamicprobitomittedrejectionrates}
\small
\setlength{\tabcolsep}{4.5pt}
\begin{tabular}{llcccccccccccccccc}
\toprule
 & & \multicolumn{3}{c}{$n=125$} & & \multicolumn{3}{c}{$n=250$} & & \multicolumn{3}{c}{$n=500$} & & \multicolumn{3}{c}{$n=1000$} \\
\cmidrule(lr){3-5} \cmidrule(lr){7-9} \cmidrule(lr){11-13} \cmidrule(lr){15-17}
$T$ & Method & IM & Boot & Wald & & IM & Boot & Wald & & IM & Boot & Wald & & IM & Boot & Wald \\
\midrule
5 & Profile & 0.0722 & 0.0940 & 0.9872 & & 0.1906 & 0.1972 & 1.0000 & & 0.8418 & 0.2674 & 1.0000 & & 1.0000 & 0.3222 & 1.0000 \\
 & Target & 0.0890 & 0.0184 & 0.9828 & & 0.0386 & 0.0116 & 1.0000 & & 0.0850 & 0.0316 & 1.0000 & & 0.4964 & 0.3846 & 1.0000 \\
\addlinespace
10 & Profile & 0.1210 & 0.1140 & 0.9932 & & 0.2970 & 0.2968 & 1.0000 & & 0.9000 & 0.6644 & 1.0000 & & 1.0000 & 0.9178 & 1.0000 \\
 & Target & 0.0958 & 0.0212 & 0.9998 & & 0.1186 & 0.0390 & 1.0000 & & 0.4690 & 0.3120 & 1.0000 & & 0.9716 & 0.9452 & 1.0000 \\
\addlinespace
15 & Profile & 0.1944 & 0.1446 & 0.9990 & & 0.5408 & 0.5262 & 1.0000 & & 0.9804 & 0.9470 & 1.0000 & & 1.0000 & 0.9996 & 1.0000 \\
 & Target & 0.1432 & 0.0372 & 1.0000 & & 0.3120 & 0.1308 & 1.0000 & & 0.8634 & 0.7424 & 1.0000 & & 1.0000 & 0.9984 & 1.0000 \\
\addlinespace
20 & Profile & 0.2866 & 0.1716 & 1.0000 & & 0.7554 & 0.7090 & 1.0000 & & 0.9962 & 0.9934 & 1.0000 & & 1.0000 & 1.0000 & 1.0000 \\
 & Target & 0.2194 & 0.0586 & 1.0000 & & 0.5634 & 0.3026 & 1.0000 & & 0.9802 & 0.9440 & 1.0000 & & 1.0000 & 1.0000 & 1.0000 \\
\addlinespace
25 & Profile & 0.3946 & 0.2298 & 1.0000 & & 0.8910 & 0.8400 & 1.0000 & & 0.9998 & 1.0000 & 1.0000 & & 1.0000 & 1.0000 & 1.0000 \\
 & Target & 0.3120 & 0.0888 & 1.0000 & & 0.7782 & 0.5348 & 1.0000 & & 0.9984 & 0.9940 & 1.0000 & & 1.0000 & 1.0000 & 1.0000 \\
\addlinespace
30 & Profile & 0.5180 & 0.2944 & 1.0000 & & 0.9522 & 0.9138 & 1.0000 & & 0.9998 & 0.9998 & 1.0000 & & 1.0000 & 1.0000 & 1.0000 \\
 & Target & 0.4356 & 0.1462 & 1.0000 & & 0.9064 & 0.7280 & 1.0000 & & 0.9998 & 0.9980 & 1.0000 & & 1.0000 & 1.0000 & 1.0000 \\
\addlinespace
40 & Profile & 0.7182 & 0.4276 & 1.0000 & & 0.9932 & 0.9830 & 1.0000 & & 1.0000 & 1.0000 & 1.0000 & & 1.0000 & 1.0000 & 1.0000 \\
 & Target & 0.6586 & 0.2760 & 1.0000 & & 0.9884 & 0.9350 & 1.0000 & & 1.0000 & 1.0000 & 1.0000 & & 1.0000 & 1.0000 & 1.0000 \\
\addlinespace
50 & Profile & 0.8522 & 0.5836 & 1.0000 & & 0.9990 & 0.9944 & 1.0000 & & 1.0000 & 1.0000 & 1.0000 & & 1.0000 & 1.0000 & 1.0000 \\
 & Target & 0.8306 & 0.4358 & 1.0000 & & 0.9976 & 0.9846 & 1.0000 & & 1.0000 & 1.0000 & 1.0000 & & 1.0000 & 1.0000 & 1.0000 \\
\bottomrule
\end{tabular}
\end{table}
\end{landscape}

%% file: simulations/panel_im_test_logit_link_review_table.tex
\begin{landscape}
\begin{table}[p]
\centering
\caption{Rejection Rates for Static Logit Model under Wrong Link Function}
\label{tab:logitlinkrejectionrates}
\small
\setlength{\tabcolsep}{4.5pt}
\begin{tabular}{llcccccccccccccccc}
\toprule
 & & \multicolumn{3}{c}{$n=125$} & & \multicolumn{3}{c}{$n=250$} & & \multicolumn{3}{c}{$n=500$} & & \multicolumn{3}{c}{$n=1000$} \\
\cmidrule(lr){3-5} \cmidrule(lr){7-9} \cmidrule(lr){11-13} \cmidrule(lr){15-17}
$T$ & Method & IM & Boot & Wald & & IM & Boot & Wald & & IM & Boot & Wald & & IM & Boot & Wald \\
\midrule
8 & Profile & 0.0280 & 0.0590 & 1.0000 & & 0.1454 & 0.0534 & 1.0000 & & 0.6774 & 0.0377 & 1.0000 & & 0.9944 & 0.0246 & 1.0000 \\
 & Conditional & 0.1337 & 0.0738 & 1.0000 & & 0.0986 & 0.0818 & 1.0000 & & 0.0715 & 0.0891 & 1.0000 & & 0.0837 & 0.1262 & 1.0000 \\
 & Target & 0.1529 & 0.0824 & 1.0000 & & 0.1168 & 0.0868 & 1.0000 & & 0.0985 & 0.1015 & 1.0000 & & 0.1191 & 0.1578 & 1.0000 \\
\addlinespace
10 & Profile & 0.0276 & 0.0626 & 1.0000 & & 0.0814 & 0.0450 & 1.0000 & & 0.4010 & 0.0312 & 1.0000 & & 0.9086 & 0.0232 & 1.0000 \\
 & Conditional & 0.1298 & 0.0774 & 1.0000 & & 0.1032 & 0.0820 & 1.0000 & & 0.0866 & 0.1016 & 1.0000 & & 0.0820 & 0.1228 & 1.0000 \\
 & Target & 0.1480 & 0.0812 & 1.0000 & & 0.1206 & 0.0976 & 1.0000 & & 0.1082 & 0.1140 & 1.0000 & & 0.1144 & 0.1418 & 1.0000 \\
\addlinespace
15 & Profile & 0.0396 & 0.0668 & 1.0000 & & 0.0450 & 0.0480 & 1.0000 & & 0.1312 & 0.0318 & 1.0000 & & 0.4252 & 0.0176 & 1.0000 \\
 & Conditional & 0.1408 & 0.0774 & 1.0000 & & 0.1122 & 0.0940 & 1.0000 & & 0.0958 & 0.1044 & 1.0000 & & 0.1122 & 0.1406 & 1.0000 \\
 & Target & 0.1466 & 0.0840 & 1.0000 & & 0.1204 & 0.0958 & 1.0000 & & 0.1088 & 0.1124 & 1.0000 & & 0.1370 & 0.1624 & 1.0000 \\
\addlinespace
20 & Profile & 0.0520 & 0.0712 & 1.0000 & & 0.0394 & 0.0500 & 1.0000 & & 0.0630 & 0.0348 & 1.0000 & & 0.1820 & 0.0198 & 1.0000 \\
 & Conditional & 0.1324 & 0.0772 & 1.0000 & & 0.1148 & 0.0908 & 1.0000 & & 0.1020 & 0.1026 & 1.0000 & & 0.1376 & 0.1526 & 1.0000 \\
 & Target & 0.1348 & 0.0784 & 1.0000 & & 0.1200 & 0.0966 & 1.0000 & & 0.1090 & 0.1088 & 1.0000 & & 0.1396 & 0.1692 & 1.0000 \\
\addlinespace
25 & Profile & 0.0622 & 0.0722 & 1.0000 & & 0.0454 & 0.0614 & 1.0000 & & 0.0502 & 0.0420 & 1.0000 & & 0.1150 & 0.0288 & 1.0000 \\
 & Conditional & 0.1378 & 0.0830 & 1.0000 & & 0.1228 & 0.0960 & 1.0000 & & 0.1158 & 0.1116 & 1.0000 & & 0.1404 & 0.1604 & 1.0000 \\
 & Target & 0.1388 & 0.0822 & 1.0000 & & 0.1258 & 0.1000 & 1.0000 & & 0.1194 & 0.1170 & 1.0000 & & 0.1564 & 0.1720 & 1.0000 \\
\addlinespace
30 & Profile & 0.0752 & 0.0772 & 1.0000 & & 0.0452 & 0.0640 & 1.0000 & & 0.0482 & 0.0490 & 1.0000 & & 0.0844 & 0.0370 & 1.0000 \\
 & Conditional & 0.1404 & 0.0846 & 1.0000 & & 0.1176 & 0.0928 & 1.0000 & & 0.1136 & 0.1136 & 1.0000 & & 0.1444 & 0.1546 & 1.0000 \\
 & Target & 0.1420 & 0.0872 & 1.0000 & & 0.1180 & 0.0938 & 1.0000 & & 0.1214 & 0.1138 & 1.0000 & & 0.1542 & 0.1622 & 1.0000 \\
\addlinespace
40 & Profile & 0.0740 & 0.0678 & 1.0000 & & 0.0556 & 0.0720 & 1.0000 & & 0.0538 & 0.0636 & 1.0000 & & 0.0700 & 0.0508 & 1.0000 \\
 & Conditional & 0.1278 & 0.0726 & 1.0000 & & 0.1210 & 0.0926 & 1.0000 & & 0.1150 & 0.1136 & 1.0000 & & 0.1404 & 0.1508 & 1.0000 \\
 & Target & 0.1322 & 0.0752 & 1.0000 & & 0.1222 & 0.0942 & 1.0000 & & 0.1198 & 0.1172 & 1.0000 & & 0.1462 & 0.1530 & 1.0000 \\
\addlinespace
50 & Profile & 0.0986 & 0.0800 & 1.0000 & & 0.0674 & 0.0808 & 1.0000 & & 0.0624 & 0.0792 & 1.0000 & & 0.0628 & 0.0592 & 1.0000 \\
 & Conditional & 0.1424 & 0.0868 & 1.0000 & & 0.1250 & 0.0946 & 1.0000 & & 0.1256 & 0.1166 & 1.0000 & & 0.1350 & 0.1410 & 1.0000 \\
 & Target & 0.1452 & 0.0880 & 1.0000 & & 0.1252 & 0.0972 & 1.0000 & & 0.1250 & 0.1222 & 1.0000 & & 0.1374 & 0.1514 & 1.0000 \\
\bottomrule
\end{tabular}
\end{table}
\end{landscape}

%% file: simulations/panel_im_test_neyman_scott_review_table.tex
\begin{landscape}
\begin{table}[p]
\centering
\caption{Rejection Rates for the Neyman--Scott Model}
\label{tab:ns_rejection_rates}
\small
\setlength{\tabcolsep}{4.5pt}
\begin{tabular}{llcccccccccccccccc}
\toprule
 & & \multicolumn{3}{c}{$n=125$} & & \multicolumn{3}{c}{$n=250$} & & \multicolumn{3}{c}{$n=500$} & & \multicolumn{3}{c}{$n=1000$} \\
\cmidrule(lr){3-5} \cmidrule(lr){7-9} \cmidrule(lr){11-13} \cmidrule(lr){15-17}
$T$ & Method & IM & Boot & Wald & & IM & Boot & Wald & & IM & Boot & Wald & & IM & Boot & Wald \\
\midrule
5 & Profile & 0.1662 & 0.0594 & 0.9748 & & 0.4250 & 0.0576 & 0.9996 & & 0.7942 & 0.0618 & 1.0000 & & 0.9834 & 0.0576 & 1.0000 \\
 & Target & 0.0772 & 0.0558 & 0.0510 & & 0.0654 & 0.0580 & 0.0480 & & 0.0632 & 0.0590 & 0.0476 & & 0.0576 & 0.0544 & 0.0552 \\
\addlinespace
10 & Profile & 0.0612 & 0.0592 & 0.7646 & & 0.1132 & 0.0616 & 0.9600 & & 0.2556 & 0.0566 & 0.9994 & & 0.5098 & 0.0648 & 1.0000 \\
 & Target & 0.0714 & 0.0580 & 0.0538 & & 0.0662 & 0.0592 & 0.0476 & & 0.0572 & 0.0604 & 0.0510 & & 0.0540 & 0.0590 & 0.0548 \\
\addlinespace
15 & Profile & 0.0522 & 0.0622 & 0.5804 & & 0.0660 & 0.0582 & 0.8548 & & 0.1280 & 0.0588 & 0.9846 & & 0.2612 & 0.0624 & 1.0000 \\
 & Target & 0.0748 & 0.0616 & 0.0500 & & 0.0592 & 0.0586 & 0.0572 & & 0.0562 & 0.0578 & 0.0572 & & 0.0532 & 0.0582 & 0.0518 \\
\addlinespace
20 & Profile & 0.0544 & 0.0604 & 0.4768 & & 0.0574 & 0.0562 & 0.7408 & & 0.0882 & 0.0580 & 0.9524 & & 0.1660 & 0.0654 & 0.9994 \\
 & Target & 0.0804 & 0.0630 & 0.0518 & & 0.0598 & 0.0574 & 0.0534 & & 0.0574 & 0.0600 & 0.0524 & & 0.0596 & 0.0626 & 0.0494 \\
\addlinespace
25 & Profile & 0.0566 & 0.0620 & 0.3818 & & 0.0548 & 0.0616 & 0.6354 & & 0.0676 & 0.0590 & 0.8992 & & 0.1126 & 0.0580 & 0.9956 \\
 & Target & 0.0730 & 0.0610 & 0.0508 & & 0.0616 & 0.0584 & 0.0486 & & 0.0516 & 0.0522 & 0.0506 & & 0.0564 & 0.0612 & 0.0494 \\
\addlinespace
30 & Profile & 0.0572 & 0.0604 & 0.3316 & & 0.0518 & 0.0598 & 0.5564 & & 0.0586 & 0.0610 & 0.8340 & & 0.1010 & 0.0678 & 0.9854 \\
 & Target & 0.0736 & 0.0606 & 0.0518 & & 0.0660 & 0.0600 & 0.0514 & & 0.0500 & 0.0556 & 0.0524 & & 0.0576 & 0.0640 & 0.0524 \\
\addlinespace
40 & Profile & 0.0560 & 0.0600 & 0.2662 & & 0.0524 & 0.0594 & 0.4416 & & 0.0526 & 0.0596 & 0.7338 & & 0.0624 & 0.0522 & 0.9454 \\
 & Target & 0.0708 & 0.0644 & 0.0536 & & 0.0632 & 0.0642 & 0.0506 & & 0.0582 & 0.0586 & 0.0560 & & 0.0472 & 0.0540 & 0.0486 \\
\addlinespace
50 & Profile & 0.0544 & 0.0546 & 0.2216 & & 0.0556 & 0.0636 & 0.3712 & & 0.0532 & 0.0600 & 0.6124 & & 0.0612 & 0.0588 & 0.8924 \\
 & Target & 0.0646 & 0.0564 & 0.0522 & & 0.0614 & 0.0596 & 0.0516 & & 0.0554 & 0.0624 & 0.0540 & & 0.0492 & 0.0562 & 0.0534 \\
\bottomrule
\end{tabular}
\end{table}
\end{landscape}

%% file: applied_example/FV_lfp_pim_results.tex
\begin{table}[H]
\centering
\caption{Profile PIM for the dynamic FE probit LFP specification of \citet{fernandez-val-probit}.}
\label{tab:fv-lfp-pim}
\begin{tabular}{lcc}
\toprule
 & Asymptotic & Bootstrap ($B=99$) \\
\midrule
PIM statistic $\mathcal{W}^{p}$ & \multicolumn{2}{c}{333.596} \\
Degrees of freedom $q$ & \multicolumn{2}{c}{120} \\
5\% critical value & 146.567 & 317.667 \\
Reject $H_0$ at 5\% & Yes & Yes \\
Bootstrap $p$-value & --- & 0.010 \\
\bottomrule
\end{tabular}
\begin{flushleft}\footnotesize
Notes: Balanced panel of $n=664$ women with time-varying LFP ($T=9$, years 1980--1988) from the PSID extract of \citet{fernandez-val-probit}. The fitted model is the dynamic FE probit in \eqref{eq:fv-lfp-spec}. The null is correct specification of that conditional likelihood. Asymptotic critical values are from $\chi^2_q$; bootstrap critical values are the 95th percentile of the parametric profile bootstrap with recursive lagged participation and redraws of uninformative paths.
\end{flushleft}
\end{table}

%% file: applied_example/FV_lfp_diagnostics.tex
\begin{table}[H]
\centering
\caption{Directional diagnostics after omnibus PIM rejection.}
\label{tab:fv-lfp-diagnostics}
\begin{tabular}{lcccc}
\toprule
Diagnostic & Statistic & df & $p$ (asym.) & $p$ (boot.) \\
\midrule
\multicolumn{5}{l}{\emph{A. Blockwise profile-IM contributions}} \\
lag & 13.70 & 15 & 0.549 & --- \\
fertility & 102.10 & 42 & 0.000 & --- \\
income & 25.51 & 15 & 0.043 & --- \\
age & 112.13 & 29 & 0.000 & --- \\
time & 314.66 & 92 & 0.000 & --- \\
\midrule
\multicolumn{5}{l}{\emph{B. Targeted alternatives}} \\
Heteroskedasticity (LM) & 13.26 & 4 & 0.0101 & 0.010 \\
kids02 $\times$ lag (LR) & 7.68 & 1 & 0.00557 & 0.010 \\
AR(2) lag2 (LR) & 3.18 & 1 & 0.0744 & 1.000 \\
\bottomrule
\end{tabular}
\begin{flushleft}\footnotesize
Notes: Blockwise statistics are profile-IM Wald statistics using the IEQ moments that involve the listed parameter block (possibly overlapping across blocks). Heteroskedasticity is an LM score test of $\sigma_{it}=\exp(W_{it}'\gamma)$ at $\gamma=0$ with $W$ containing fertility counts and log husband's earnings. The kids$\times$lag and AR(2) rows are likelihood-ratio tests against the nested FE probit (AR(2) on the $T=8$ subsample 1981--1988). Bootstrap $p$-values for targeted alternatives use parametric draws under the baseline null ($B=99$).
\end{flushleft}
\end{table}

%% file: applied_example/FV_lfp_repair_pim.tex
\begin{table}[H]
\centering
\caption{Profile PIM before and after the selected repair (kids0--2 $\times$ lag).}
\label{tab:fv-lfp-repair}
\begin{tabular}{lcc}
\toprule
 & Baseline & Repaired \\
\midrule
PIM statistic $\mathcal{W}^{p}$ & 333.596 & 341.692 \\
Degrees of freedom $q$ & 120 & 136 \\
Asymptotic 5\% critical value & 146.567 & 164.216 \\
Bootstrap 5\% critical value ($B=99$) & 317.667 & 319.435 \\
Reject at 5\% (asymptotic) & Yes & Yes \\
Reject at 5\% (bootstrap) & Yes & Yes \\
Bootstrap $p$-value & 0.010 & 0.000 \\
\bottomrule
\end{tabular}
\begin{flushleft}\footnotesize
Notes: Baseline is the dynamic FE probit \eqref{eq:fv-lfp-spec}. The repaired model is the alternative with the smallest diagnostic $p$-value in Table~\ref{tab:fv-lfp-diagnostics} (or the feasible fertility$\times$lag enrichment when the heteroskedasticity LM wins). Bootstrap critical values use the parametric profile bootstrap of Section~\ref{sec:bootstrap}.
\end{flushleft}
\end{table}

%% file: applied_example/union_pim_results.tex
\begin{table}[H]
\centering
\caption{Omnibus profile PIM for the baseline static FE probit union specification.}
\label{tab:union-pim}
\begin{tabular}{lcc}
\toprule
 & Asymptotic & Bootstrap ($B=99$) \\
\midrule
PIM statistic $\mathcal{W}^{p}$ & \multicolumn{2}{c}{194.327} \\
Degrees of freedom $q$ & \multicolumn{2}{c}{36} \\
5\% critical value & 50.998 & 191.836 \\
Reject $H_0$ at 5\% & Yes & Yes \\
Bootstrap $p$-value & --- & 0.010 \\
\bottomrule
\end{tabular}
\begin{flushleft}\footnotesize
Notes: Balanced panel of $n=246$ men with time-varying union status ($T=8$, 1980--1987) from the \citet{vella-verbeek-1998} NLSY extract. The baseline specification is a static FE probit with married, and year dummies. Bootstrap critical values are from the parametric profile bootstrap ($B=99$).
\end{flushleft}
\end{table}

%% file: applied_example/union_diagnostics.tex
\begin{table}[H]
\centering
\caption{Directional diagnostics after omnibus PIM rejection (union membership).}
\label{tab:union-diagnostics}
\begin{tabular}{lcccc}
\toprule
Diagnostic & Statistic & $q$ & $p$ (asym.) & $p$ (boot.) \\
\midrule
\multicolumn{5}{l}{\emph{A. Blockwise profile-IM contributions}} \\
marriage & 29.49 & 8 & 0.000 & 0.000 \\
time & 194.32 & 35 & 0.000 & 0.010 \\
\midrule
\multicolumn{5}{l}{\emph{B. Targeted alternatives}} \\
Hours worked (LR) & 4.10 & 1 & 0.0428 & 0.081 \\
Manufacturing (LR) & 18.74 & 1 & 1.5e-05 & 0.000 \\
Construction (LR) & 2.49 & 1 & 0.114 & 0.081 \\
Wholesale/retail trade (LR) & 15.97 & 1 & 6.42e-05 & 0.000 \\
Industry bundle (LR) & 26.56 & 3 & 7.3e-06 & 0.000 \\
Hours + industry bundle (LR) & 31.42 & 4 & 2.52e-06 & 0.000 \\
Heteroskedasticity (LM) & 3.11 & 1 & 0.078 & 0.030 \\
\bottomrule
\end{tabular}
\begin{flushleft}\footnotesize
Notes: Blockwise statistics are profile-IM Wald statistics using IEQ moments involving the listed parameter block. Bootstrap $p$-values for blockwise tests use the same $B=99$ omnibus draws. Targeted alternatives are LR or LM tests against the baseline; bootstrap $p$-values are from parametric draws under the baseline null ($B=99$).
\end{flushleft}
\end{table}

%% file: applied_example/union_repair_pim.tex
\begin{table}[H]
\centering
\caption{Profile PIM before and after repair (union membership).}
\label{tab:union-repair}
\begin{tabular}{lcc}
\toprule
 & Baseline & Repaired \\
\midrule
PIM statistic $\mathcal{W}^{p}$ & 194.327 & 21.589 \\
Degrees of freedom $q$ & 36 & 66 \\
Asymptotic 5\% critical value & 50.998 & 85.965 \\
Bootstrap 5\% critical value ($B=99$) & 191.836 & 163.965 \\
Reject at 5\% (asymptotic) & Yes & No \\
Reject at 5\% (bootstrap) & Yes & No \\
Bootstrap $p$-value & 0.010 & 0.919 \\
\bottomrule
\end{tabular}
\begin{flushleft}\footnotesize
Notes: Baseline is the static FE probit with married and year dummies. The repaired model adds hours worked, manufacturing, and construction indicators. Bootstrap critical values use the parametric profile bootstrap ($B=99$).
\end{flushleft}
\end{table}

%% file: applied_example/enia_pim_results.tex
\begin{table}[H]
\centering
\caption{Omnibus profile PIM for the baseline dynamic FE probit export specification (exploration subsample).}
\label{tab:enia-pim}
\begin{tabular}{lcc}
\toprule
 & Asymptotic & Bootstrap ($B=99$) \\
\midrule
PIM statistic $\mathcal{W}^{p}$ & \multicolumn{2}{c}{102.408} \\
Degrees of freedom $q$ & \multicolumn{2}{c}{66} \\
Informative plants ($n$) & \multicolumn{2}{c}{197} \\
5\% critical value & 85.965 & 105.073 \\
Reject $H_0$ at 5\% & Yes & No \\
Bootstrap $p$-value & --- & 0.152 \\
\bottomrule
\end{tabular}
\begin{flushleft}\footnotesize
Notes: Chile ENIA panel of export-status movers with $T=10$ (1996--2005). Exploration subsample: $n=279$ plants (70\% random split, seed=20260819). Baseline: lagged export status, log employment, year effects. 82 plants dropped due to non-finite IM indicators.
\end{flushleft}
\end{table}

%% file: applied_example/enia_diagnostics.tex
\begin{table}[H]
\centering
\caption{Directional diagnostics after omnibus PIM rejection (ENIA export, exploration).}
\label{tab:enia-diagnostics}
\begin{tabular}{lcccc}
\toprule
Diagnostic & Statistic & $q$ & $p$ (asym.) & $p$ (boot.) \\
\midrule
\multicolumn{5}{l}{\emph{A. Blockwise profile-IM contributions}} \\
lag & 30.13 & 11 & 0.002 & 0.323 \\
size & 27.03 & 11 & 0.005 & 0.394 \\
time & 96.99 & 63 & 0.004 & 0.101 \\
\midrule
\multicolumn{5}{l}{\emph{B. Targeted alternatives}} \\
Foreign ownership (LR) & 0.33 & 1 & 0.564 & 0.636 \\
Santiago location (LR) & 0.88 & 1 & 0.349 & 0.788 \\
Log capital stock (LR) & 0.02 & 1 & 0.878 & 0.919 \\
Public ownership (LR) & 0.00 & 1 & 0.954 & 0.960 \\
Individual/partnership (LR) & 2.93 & 1 & 0.0871 & 0.111 \\
High skill share (LR) & 4.75 & 1 & 0.0293 & 0.040 \\
Foreign + high skill (LR) & 5.02 & 2 & 0.0813 & 0.182 \\
Foreign + log capital (LR) & 0.35 & 2 & 0.84 & 0.889 \\
Heteroskedasticity (LM) & 3.43 & 2 & 0.18 & 0.051 \\
\bottomrule
\end{tabular}
\begin{flushleft}\footnotesize
Notes: Blockwise and targeted bootstrap $p$-values use parametric draws under the baseline null ($B=99$) on the exploration subsample.
\end{flushleft}
\end{table}

%% file: applied_example/enia_repair_pim.tex
\begin{table}[H]
\centering
\caption{Profile PIM before and after specification repair (ENIA export, exploration).}
\label{tab:enia-repair}
\begin{tabular}{lcc}
\toprule
 & Baseline & Fixed (S5) \\
\midrule
PIM statistic $\mathcal{W}^{p}$ & 102.408 & 113.613 \\
Degrees of freedom $q$ & 66 & 91 \\
5\% critical value (asymptotic) & 85.965 & 114.268 \\
5\% critical value (bootstrap, $B=99$) & 105.073 & 125.377 \\
Reject at 5\% (asymptotic) & Yes & No \\
Reject at 5\% (bootstrap) & No & No \\
Bootstrap $p$-value & 0.152 & 0.455 \\
\bottomrule
\end{tabular}
\begin{flushleft}\footnotesize
Notes: Fixed specification adds foreign ownership and high-skilled employment share to the baseline dynamic export probit.
\end{flushleft}
\end{table}

%% file: applied_example/enia_holdout_pim.tex
\begin{table}[H]
\centering
\caption{Hold-out validation of the fixed export specification (S5).}
\label{tab:enia-holdout}
\begin{tabular}{lcc}
\toprule
 & Asymptotic & Bootstrap ($B=99$) \\
\midrule
PIM statistic $\mathcal{W}^{p}$ & \multicolumn{2}{c}{32.000} \\
Degrees of freedom $q$ & \multicolumn{2}{c}{91} \\
Hold-out plants ($n$) & \multicolumn{2}{c}{32} \\
5\% critical value & 114.268 & 32.000 \\
Reject $H_0$ at 5\% & No & No \\
Bootstrap $p$-value & --- & 0.657 \\
\bottomrule
\end{tabular}
\begin{flushleft}\footnotesize
Notes: Hold-out sample contains 120 plants (30\% of movers), disjoint from the exploration subsample. Same fixed specification as selected on the exploration sample.
\end{flushleft}
\end{table}